\theoremstyle{definition}
\newtheorem{definition}{Definition}[section]
\newtheorem{theorem}{Theorem}[section]
\newtheorem{lemma}[theorem]{Lemma}
\newtheorem{claim}[theorem]{Claim}
\newcommand*{\vertbar}{\rule[-1ex]{0.5pt}{2.5ex}}
\newcommand*{\bigvertbar}{\rule[-1ex]{0.5pt}{4.5ex}}
\newcommand*{\horzbar}{\rule[.5ex]{2.5ex}{0.5pt}}
\newcommand*{\bighorzbar}{\rule[.5ex]{4.5ex}{0.5pt}}
\begin{document}

\title{Quantum Universally Composable Oblivious Linear Evaluation}

\author{Manuel B. Santos}
\affiliation{Instituto de Telecomunica\c c\~oes, Av. Rovisco Pais 1, 1049-001 Lisboa, Portugal}
\affiliation{Departamento de Matem\' atica, Instituto Superior T\' ecnico,
	Universidade de Lisboa, Av. Rovisco Pais 1, 1049-001 Lisboa, Portugal}
\email{manuel.batalha.santos@gmail.com}
\author{Paulo Mateus}
\affiliation{Instituto de Telecomunica\c c\~oes, Av. Rovisco Pais 1, 1049-001 Lisboa, Portugal}
\affiliation{Departamento de Matem\' atica, Instituto Superior T\' ecnico,
	Universidade de Lisboa, Av. Rovisco Pais 1, 1049-001 Lisboa, Portugal}
\email{pmat@tecnico.ulisboa.pt}
\author{Chrysoula Vlachou}
\affiliation{Instituto de Telecomunica\c c\~oes, Av. Rovisco Pais 1, 1049-001 Lisboa, Portugal}
\affiliation{Departamento de Matem\' atica, Instituto Superior T\' ecnico,
	Universidade de Lisboa, Av. Rovisco Pais 1, 1049-001 Lisboa, Portugal}
\email{chrysoula.vlachou@tecnico.ulisboa.pt}
\maketitle

\begin{abstract}
 Oblivious linear evaluation is a generalization of oblivious transfer, whereby two  distrustful parties obliviously compute a linear function, $f(x) = ax + b$, i.e., each one provides their inputs that remain  unknown to the other, in order to compute the output $f(x)$ that only one of them receives. From both a structural and a security point of view, oblivious linear evaluation is fundamental for arithmetic-based secure multi-party computation protocols.  In the  classical case, oblivious linear evaluation protocols can be generated using oblivious transfer, and their quantum counterparts can, in principle, be constructed as straightforward extensions using quantum oblivious transfer. Here, we present the first, to the best of our knowledge, quantum protocol for oblivious linear evaluation that, furthermore, does not rely on quantum oblivious transfer. We start by presenting a semi-honest protocol and then  extend it to the dishonest setting employing a \textit{commit-and-open} strategy. Our protocol uses high-dimensional quantum states to obliviously compute   $f(x)$ on Galois Fields of prime and prime-power dimension. These constructions  utilize the existence of a complete set of mutually unbiased bases in prime-power dimension  Hilbert spaces and their linear behaviour upon the Heisenberg-Weyl operators.  We also generalize  our protocol to achieve vector oblivious linear evaluation, where several instances of  oblivious linear evaluation are generated, thus making the protocol more efficient. We prove the protocols to have static security in the framework of quantum universal composability.
\end{abstract}

\section{Introduction}\label{sec:Intro}

Oblivious Linear Evaluation (OLE) is a cryptographic task that permits two distrustful parties, say Alice and Bob, to jointly compute the output of a linear function $f(x)=ax+b$ in some finite field, $\mathbb{F}$. Alice provides inputs $a, b\in\mathbb{F}$ and Bob provides $x\in\mathbb{F}$, while the output, $f(x)$, becomes available only to Bob. As the parties are distrustful, a secure OLE protocol should not permit Alice to learn anything about Bob's input, while also Alice's inputs should remain unknown to Bob.  OLE can be seen as a generalization of Oblivious Transfer (OT) \cite{Rabin81}, a basic primitive for secure two-party computation, which, in turn, is a special case of secure multi-party computation \cite{Goldreichbook04,CCD88,Canetti00MPC}. OT has been shown to be complete  for secure multi-party computation \cite{Kilian}, i.e., any such task, including OLE, can be achieved given an OT implementation. 
A compelling reason to study OLE protocols is that they can serve as building blocks for the secure evaluation of arithmetic circuits \cite{AIK11,DKMQ12,GNN17,DGNBNT17}, just like OT allows the secure evaluation of boolean circuits \cite{GMW87}. Specifically, OLE can be used to generate multiplication triples which are the basic tool for securely computing multiplication gates \cite{DGNBNT17}.  Besides that, OLE has applications in more tasks for two-party secure computation  \cite{IPS09,ADINZ17,BCGI18,HIMV19,CDIKLOV19} and  Private Set Intersection \cite{GN19}.

Impagliazzo and Rudich  proved that OT protocols require public-key cryptography and cannot just rely on symmetric cryptography \cite{IR89}. Consequently, OLE cannot rely on symmetric cryptography either, and we need to resort to public-key cryptography.  However, Shor's  quantum algorithm \cite{Shor94}  poses a threat to the currently deployed public-key systems, motivating the search for protocols secure against quantum attacks. Bennet et al. \cite{BBCS92} and Cr{\'e}peau \cite{C94} proposed the first protocols for Quantum OT (QOT). The no-go theorems by Lo and Chau \cite{LC97,LC98} and, independently, by Mayer \cite{Mayer97}, implied that an unconditionally secure QOT without any additional assumption is impossible. 
A decade later, Damg\r{a}rd et al. proved the security of QOT in the stand-alone model under additional assumptions \cite{DFLSS09}, and Unruh \cite{Unruh10} showed that it is statistically secure and universally composable  with access only to ideal commitments.
As far as quantum OLE (QOLE) is concerned, to the best of our knowledge, no protocol has been proposed as of now.
Analogously to the classical case it is expected that one can implement QOLE based on QOT protocols. That said, in this work we propose a protocol for QOLE that, additionally, does not rely on any QOT implementation, and as such, it might provide efficiency advantages for various tasks that can be achieved directly with OLE, e.g. the evaluation of arithmetic circuits.

OLE is commonly generalized to Vector OLE (VOLE). In this setting, Alice defines a set of $k$ linear functions $(\bm{a}, \bm{b})\in\mathbb{F}^k\times\mathbb{F}^k$ and Bob receives the evaluation of all these functions on a specified element $x\in\mathbb{F}$, i.e. $\bm{f}:=\bm{a} x+ \bm{b}$. One can think of VOLE as the arithmetic analogue of string OT and show how it can be used  in certain Secure Arithmetic Computation and Non-Interactive Zero Knowledge proofs \cite{BCGI18}. Ghosh et. al  put further in evidence the usefulness of VOLE by showing that it serves as the building block of Oblivious Polynomial Evaluation \cite{GNN17}, a primitive which allows more sophisticated applications, such as password authentication, secure  list intersection,  anonymous complaint boxes \cite{NP06}, anonymous initialization for secure metering of client visits in servers \cite{NP99},  secure Taylor approximation of relevant functions (e.g. logarithm) \cite{LP02}, secure set intersection \cite{H18} and distributed generation of RSA keys \cite{G99}.  We also show how our QOLE protocol can be adapted to achieve secure VOLE.
\subsection{Contributions}\label{subsec:Intro_contributions}
We present a quantum protocol for OLE with quantum universally composable security (Definition \ref{def:statisticalquc}) in the $\mathcal{F}_{\textbf{COM}}-$hybrid model, i.e. when assuming the existence of a commitment functionality, $\mathcal{F}_{\textbf{COM}}$ (Figure \ref{fig:func_com}). To obtain a secure protocol, we take advantage of the properties of Mutually Unbiased Bases (MUBs) in high-dimensional Hilbert spaces with prime and prime-power dimension. Such a choice is motivated by recent theoretical and experimental advances that pave the way for the development and realization of new solutions for quantum cryptography \cite{CBKG02,AGS03,AKBH07,SS10,DEBZ10,Zhongetal2015,Sitetal17,Bouchardetal18,BHVBFHM18,DHMPPV21}. 
To the best of our knowledge, our protocol is the first proposal of a QOLE protocol which we prove to be quantum-UC secure. Moreover, it is not based on any QOT implementation, according to the standard approach. To prove its security, the only assumption we make is the existence of a classical commitment functionality.
We consider the static corruption adversarial model with both semi-honest and dishonest adversaries (Section \ref{subsec:Prelim_QUC}).  Finally, we modify the proposed protocol  to  generate quantum-UC secure VOLE.

\

\noindent\textbf{Main tool.} The proposed protocol $\pi_{\textbf{QOLE}}$ (Figure \ref{fig:fullprotocol}) is based on the fact that in a Hilbert space of dimension $d$ (isomorphic to $\mathbb{Z}_d$) there exists a set of MUBs $\{\ket{e^x_r}\}_{x, r\in\mathbb{Z}_d}$, such that, upon the action of a certain operator $V^b_a$,  each basis element $r$ is shifted by some linear factor $ax - b$ inside the same basis $x$:
\begin{align}
V^b_a \ket{e^x_r} = c_{a,b,x, r} \ket{e^x_{ax - b + r}},
\label{eq:main_equation_1.3.}
\end{align}
where $a, b, x, r \in \mathbb{Z}_d =\{0,1,\ldots,d-1\}$. If Alice controls the operator $V^b_a$ and Bob controls the quantum state $\ket{e^x_r}$, they are able to compute a linear function $f(x) = ax - b$ where effectively Alice controls the function $f = (a, b)$ and Bob controls its input $x$. Moreover, since Bob controls $x$ and $r$, he can receive $f(x)$ by measuring the output element. 

\

\noindent\textbf{Protocol overview.} In a nutshell, $\pi_{\textbf{QOLE}}$ (Figure \ref{fig:fullprotocol}) with inputs $f = (a,b)$ from Alice and $x$ from Bob is divided into two main phases. In the first \textit{quantum phase}, Alice and Bob use high-dimensional quantum states to generate $n$ random weak OLE (RWOLE) instances, where $n$ is the security parameter.  In this phase, Alice  outputs  $n$ random elements $f^0_i = (a^0_i, b^0_i)$, and Bob  outputs $n$ elements $(x^0_i, y^0 = f^0_i(x^0_i))$. These instances are considered to be weaker because Bob is allowed to have some amount of information about the $n$ outputs of Alice $(a^0_i, b^0_i)$. In the second \textit{post-processing phase}, Alice and Bob use classical tools to extract one secure OLE from the aforementioned $n$ instances.

More specifically, in the quantum phase, Bob randomly generates $m=(1 + t)n$ quantum states $\ket{e^{x^0_i}_{r_i}}$ and sends them to Alice. Then, Bob commits to his choice $(x^0_i, r_i)$, $\forall i\in [m]$, where for any $l\in\mathbb{N}$, $[l]$ denotes the set $\{1, \ldots, l\}$, using an ideal commitment functionality, $\mathcal{F}_{\textbf{COM}}$, and Alice asks to verify  a subset $T$ of size $tn$ of these commitments. This intermediate \textit{commit-and-open} step allows Alice to test Bob's behaviour and ensure that he does not deviate \textit{too much} from the protocol, and it is a common method used in security proofs of QOT protocols \cite{Unruh10, DFLSS09}. If Bob passes all the tests, Alice randomly generates $(a^0_i, b^0_i)$ and applies $V^{b^0_i}_{a^0_i}$ to the remaining $n$ received states $\ket{e^{x^0_i}_{r_i}}$,  for $i\in [m]\setminus T$.  For the rest of this section we relabel and denote $[n]=[m]\setminus T$. According to the expression~\eqref{eq:main_equation_1.3.}, the output states are given by $\ket{e^{x^0_i}_{a^0_i x^0_i - b^0_i + r_i}}$ and she sends them to Bob, who outputs $y^0_i = a^0_i x^0_i - b^0_i$ by measuring the received states in the corresponding basis $ x^0_i$ and subtracting $r_i$,  $\forall i\in [n]$.

The post-processing phase uses two subprotocols: a derandomization step (Figure \ref{fig:nOLE}) and an extraction step (Figure \ref{fig:privacy_amplification}). The derandomization step is based on the protocol $\pi^n_{\text{OLE}}$ from \cite{DHNO19} and transforms the $n$ RWOLE instances into $n$ weak OLE (WOLE) instances with inputs $(a_i, b_i)_{i\in [n]}$ chosen by Alice and inputs $x_i$ for $i\in [n]$ chosen by Bob. The extraction protocol uses the so-called \textit{Multi-linear Modular Hashing} family, $\mathfrak{G}$, of two-universal hash functions \cite{HK97} to render Bob's information on Alice's system useless and to extract one secure OLE out of $n$ instances of WOLE. In the extraction phase, Alice samples a two-universal hash function $g_{\bm{\kappa}}$ from $\mathfrak{G}$ and sends it to Bob. Then, with adequately-crafted vectors $(\bm{a}, \bm{b}) = \big( (a_1, \ldots, a_n), (b_1, \ldots, b_n) \big)$, Alice has $a = g_{\bm{\kappa}}(\bm{a})$ and $b = g_{\bm{\kappa}}(\bm{b})$, and Bob outputs $y = g_{\bm{\kappa}}(\bm{y})$, where $\bm{y} = \bm{a} \bm{x} + \bm{b}$ after point-wise vector multiplication with the constant vector $\bm{x} = (x, \ldots, x)$.

\

\noindent\textbf{Quantum-UC security.} 
Due to the quantum nature of the states $\ket{e^{x^0_i}_{r_i}}_{i\in [n]}$, a dishonest Alice is not able to distinguish which bases $x^0_i, i\in [n]$ are used by Bob. From her point of view, Bob's states are maximally mixed and therefore completely hide $x^0_i$. This is enough to ensure that, in the derandomization step, Alice does not receive any information about Bob's final input $x$. For a dishonest Bob, to correctly pass all Alice's tests, it means he did not cheat at all rounds with overwhelming probability. This ensures that he  has some \textit{bounded} information on Alice's random elements $(a^0_i, b^0_i)_{i\in [n]}$, and using privacy amplification techniques in the extraction step, Alice can guarantee that Bob's information about her final input $(a,b)$ is the same as in the case of an ideal OLE functionality, i.e. the probability distribution of $a$ is close to uniform.

Turning this intuition into a quantum-UC security proof requires some additional insights. First, we need a way to quantify Bob's information on Alice's elements $(a^0_i, b^0_i)$ after the testing phase and the application of the corresponding $V^{b^0_i}_{a^0_i}$  operators, for $ i\in [n]$; for this purpose we use the \textit{min-entropy} (Definition \ref{def:Hmin}). We follow the approach of \cite{DFLSS09} to guarantee that Bob does not significantly deviate from the protocol in all the rounds, and we use Theorem 1 from~\cite{Dupuis2015} to compute a concrete lower bound of Bob's min-entropy on Alice elements $(a^0_i, b^0_i)_{i\in [n]}$. Along with Lemma \ref{lem:leftover}, we have that $a = g_{\bm{\kappa}}(\bm{a})$ is close to uniform, which is sufficient to prove that Bob does not know more about $(a,b)$ than what the output $y = ax + b$ reveals. 

In order to show that  $\pi_{\textbf{QOLE}}$ is quantum-UC secure, we need to show that an ideal execution of $\pi_{\textbf{QOLE}}$ with access to $\mathcal{F}_{\textbf{OLE}}$ (Figure \ref{fig:func_ole}) is indistinguishable from a real execution of the protocol from the point of view of an external entity called the \textit{environment}. To prove this indistinguishability, we have to build a simulator that simulates the execution of the protocol in the ideal setting and generates messages on behalf of the honest simulated parties,  while trying to extract the dishonest party's inputs and feed them in $\mathcal{F}_{\textbf{OLE}}$. In particular, for a dishonest Alice, we have to demonstrate the existence of a simulator, $\mathcal{S}_A$, that generates messages on behalf of honest Bob and extracts Alice's input $(a,b)$ which, in turn, feeds into $\mathcal{F}_{\textbf{OLE}}$.  To this end, we consider that $\mathcal{S}_A$ simulates an attack by Bob at all rounds, $i$, of the protocol  which allows to extract the $m$ values of Alice  $(a^0_i,b^0_i)$. However, the commit-and-open scheme described above is designed to catch such an attack, and to work around this issue we substitute the ideal commitment functionality, $\mathcal{F}_{\textbf{COM}}$, with a fake commitment functionality, $\mathcal{F}_{\textbf{FakeCOM}}$, that allows $\mathcal{S}_A$ to open the commitments later \cite{Unruh10}.  From the remaining $n$ values $(a^0_i,b^0_i)$, $\mathcal{S}_A$ computes Alice's input $(a,b)$ and feeds it to $\mathcal{F}_{\textbf{OLE}}$.

For a dishonest Bob, we have to show the existence of a simulator, $\mathcal{S}_B$, that generates messages on behalf of  honest Alice and extracts Bob's input $x$. We assume that $\mathcal{S}_B$ has full control over  $\mathcal{F}_{\textbf{COM}}$, which means that it has access to Bob's $m$ committed values $(x^0_i, r_i)$;  the input $x$ can be easily extracted from these values. 

\

\noindent\textbf{Protocol generalizations.}  We start by generalizing   the main relation (\ref{eq:main_ingredient}) to Galois Fields of prime-power dimension, $GF(d^M) \text{ for }M>1$. Then, we show how we can obtain a protocol for quantum VOLE. In particular, from $n$ WOLE instances, we are able to generate a VOLE with size proportional to $n$, and we  bound this proportion by the min-entropy value on the WOLE instances.

\subsection{Organization}\label{subsec:Intro_organization}
In Section \ref{sec:Prelim}, we introduce notation and state results and definitions, that we will use in the rest of the paper. In Section \ref{sec:insecureQOLE}, in order to build some intuition, we present a QOLE protocol that is secure only if we consider Bob to be semi-honest; in case Bob is dishonest, its security is compromised. In Section \ref{sec:secureQROLE_overview}, we construct our secure QOLE protocol, $\pi_{\textbf{QOLE}}$, and in Section \ref{sec:secureQROLE_protocol}, we prove its security in the quantum-UC framework.   Finally, in the last Section \ref{sec:protgeneral}, we show how to generalize the presented QOLE protocol to  Galois Fields of prime-power dimensions and we also present a quantum-UC secure protocol achieving VOLE. In the last Section \ref{sec:outlook}, we give an outlook of our results and point out directions of future work. 

\section{Preliminaries}\label{sec:Prelim}
The elements  of a set, $\mathcal{X}$, are denoted by lower case letters, $x$, while capital letters, $X$, are used for random variables  with probability distribution, $P_X$, over the set $\mathcal{X}$.  For a prime number, $d$, the set $\mathbb{Z}_d= \{0, ..., d-1\}$ denotes the finite field with dimension $d$.     Vectors $\bm{v}\in \mathbb{Z}^n_d, \text{ for any } n\in\mathbb{N}$ are denoted in lower case bold letters, while capital bold letters are used for vectors of random variables. Also, we denote the uniform distribution over $\mathcal{X}$ as $\tau_{\mathcal{X}} = \frac{1}{|\mathcal{X}|}\sum_{x\in\mathcal{X}}\ketbra{x}$, where $|\mathcal{X}|$ is the size of $\mathcal{X}$. We write as~ $x\leftarrow_\$ \mathcal{X}$ the operation of assigning to $x$ an element uniformly chosen from a set $\mathcal{X}$. Finally, $\mathds{1}$ is the identity operator.

We consider Hilbert spaces, $\mathcal{H}$, of  dimension $d\geq 2$, where $d$ is a prime number.  Each Hilbert space has an orthonormal basis $\{\ket{x} : x \in \mathbb{Z}_d\}$, called the computational basis. Given different quantum systems $\mathcal{H}_1, \ldots, \mathcal{H}_n$, we use the tensor product notation to describe the joint system as $\mathcal{H}_1 \otimes \ldots \otimes \mathcal{H}_n$. The vectors in this joint system are denoted by $\ket{\bm{x}} = \ket{x_1} \otimes \ldots \otimes \ket{x_n}$, where $\bm{x} \in \mathbb{Z}_d^n$. 
Quantum states are, in general, represented as positive semi-definite operators with unitary trace acting on a Hilbert space $\mathcal{H}$, and we  denote the respective set as $\mathcal{P}(\mathcal{H})$. They are called mixed states, as opposed to pure quantum states that are represented by vectors in $\mathcal{H}$.

Let $\rho, \sigma \in \mathcal{P}(\mathcal{H})$ be two quantum states. Then, the \textit{trace distance} between them  is defined as 
\begin{align}
\delta(\rho,\sigma):=\frac{1}{2}||\rho-\sigma||_1,
\end{align}
where $||\cdot||_1$ is the $1-$Schatten norm in the space of bounded operators acting on $\mathcal{H}$.

Physically possible quantum operations are described by Completely Positive Trace Preserving (CPTP) maps. By definition, CPTP maps preserve the normalization of quantum states (being Trace Preserving) and map positive operators to positive operators (being Completely Positive), ensuring that density operators are mapped to density operators.

As we mentioned before, we use the min-entropy, $ H_{\min}$, to quantify the amount of information that Bob might obtain about Alice's system, $A$, given some (possibly quantum) side information encoded in a system $B'$, and which he can obtain by means of an attack. 
\begin{definition}
	Let $\rho_{A B'} \in \mathcal{P}(\mathcal{H}_A \otimes \mathcal{H}_{B'})$ and $\sigma_{B'} \in \mathcal{P}(\mathcal{H}_{B'})$. The \textit{min-entropy} of $\rho_{A B'}$ relative to $\sigma_{B'}$ is given by	
	\begin{align}
	H_{\min}(A | B')_{\rho|\sigma} =-\log \min\{ \lambda : \lambda \cdot \text{id}_A \otimes \sigma_{B'} \geq \rho_{A B'} \},
	\end{align}
	and 
	\begin{align}
	H_{\min}(A | B')_{\rho} = \sup_{\sigma_{B'}} H_{\min}(A | B')_{\rho|\sigma}.
	\label{def:Hmin}
	\end{align}
\end{definition}

Throughout our analysis we will also use the so-called $d$\textit{-ary entropy function}, the generalization of the standard binary entropy function:
\begin{definition}
	For $d\geq 2$, the \textit{d-ary entropy function} $h_d : [0,1]\rightarrow\mathbb{R}$ is given by
	\begin{widetext}
	\begin{align}
	h_d(x) = x \log_d(d-1) - x \log_d x - (1-x) \log_d (1-x).
	\end{align}
	\end{widetext}
	\label{def:q-ary}
	\end{definition}
\subsection{Mutually Unbiased Bases in $\mathcal{H}^d$}\label{subsec:Prelim_MUB}
In this section, we present the basics and some properties of MUBs in $\mathcal{H}^d$. This is the main tool that is used in our protocol. For more details about MUBs, see \cite{DEBZ10}.
\begin{definition}
	Let $\mathcal{B}_0 = \{\ket{\psi_1}, \ldots, \ket{\psi_d}\}$ and $\mathcal{B}_1 = \{\ket{\phi_1}, \ldots, \ket{\phi_d}\}$ be orthonormal bases in the d-dimensional Hilbert space $\mathcal{H}^d$. They are said to be \textit{mutually unbiased} if $| \braket{\psi_i}{\phi_j} | = \frac{1}{\sqrt{d}}$ for all $i, j\in \{1,\ldots,d\}$. Furthermore, a set $\{\mathcal{B}_0, \ldots, \mathcal{B}_m\}$ of orthonormal bases on $\mathcal{H}^d$ is said to be a set of MUBs if, for every $i\neq j$, $\mathcal{B}_i$ is mutually unbiased with $\mathcal{B}_j$.
\end{definition}
MUBs are extensively used in quantum cryptography because, in some sense, these bases are as far as possible from each other and the overlap between two elements from different bases is constant. Let $\big\{\ket{0}, \ldots, \ket{d-1}\big\}$ be the computational basis of $\mathcal{H}^d$, where $d$ is a prime number, and $\big\{\ket{\tilde{0}}, \ldots, \ket{\widetilde{d-1}}\big\}$ be the dual basis which is given by the Fourier transform on the computational basis:
\begin{align}
\ket{\tilde{j}} = \frac{1}{\sqrt{d}} \sum_{i=0}^{d-1} \omega^{-i j} \ket{i},\label{eq:dualbasis}
\end{align}
where $\omega = e^{\frac{2\pi i}{d}}$. We can easily verify that the computational basis and its dual basis are mutually unbiased, and we will make use of the following two operators, $V^0_a$ and $V^b_0$, to encode Alice's functions during the first (quantum) phase of the protocol.
\begin{definition}[Shift operators]
	The \textit{shift operator} $V^0_a$ shifts the computational basis by $a$ elements, i.e.
	\begin{align}
	V^0_a\ket{i} = \ket{i+a}.\label{eq:shiftoperato}
	\end{align}
	Similarly, the \textit{dual shift operator} $V^b_0$ shifts the dual basis by $b$ elements, i.e.
	\begin{align}
	V^b_0\ket{\tilde{j}} = \ket{\widetilde{j-b}}.\label{eq:dualshift}
	\end{align}
\end{definition}
The operators $V^0_a$ and $V^b_0$ are diagonal in the dual and computational basis, respectively\footnote{Note that $V^0_a$ and $V^b_0$ can be seen as a generalization of the Pauli $X$ and $Z$ operators, respectively.}, i.e.
\begin{align}
V^0_a = \sum_{j=0}^{d-1} \omega^{a j} \ketbra{\tilde{j}}{\tilde{j}} \text{ and }V^b_0 = \sum_{i=0}^{d-1} \omega^{b i} \ketbra{i}{i}.
\end{align}
Furthermore, we can define
\begin{align}
	V^b_a := V^b_0 V_a^0= \sum^{d-1}_{l=0} \omega^{(l+a)b} \ketbra{l+a}{l},
\end{align}
obtaining the so-called \textit{Heisenberg-Weyl operators}. These operators form a group of unitary transformations with $d^2$ elements; the group has $d+1$ commuting abelian subgroups of $d$ elements, and for each abelian subgroup, there exists a basis of joint eigenstates of all $V^b_a$ in the subgroup. These $d+1$ bases are pairwise mutually unbiased.
Let $x\in\mathbb{Z}_{d+1}$ label the abelian subgroups, let $l\in\mathbb{Z}_d$ label the elements of each subgroup, and let $U^x_l$ denote the corresponding subgroup operators. Finally, let the $i-$th basis element associated with the $x-$th subgroup be denoted by $\ket{e^x_i}$. Then, it can be seen that \cite{DEBZ10},
\begin{align}
U^x_l = \sum^{d-1}_{i=0} \omega^{i l} \ketbra{e^x_i}{e^x_i},
\end{align}
and \begin{align}
\ket{e^x_i} = \frac{1}{\sqrt{d}} \sum^{d-1}_{l=0} \omega^{-x l(d-l)/2 - i l}\ket{l},
\end{align}
where
\begin{align}
U^x_l = \alpha^x_l V^{x l}_l \text{ with }\alpha^x_l = \omega^{-x l(l+d)/2}.
\end{align}
One can show that
\begin{align}
V^b_a \ket{e^x_0} = c_{x,a,b} \ket{e^x_{ax - b}},\, x\in\mathbb{Z}_d, 
\end{align}
and\begin{align}
V^b_a \ket{e^d_0} = c_{d,a,b} \ket{e^d_a} {\text{ for } x=d},
\end{align}
or more generally
\begin{align}
V^b_a \ket{e^x_r} = c_{a, b, x, r} \ket{e^x_{ax - b + r}} \text{ with } c_{a, b, x, r} = \omega^{a(r + x - b) + \frac{a(a-1)}{2}x}. \label{eq:main_relation}
\end{align}

This last property is the main ingredient for the construction of our protocol as it encodes a linear evaluation based on values $a$, $b$ and $x \in \mathbb{Z}_d$\footnote{While $x \in \mathbb{Z}_{d+1}$, henceforth we consider $x \in \mathbb{Z}_{d}$, since we only use $d $ out of the $d+1$ MUBs.}.  In our protocol, we take $a,b$ -- that determine the operators $V^b_ a$ -- to be Alice's inputs and $x$ to be Bob's input.

Finally, let us see how the operators $V^b_a$ act on the so-called \textit{generalized Bell states}, since Bob's attack to the protocol is based on that. We start with the definition of the \textit{seed} Bell state 
\begin{align}
\ket{B_{0,0}} = \frac{1}{\sqrt{d}}\sum_{i} \ket{i^*, i},\label{eq:Bellseed}
\end{align}
where the map $\ket{\psi}\rightarrow\ket{\psi^*}$ is defined by taking the complex conjugate of the coefficients:
\begin{align}
\ket{\psi} = \sum_i \beta_i \ket{i}\rightarrow \ket{\psi^*} = \sum_i \beta_i^* \ket{i}.
\end{align}
Using the properties of the operators $V_a^b$, we can derive the rest of the generalized Bell states from the seed state, as
\begin{align}
\ket{B_{a,b}} = (\mathds{1} \otimes V^b_a) \ket{B_{0,0}}= \frac{1}{\sqrt{d}} \sum_{i=0}^{d-1} \omega^{(i+a)b} \ket{i^*, i+a}, \label{eq:bob_attack}
\end{align}
and one can prove that the set $\{\ket{B_{a, b}}\}_{(a,b) \in\mathbb{Z}_d^2}$ constitutes an orthonormal maximally entangled basis in the Hilbert space of two-qudit states \cite{DEBZ10}.

\subsection{Two-Universal Functions} \label{subsec:Prelim_twoUniversalFunctions}
During the  post-processing phase of the protocol, in order to amplify the privacy of Alice's inputs and outputs against the information leakage to Bob, we will use the so-called \textit{Multi-linear Modular Hashing} family of functions: 

\begin{definition}[Definition 2, \cite{HK97}]
	Let $d$ be a prime and let $n$ be an integer $n>0$. We define the family, $\mathfrak{G}$, of functions from $\mathbb{Z}_d^n$ to $\mathbb{Z}_d$, as 
	\begin{align}
	\mathfrak{G}:= \{ g_{\bm{\kappa}} : \mathbb{Z}_d^n\rightarrow \mathbb{Z}_d \, | \, \bm{\kappa}\in \mathbb{Z}_d^n \},
	\end{align}
	where the functions $g_{\bm{\kappa}}$ are defined for any $\bm{\kappa} = (\kappa_1,\ldots, \kappa_n) \in \mathbb{Z}_d^n$ and $x = (x_1,\ldots,x_n)$ as 
	\begin{align}
	g_{\bm{\kappa}}(\bm{x}) = \bm{\kappa}\cdot \bm{x} \mod d = \sum_{i=1}^n \kappa_i\,x_i \mod d.
	\end{align}
		\label{def:MMH}
\end{definition}
These functions are linear, since they are based on the modular inner product of vectors, therefore they preserve 
the structure of the OLE regarding the inputs and outputs, while they amplify  the privacy of Alice's elements. For the $\mathfrak{G}$ family one can prove the following theorem \cite{HK97}: 
\begin{theorem}
	The family $\mathfrak{G}$ is two-universal (see Definition \ref{def:2universal})\footnote{In fact, Halevi and Krawczyk \cite{HK97} proved a stronger result, namely that the $\mathfrak{G}$ family is \textit{$\Delta-$universal}, which is more general than two-universal.}.
\end{theorem}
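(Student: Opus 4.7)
The plan is to unpack the definition of two-universality and reduce the problem to a simple counting argument over $\mathbb{Z}_d^n$ using the fact that $d$ is prime. Recall that a family $\mathfrak{F}$ of functions from $\mathbb{Z}_d^n$ to $\mathbb{Z}_d$ is two-universal if for every pair of distinct inputs $\bm{x} \neq \bm{x}' \in \mathbb{Z}_d^n$,
\[
\Pr_{\bm{\kappa} \leftarrow_\$ \mathbb{Z}_d^n}\bigl[h_{\bm{\kappa}}(\bm{x}) = h_{\bm{\kappa}}(\bm{x}')\bigr] \leq \frac{1}{d}.
\]

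First, I would use linearity: since $h_{\bm{\kappa}}(\bm{x}) - h_{\bm{\kappa}}(\bm{x}') = \bm{\kappa}\cdot(\bm{x}-\bm{x}') \bmod d$, letting $\bm{z} := \bm{x}-\bm{x}' \neq \bm{0}$, it suffices to show that $\Pr_{\bm{\kappa}}[\bm{\kappa}\cdot\bm{z} = 0 \bmod d] = 1/d$ for every nonzero $\bm{z}\in\mathbb{Z}_d^n$. This already reduces the problem to counting solutions of a single nontrivial linear equation in $n$ unknowns over $\mathbb{Z}_d$.

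Next, I would pick an index $j$ with $z_j \neq 0$. Because $d$ is prime, $\mathbb{Z}_d$ is a field and $z_j$ is invertible. Fix the remaining entries $\kappa_i$, $i\neq j$, arbitrarily; then the equation $\sum_i \kappa_i z_i \equiv 0 \pmod d$ has the unique solution $\kappa_j = -z_j^{-1}\sum_{i\neq j}\kappa_i z_i$. Consequently, the number of $\bm{\kappa}\in\mathbb{Z}_d^n$ satisfying $\bm{\kappa}\cdot\bm{z}=0$ is exactly $d^{n-1}$, and dividing by the total $d^n$ gives probability exactly $1/d$, establishing the bound (and in fact equality, which yields the stronger $\Delta$-universality alluded to in the footnote, since the same argument applied to $\sum_i \kappa_i z_i \equiv \delta \pmod d$ yields $1/d$ for any $\delta\in\mathbb{Z}_d$).

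There is no real obstacle here; the only subtle point is that primality of $d$ is essential in order to invert $z_j$. Without it, a nonzero $z_j$ need not be a unit in $\mathbb{Z}_d$, and one could have too many solutions to the defining equation, breaking the counting step. Once invertibility is secured, the argument is a direct parameter count.
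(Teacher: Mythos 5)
Your proof is correct. The paper itself does not supply a proof of this theorem — it simply cites Halevi and Krawczyk \cite{HK97} — and your reduction to counting solutions of $\bm{\kappa}\cdot\bm{z}=0$ over the field $\mathbb{Z}_d$ (using invertibility of a nonzero coordinate of $\bm{z}$, hence exactly $d^{n-1}$ colliding keys) is precisely the standard argument behind that reference, including the observation that it in fact yields $\Delta$-universality.
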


\begin{definition}[Two-universal hash family]
	A family, $\mathfrak{G}$, of functions, $g$, with domain $D$ and range $R$ is called a \textit{two-universal hash family} if for any two distinct elements $w,w'\in D$ and for $g$ chosen at random from $\mathfrak{G}$, the probability of a \textit{collision} $g(w)=g(w')$ is at most $1/|R|$, where $|R|$ is the size of the range $R$. \label{def:2universal}
\end{definition}

Next, the \textit{Generalized Leftover Hash Lemma} is a relevant ingredient in our proof of security, as it ensures that, after applying a known function $g$  from a two-universal family to a random variable $X$, the resulting random variable $Z = g(X)$  is close to uniform, conditioned on some (possibly quantum) side information $E$. 

\begin{lemma}[Generalized Leftover Hash Lemma \cite{TSSR11}\footnote{This is a high-dimensional version of the Generalized Leftover Hash Lemma, that can be easily deduced by using Lemma 4 from \cite{TSSR11} with $d_A = d^l$. Note also that this is a special version, since Tomamichel et al. in \cite{TSSR11} prove it in the more general case for $\delta-$almost two-universal hash families.}]
	Let $X$ be a random variable, $E$ a quantum system, and $\mathfrak{G}$ a two-universal family of hash functions from $X$ to $\mathbb{Z}_d^l$. Then, on average over the choices of $g$  from $\mathfrak{G}$, the output $Z := g(X)$ is $\xi$-close to uniform conditioned on $E$, where
	\begin{align}
	\xi = \frac{1}{2}\sqrt{2^{l\log d - H_{\min}(X|E)}}.  
	\end{align}
	\label{lem:leftover}
\end{lemma}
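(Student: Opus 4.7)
My plan is the standard argument for the quantum Leftover Hashing Lemma: reduce the trace distance to a $\sigma_E$-weighted Hilbert--Schmidt norm via Cauchy--Schwarz, compute the Hilbert--Schmidt norm using two-universality, and bound the resulting quadratic form via the operator characterization of $H_{\min}$.

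Let $\rho_{XE} = \sum_x p_x \ketbra{x}\otimes \rho_E^x$ be the cq-extension of $(X,E)$ on which the hash acts, and fix a state $\sigma_E$ attaining the supremum in $H_{\min}(X|E)_\rho$, so that $p_x\rho_E^x \leq 2^{-H_{\min}(X|E)}\sigma_E$ for every $x$. For any fixed $h \in \mathfrak{F}$, the hashed state is $\rho^h_{ZE} = \sum_x p_x \ketbra{h(x)} \otimes \rho_E^x$. I would start from the Cauchy--Schwarz bound of the type used in Lemma~4 of~\cite{TSSR11}, namely $\|\tau\|_1 \leq \sqrt{d^l}\cdot\|(\mathds{1}_Z \otimes \sigma_E^{-1/4})\,\tau\,(\mathds{1}_Z\otimes\sigma_E^{-1/4})\|_2$ for any Hermitian $\tau$ on $\mathbb{Z}_d^l \otimes \mathcal{H}_E$ (using $\Tr\sigma_E = 1$). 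Applying it to $\tau = \rho^h_{ZE} - \tau_Z\otimes\rho_E$ and pushing the expectation over $h$ inside the square root via Jensen's inequality reduces the lemma to bounding $\mathbb{E}_h\|(\mathds{1}_Z\otimes\sigma_E^{-1/4})(\rho^h_{ZE}-\tau_Z\otimes\rho_E)(\mathds{1}_Z\otimes\sigma_E^{-1/4})\|_2^2$.

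Expanding the Hilbert--Schmidt square and grouping the sum over pairs $(x,x')$ separates a diagonal ($x=x'$) from an off-diagonal ($x\neq x'$) contribution. For the diagonal part I would apply the min-entropy inequality symmetrically, $\sigma_E^{-1/4}p_x\rho_E^x\sigma_E^{-1/4} \leq 2^{-H_{\min}(X|E)}\mathds{1}_E$, together with $\Tr[A^2] \leq \|A\|_\infty \Tr[A]$ for positive $A$, to bound its total contribution by $2^{-H_{\min}(X|E)}$. For the off-diagonal part, the two-universal property gives $\Pr_h[h(x)=h(x')] \leq d^{-l}$ whenever $x\neq x'$, which after averaging over $h$ cancels exactly against the cross term coming from the $-\tau_Z\otimes\rho_E$ piece. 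Collecting the surviving terms yields $\mathbb{E}_h\|\rho^h_{ZE} - \tau_Z\otimes\rho_E\|_1^2 \leq d^l\cdot 2^{-H_{\min}(X|E)} = 2^{l\log d - H_{\min}(X|E)}$, so the expected trace distance is at most $\tfrac{1}{2}\sqrt{2^{l\log d - H_{\min}(X|E)}} = \xi$.

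The step I expect to demand the most care is the non-commutative bookkeeping: since $\rho_E^x$ and $\sigma_E^{-1/4}$ generally do not commute, the min-entropy operator inequality has to be invoked in its sandwiched form, and the off-diagonal trace terms must be rearranged cyclically so that the two-universal cancellation against the $\tau_Z\otimes\rho_E$ cross term is exact. Once that is in place, the generalization from the binary to the prime-$d$ setting only affects the range cardinality (from $2^l$ to $d^l$), which is precisely what is reflected by the factor $2^{l\log d}$ in $\xi$, matching Lemma~4 of~\cite{TSSR11} in the specialization $d_A = d^l$.
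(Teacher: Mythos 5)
The paper offers no proof of this lemma at all: it is imported verbatim from \cite{TSSR11} (Lemma~4 there, specialized to $d_A=d^l$, as the footnote says), so there is no internal argument to compare against. What you have written is essentially the standard proof of that cited result --- Cauchy--Schwarz against $\mathds{1}_Z\otimes\sigma_E$ to pass to a weighted Hilbert--Schmidt norm, Jensen to move the expectation inside, expansion over pairs $(x,x')$, two-universality handling the off-diagonal part, and the min-entropy operator inequality handling the diagonal part --- so you are reconstructing precisely the justification the paper delegates to the reference. The skeleton is sound: the normalization $\sqrt{d^l}$ in the Cauchy--Schwarz step is right, and the off-diagonal cancellation works as you say, since $\Pr_h[h(x)=h(x')]\le d^{-l}$ for $x\neq x'$ is absorbed by the $-d^{-l}\Tr\big[(\sigma_E^{-1/4}\rho_E\sigma_E^{-1/4})^2\big]$ cross term, leaving only the diagonal sum.

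One step does not close as literally written. For the diagonal term you combine $\sigma_E^{-1/4}p_x\rho_E^x\sigma_E^{-1/4}\le 2^{-H_{\min}(X|E)}\mathds{1}_E$ with $\Tr[A^2]\le\norm{A}_\infty\Tr[A]$; this gives $\sum_x\Tr\big[(\tilde\rho_E^x)^2\big]\le 2^{-H_{\min}(X|E)}\,\Tr\big[\sigma_E^{-1/2}\rho_E\big]$, where $\tilde\rho_E^x:=\sigma_E^{-1/4}p_x\rho_E^x\sigma_E^{-1/4}$, and the factor $\Tr[\sigma_E^{-1/2}\rho_E]$ is not bounded by $1$ in general (already for $\sigma_E=\rho_E$ it equals $\Tr[\rho_E^{1/2}]\ge 1$). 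The fix is to keep the sandwiched form one step longer instead of relaxing $\sigma_E^{1/2}$ to $\mathds{1}_E$: from $\tilde\rho_E^x\le 2^{-H_{\min}(X|E)}\sigma_E^{1/2}$ and $\tilde\rho_E^x\ge 0$ you get $\Tr\big[(\tilde\rho_E^x)^2\big]\le 2^{-H_{\min}(X|E)}\Tr\big[\tilde\rho_E^x\,\sigma_E^{1/2}\big]=2^{-H_{\min}(X|E)}\,p_x$, because the $\sigma_E^{\pm 1/4}$ factors cancel cyclically inside the trace; summing over $x$ gives exactly $2^{-H_{\min}(X|E)}$ with no stray factor. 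With that one-line correction your chain yields $\mathbb{E}_h\norm{\rho^h_{ZE}-\tau_Z\otimes\rho_E}_1\le\sqrt{2^{l\log d-H_{\min}(X|E)}}$, i.e.\ the stated $\xi$.
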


\subsection{Quantum Universal Composability}\label{subsec:Prelim_QUC}
The Universal Composability (UC) framework was introduced by Canetti \cite{C20} in the classical setting and extended to the quantum setting by Unruh, and Ben-Or and Mayers \cite{Unruh04, BenOrMay04} (see also \cite{Unruh10, FS09}).  It renders strong composability guarantees as it ensures that the security of the protocol is independent of any external execution of the same or other protocols.  Both classical and quantum frameworks follow the same ideal-real world comparison structure and consider the same interactions between machines. The difference lies on the operations allowed by the quantum-UC framework, where  we are also allowed to store, transmit, and process  quantum states. 

More specifically, the quantum-UC security of some protocol $\mathcal{\pi}$ is drawn by comparing two scenarios. A real scenario where  $\mathcal{\pi}$ is executed and an ideal scenario where an ideal functionality, $\mathcal{F}$,  that carries out the same task and is defined \textit{a priori}, is executed. The comparison is done by a special machine, called the \textit{environment}, $\mathcal{Z}$, that supervises the execution of both scenarios and has access to any external information (e.g. concurrent executions of the same or any other protocol). In the two-party case that we are considering, the structure of the machines present in both scenarios is as follows: The real scenario has the environment, $\mathcal{Z}$, the adversary, $\mathcal{A}$, and the two parties Alice  and Bob, while the ideal scenario has the environment $\mathcal{Z}$, the simulator $\mathcal{S}$, the two parties Alice and Bob and the ideal functionality $\mathcal{F}$. 
Informally, we say that the protocol $\mathcal{\pi}$ is quantum-UC secure if no environment $\mathcal{Z}$ can distinguish between the execution of $\mathcal{\pi}$ in the real scenario and  the execution of the  functionality $\mathcal{F}$ in the ideal scenario. Any possible attack of the adversary $\mathcal{A}$ in the execution of $\mathcal{\pi}$ can be simulated by the simulator $\mathcal{S}$ in the ideal-world execution of $\mathcal{F}$, without any noticeable difference form the point-of-view of the environment $\mathcal{Z}$. Since the ideal functionality $\mathcal{F}$ is secure by definition, the real-world adversary is not able to extract any more information than what is allowed by the  $\mathcal{F}$. 

Let us now see the formal definition. Let $\mathcal{\pi}$ and $\rho$ be the real and ideal two-party protocols. We denote by $\text{Ex}_{\mathcal{\pi}^C, \mathcal{A}, \mathcal{Z}}$ (analogously $\text{Ex}_{\rho^C, \mathcal{S}, \mathcal{Z}}$) the output of the environment $\mathcal{Z}$ at the end of the real (ideal) execution, and by $C$ the corrupted party. 

\begin{definition}[Quantum-UC security \cite{Unruh10}]
	
	Let protocols $\pi$ and $\rho$ be given. We say that $\pi$ \textbf{statistically} quantum-UC emulates $\rho$ if and only if for every party, $C$, and for every adversary, $\mathcal{A}$, there exists a simulator, $\mathcal{S}$, such that for every environment $\mathcal{Z}$, and every $z\in\{0,1\}^*$, $n\in\mathbb{N}$
	\begin{align}
	\big|\text{Pr}[\text{Ex}_{\mathcal{\pi}^C, \mathcal{A}, \mathcal{Z}} (n, z) = 1] - \text{Pr}[\text{Ex}_{\rho^C, \mathcal{S}, \mathcal{Z}}(n, z) = 1]\big|\leq \mu(n),
	\end{align}
	where $\mu(n)$ is a negligible function, i.e., for every positive polynomial, $p(\cdot)$, there exists an integer $N_p>0$ such that for all $n> N_p$ we have $\mu(n)<1/p(n)$, and $n$ is the security parameter. We furthermore require that if $\mathcal{A}$ is quantum-polynomial-time, so is $\mathcal{S}$. Finally, if we consider quantum-polynomial-time $\mathcal{A}$ and $\mathcal{Z}$ we have \textbf{computational} quantum-UC security.
	\label{def:statisticalquc}
\end{definition}
The role of the simulator, $\mathcal{S}$, is to  simulate the execution of the protocol $\mathcal{\pi}$ in such a way that the environment $\mathcal{Z}$ is not able to distinguish between the two executions. 
In particular, since $\mathcal{S}$ does not have access to the inputs/outputs of the actual honest party, it runs a simulated honest party interacting with the environment which is acting as the adversary. Moreover, $\mathcal{S}$ controls the  dishonest party, therefore controlling their inputs to the ideal functionality $\mathcal{F}$. 
It then relies on its ability to extract the inputs  that the environment provides to the dishonest party. Then, it uses them along with the ideal functionality outputs in order to successfully generate a simulated execution that cannot be distinguished by the environment.

Finally, note that, in case the real execution of the protocol $\mathcal{\pi}$ makes use of some external functionality $\mathcal{F}_{\textbf{ext}}$, the simulator $\mathcal{S}$ can control and reprogram $\mathcal{F}_{\textbf{ext}}$ in whatever way suits best in the ideal world to produce an indistinguishable simulation of the real world.

Regarding the adversarial model, we consider both \textit{semi-honest} and \textit{dishonest} adversaries. Semi-honest adversaries (also called honest-but-curious or passive adversaries) do not deviate from the protocol and only try to passively gain extra information by looking at the exchanged messages. Dishonest adversaries may deviate arbitrarily from the protocol. We also adopt the static corruption adversarial model where the corruption of each party is done just before the execution of the protocol.

\

\noindent\textbf{Ideal functionalities.} 
 The main  functionality is the OLE functionality, $\mathcal{F}_{\textbf{OLE}}$, and it is described in Figure~\ref{fig:func_ole}.

\begin{figure}[!h]
	\centering
	\framebox[\linewidth][l]{%
		\parbox{0.95\linewidth}{%
			\begin{center}
				\textbf{$\mathcal{F}_{\textbf{OLE}}$ functionality}
			\end{center}
			
			\textbf{Alice's input:} $(a, b) \in\mathbb{Z}_d^2$;	\textbf{Bob's input:} $x \in\mathbb{Z}_d$
			
			\begin{itemize}
				\item Upon receiving input $(a,b) $ from Alice and $x$ from Bob, the functionality sends $f(x) := ax + b$ to Bob. 
			\end{itemize}
			\textbf{Alice's output:} $\bot$; \textbf{Bob's output:} $f(x)$
		}%
	}
	\caption{OLE functionality definition.}
	\label{fig:func_ole}
\end{figure}

In Figure~\ref{fig:func_vole}, we present the  functionality, $\mathcal{F}_{\textbf{VOLE}}^n$, which is similar to the former with the difference being that its inputs and outputs are vectors.

\begin{figure}[!h]
	\centering
	\framebox[\linewidth][l]{%
		\parbox{0.95\linewidth}{%
			\begin{center}
				\textbf{$\mathcal{F}_{\textbf{VOLE}}^n$ functionality}
			\end{center}
			
			\textbf{Alice's input:} $(\bm{a}, \bm{b}) \in \mathbb{Z}_d^{2n}$; 	\textbf{Bob's input:} $x \in \mathbb{Z}_d$
			
			\begin{itemize}
				\item Upon receiving input $(\bm{a}, \bm{b})$ from Alice and $x$ from Bob, the functionality sends $\bm{f}(x) := \bm{a}x + \bm{b} = (a_1 x + b_1 , \ldots, a_n x + b_n)$ to Bob. 
			\end{itemize}
			\textbf{Alice's output:} $\bot$;	\textbf{Bob's output:} $\bm{f}(x)$
		}%
	}
	\caption{VOLE functionality definition.}
	\label{fig:func_vole}
\end{figure}

As we mentioned already, we work in the so-called $\mathcal{F}_{\textbf{COM}}-$~hybrid model, where the real execution of the protocol has access to an ideal commitment functionality, $\mathcal{F}_{\textbf{COM}}$, defined in Figure~\ref{fig:func_com}. Note that our protocol $\mathcal{\pi}_{\textbf{QOLE}}$ makes several calls to $\mathcal{F}_{\textbf{COM}}$ and only opens a subset of the committed elements, therefore we use an index element $i$ to specify different instance calls. In the commitment phase, Bob sends $(\texttt{commit}, i, M)$ to the functionality and the functionality sends $(\texttt{commit}, i)$ to Alice. In the opening phase, Bob sends $(\texttt{open}, i)$ and the functionality sends $(\texttt{open}, i, M)$ to Alice.

\begin{figure}[!h]
	\centering
	\framebox[\linewidth][l]{%
		\parbox{0.95\linewidth}{%
			\begin{center}
				\textbf{$\mathcal{F}_{\textbf{COM}}$ functionality}
			\end{center}
			
			\textbf{Committer message:} $M$
			\begin{itemize}
				\item \textit{Commitment phase}. Upon receiving $( \texttt{commit}, M)$ from Bob, the functionality sends $\texttt{commit}$ to Alice. 
				\item \textit{Opening phase}. Upon receiving $\texttt{open}$ from Bob, the functionality sends $(\texttt{open}, M)$ to Alice. 
			\end{itemize}
		}%
	}
	\caption{Commitment functionality definition.}
	\label{fig:func_com}
\end{figure}

The $\mathcal{F}_{\textbf{COM}}$ functionality can be replaced by the commitment protocol $\pi_{\textbf{COM}}$ presented in \cite{C01} which is computationally UC-secure in the Common Reference String (CRS) model. As shown in \cite{CBGLM21} (Theorem 3.), the protocol $\pi_{\textbf{COM}}$ computationally quantum-UC realizes $\mathcal{F}_{\textbf{COM}}$ in the CRS model\footnote{The computational quantum UC-security of $\pi_{\textbf{COM}}$, shown in \cite{CBGLM21}, follows by lifting its classical computational UC-security (proved in \cite{C01}) to the quantum setting, assuming that the pseudorandom number generator in $\pi_{\textbf{COM}}$, is  robust against quantum adversaries.}. As a result, the resulting protocol $\pi_{\textbf{QOLE}}^{\pi_{\textbf{COM}}}$ is quantum-UC secure.

\section{Semi-honest QOLE protocol}\label{sec:insecureQOLE}

In order to build some intuition on the proposed protocol for QOLE, we start by presenting a simpler protocol (Figure~\ref{fig:SH_QOLE}) that is only secure under the semi-honest adversarial model. 

\begin{figure}[!h]
	\centering
	\framebox[\linewidth][l]{%
		\parbox{0.95\linewidth}{%
			\begin{center}
				\textbf{Semi-honest QOLE}
			\end{center}
			
			\textbf{Alice's input:} $(a, b) \in \mathbb{Z}^2_d$;	\textbf{Bob's input:} $x \in\mathbb{Z}_d$

			\begin{enumerate}
				
				\item Bob  randomly generates $r \in \mathbb{Z}_d$. He prepares and sends the state $\ket{e^x_r}$ to Alice.
				\item Alice prepares the operator $V^b_a$ according to her inputs $a$ and $b$. She then applies $V^b_a$ to Bob's state: $V^b_a \ket{e^x_r} = c_{x,a,b,r} \ket{e^x_{ax - b + r}}$. She sends the resulting state back to Bob. 
				\item Bob measures in the  basis  $x$,  subtracts $r$, and outputs the desired result $ax-b=:f(x)$.
			\end{enumerate}
			\textbf{Alice's output:} $\bot$; \textbf{Bob's output:} $f(x)$
		}%
	}
	\caption{Semi-honest QOLE protocol.}
	\label{fig:SH_QOLE}
\end{figure}

This semi-honest version leverages the properties of MUBs explored in Section~\ref{subsec:Prelim_MUB} and, in particular, the one presented in~\eqref{eq:main_relation}. As we saw, given the set of MUBs $\{ \ket{e^x_r} \}_{r\in\mathbb{Z}_d},\ \forall x\in\mathbb{Z}_d$, the operators $V^b_a$ simply permute the elements inside the basis $x$, according to a linear combination of the elements $a$, $b$, $x$ and $r$:

\begin{equation}
V^b_a \ket{e^x_r} = c_{a, b, x, r} \ket{e^x_{ax - b +r}}.
\label{eq:main_ingredient}
\end{equation}

Alice and Bob can use the above property to compute together a linear function $f(x) = ax - b$, where Alice chooses the parameters $a$ and $b$, and Bob chooses the input element $x$.   Bob starts by choosing a basis $x$ and an element $r$ therein, and prepares the state $\ket{e^x_r}$:  the basis choice $x$ plays the role of the input element $x$, and the basis element $r$ is used to enhance Bob's security against a potentially dishonest Alice. Then, he sends the  state $\ket{e^x_r}$ to Alice, who, in turn, applies on it the operator $V^b_a$ and sends back to Bob the resulting state. According to~\eqref{eq:main_ingredient}, Bob receives $\ket{e^x_{ax - b + r}}$,  measures it in the $x$ basis, and outputs the linear function evaluation $f(x) = ax-b$ by subtracting $r$. Thus, the correctness of the protocol is ensured by~\eqref{eq:main_ingredient}.

As far as the security of this protocol is concerned, we can easily see that it is secure against a dishonest Alice.  From her point of view, all the density matrices describing the several possible cases for $x = 0, \ldots, d-1$  are maximally mixed states. Therefore, she cannot know anything about the value of $x$. If, moreover, Bob is semi-honest the protocol remains secure. 

On the other hand, if Bob is dishonest and deviates from the protocol, he is able to find out Alice's inputs $a$ and $b$ with certainty. 
In~\eqref{eq:bob_attack}, we saw that the generalized Bell basis is generated by Alice's operators, $V^b_a$, i.e. $ \ket{B_{a,b}} = (1 \otimes V^b_a) \ket{B_{0,0}} $,  and Bob can make use of this property in order to extract her inputs $a$ and $b$. His attack can be described as follows:

\begin{enumerate}
	\item Bob prepares the state $\ket{B_{0,0}}$ and sends the second qudit to Alice.
	\item Alice applies her chosen operator $V^b_a$.
	\item Bob measures both qudits in the generalized Bell basis and outputs  $a,b$.
\end{enumerate}

It becomes clear that the protocol is secure only as long as Bob does not deviate from it; a dishonest Bob can break its security by performing the above attack. Therefore, we have to make sure that Bob sticks to the protocol. To achieve this, we apply a \textit{commit-and-open} scheme \cite{DFLSS09} that can be briefly described as follows:  Bob runs step 1. of the semi-honest QOLE protocol (Figure~\ref{fig:SH_QOLE}) multiple times, say $m$ in total, for multiple values of $x_i, \text{ and } r_i,\text{ for } i\in [m]$ and commits to these values by means of the functionality $\mathcal{F}_{\textbf{COM}}$ (Figure~\ref{fig:func_com}). Then, he sends these states to Alice, who, in turn, asks him to disclose his chosen  $x_i$'s and $r_i$'s for some of the $m$ instances that she chooses. $\mathcal{F}_{\textbf{COM}}$ forwards these committed values to Alice and she measures the corresponding received states in the disclosed bases. She can, thus, verify whether she got the right basis element for all the instances she chose to check. If Bob had used the Bell state $\ket{B_{0,0}}$ in one out of the $m$ instances, then the probability of Alice getting the correct result after measuring the state in the committed basis would be $\frac{1}{d}$. In other words, Bob would get caught with high probability $1-\frac{1}{d}$. Furthermore, if he chooses to attack all the instances, the probability of Alice getting correctly all the results is negligible, i.e. exponentially small in the number of instances, $m$.

\section{QOLE protocol}\label{sec:secureQROLE_overview}

Our QOLE protocol is divided in two phases: a quantum phase and a classical post-processing phase. The first phase uses quantum communication to generate several instances of OLE with random inputs. These instances may leak some information to the parties, therefore we refer to them as random weak OLE (RWOLE). The second phase is purely classical. It uses the RWOLE instances and extracts one classical OLE instance. The post-processing phase has two phases. It implements a derandomization procedure followed by an extraction phase that serves as a privacy amplification method.  Before proceeding, it is worth mentioning that we consider that neither dishonest party maliciously aborts the protocol. Indeed, in our setting, such a behaviour does not provide an advantage for learning the other party's input. The only case to abort the protocol is when honest Alice catches Bob cheating during the \textit{commit-and-open} stage.  

In what follows, we break down the protocol, show its correctness and retrieve some technical lemmas used for the security proof  against static dishonest adversaries.

\

\noindent\textbf{Notation.}  During the RWOLE phase, $\mathbf{F}_0 = (F^0_1,F^0_2 ,\ldots, F^0_n)$ is the vector whose components are the  random variables associated to Alice's functions. Each  $F^0_i$ ranges over the set of affine functions in $\mathbb{Z}_d$ such that $\Pr(F^0_i(x)=a^0_ix+ b^0_i)$ is uniform for all $i\in [n]$. We do not distinguish the set of affine functions in $\mathbb{Z}_d$ from $\mathbb{Z}_d^2$. The classical values $\mathbf{F}_0$ are saved in the Hilbert space $\mathcal{H}_{\mathbf{F}_0}$. The same holds for the derandomization phase, where $\mathbf{F}$ denotes the random variable for Alice's functions in the protocol $\pi^n_{\textbf{WOLE}}$. $\textbf{X}_0$ and $\textbf{Y}_0$ are the random variables for $\textbf{x}_0, \textbf{y}_0 \in \mathbb{Z}^{ n}_d$  in the RWOLE phase. and $\textbf{X}$ and $\textbf{Y}$ the corresponding random variables for $\textbf{x}, \textbf{y} \in \mathbb{Z}^{n}_d$  in the post-processing phase. Also, we use $A'$ and $B'$ to denote the system that a dishonest Alice and Bob, respectively, hold at the end of the execution of the protocol.

\subsection{RWOLE phase}

We now introduce the quantum phase of the proposed QOLE protocol, which we informally call the random weak OLE (RWOLE) phase. We denote by $\mathcal{\pi}^n_{\text{RWOLE}}$ the protocol that implements this RWOLE phase and we present it in Figure~\ref{fig:wrole}. 
\begin{figure}[!h]
	\centering
	\framebox[\linewidth][l]{%
		\parbox{0.95\linewidth}{%
			\begin{center}
				\textbf{Protocol $\mathcal{\pi}^n_{\text{RWOLE}}$}
			\end{center}
			
			\textbf{Parameters:} $n$, number of output  qudits; $t$, proportion of receiver test  qudits.
			
			\
			
			\textit{(Initialization Phase:)}
			\begin{enumerate}
				
				\item Bob randomly generates $m = (1+t)n$ different pairs $(x^0_i, r_i)$ and commits to them by sending (\texttt{commit}, $(i, x^0_i, r_i)$) to $\mathcal{F}_{\textbf{COM}}$. He prepares the states $\ket{e^{x^0_i}_{r_i}}_{i\in [m]}$ and sends them to Alice.

			\end{enumerate}
			\textit{(Test Phase:)}
			\begin{enumerate}
				\setcounter{enumi}{1}    
				
				\item Alice randomly chooses a subset of indices $T\subset [m]$ of size $t n$ and sends it to Bob.
				
				\item Bob sends $(\texttt{open}, i)$, $i\in T$, to $\mathcal{F}_{\textbf{COM}}$ and $\mathcal{F}_{\textbf{COM}}$ sends to Alice $(\texttt{open}, (i, x^0_i, r_i))$, $i\in T$.
				
				\item Alice measures the received  qudits in the corresponding $x^0_i$ basis for $i\in T$, and checks whether the received commitments are compatible with her measurements. In case there is no error she proceeds, otherwise she aborts.
				After the Test Phase,  we relabel and identify $[n]=[m]\setminus T$.
			\end{enumerate}
			\textit{(Computation Phase:)}
			\begin{enumerate}
				\setcounter{enumi}{5}       
				
				\item Alice randomly generates $n$ pairs $(a^0_i, b^0_i)$ and prepares  $V^{b^0_i}_{a^0_i}$ for $i\in[n]$.
				
				\item Alice applies these operators to the received states, i.e. $V^{b^0_i}_{a^0_i} \ket{e^{x^0_i}_{r_i}} = c_{x^0_i,a^0_i,b^0_i,r_i} \ket{e^{x^0_i}_{a^0_i x^0_i - b^0_i + r_i}}$, for $i\in[n]$, and sends the resulting states to Bob.
				
			\end{enumerate}
			\textit{(Measurement Phase:)}
			\begin{enumerate}
				\setcounter{enumi}{7}  
				
				\item Bob measures the received states in the basis $x^0_i$ for $i\in[n]$  and gets the states $\ket{e^{x^0_i}_{a^0_i x^0_i - b^0_i+r_i}}, i\in [n]$. Finally, he subtracts $r_i$, for $i\in[n]$ from his results.
				
			\end{enumerate}

			\textbf{Alice's output:} $(a^0_i, b^0_i)$, for $i \in [n]$.
			
			\textbf{Bob's output:} $(x^0_i, y^0_i)$, where $y^0_i=g_i(x^0_i) = a^0_i x^0_i - b^0_i$ for $i \in [n]$.
		}%
	}
	\caption{RWOLE protocol.}
	\label{fig:wrole}
\end{figure}

If both parties are honest the  protocol is correct: if Alice is honest, her functions $\mathbf{F}_0$ are chosen uniformly at random, and if Bob is honest he will obtain $\ket{e^{x^0_i}_{a^0_ix^0_i - b^0_i + r_i}}_{i\in [n]}$ according to \eqref{eq:main_ingredient}.

\noindent\textbf{Security.} In the case of a dishonest Alice, it is straightforward to verify that the security property of the semi-honest protocol still holds,  following the same reasoning.  In the case of dishonest Bob, though, these 
OLE random instances might leak information on Alice's random functions $\mathbf{F}_0$ to him. To quantify this  side information, we must  bound $H_{\min}(\mathbf{F}_0|B')_{\rho_{\mathbf{F}_0 B'}}$, where $\rho_{\mathbf{F}_0 B'}$ is the output state of the real execution of  $\pi^n_{\textbf{RWOLE}}$. The following lemma shows that  $\rho_{\mathbf{F}_0 B'}$  is at least $\epsilon-$close to an ideal state $\sigma_{\mathbf{F}_0 B'}$ independently of the attack that the dishonest party may perform. This ideal state $\sigma_{\mathbf{F}_0 B'}$ has a bound on $H_{\min}(\mathbf{F}_0|B')_{\sigma_{\mathbf{F}_0 B'}}$ that is proportional to the security parameter. 
\begin{lemma}[Security against dishonest Bob]
	\label{lemma:wrole_dishonest_bob}
	
	Let $\rho_{\mathbf{F}_0 B'}$ be the state given by the real execution of the protocol $\mathcal{\pi}^n_{\textbf{RWOLE}}$, where $\mathbf{F}_0$ is the system saving Alice's functions, $B'$ is Bob's (possibly quantum) system. Fix $\zeta \in ]0, 1-\frac{1}{d}]$ and let 
	\begin{align}
	\epsilon(\zeta, n) = \exp( -\frac{2 \zeta^2t^2n^2}{(nt+1)(t+1)}).
	\end{align}
	Then, for any attack of a dishonest Bob, there exists an ideal classical-quantum state $\sigma_{\mathbf{F}_0 B'}$, such that
	\begin{enumerate}
		\item $ \sigma_{\mathbf{F}_0 B'} \approx_{\epsilon} \rho_{\mathbf{F}_0 B'}$,
		\item $ H_{\min}( \mathbf{F}_0 | B' )_{\sigma_{\mathbf{F}_0 B'}} \geq \frac{n\log d}{2}(1 - h_d(\zeta)) $,
	\end{enumerate}
	where $h_d(\zeta)$ is given in Definition \ref{def:q-ary}.
\end{lemma}
The proof comprises two parts corresponding to the two conditions of Lemma \ref{lemma:wrole_dishonest_bob}: first, we prove that the state just before the \textit{Computation Phase} is close to the ideal state $\sigma_{\mathbf{F}_0 B'}$; and then, we prove that the operators applied by Alice to $\sigma_{\mathbf{F}_0 B'}$ increase the min-entropy by a specific amount that is proportional to the number of output qudits, $n$. We present the proof in Appendix  \ref{app:proofBobdishonest}, where we follow the same reasoning as Damg\r{a}rd et al. in Section 4.3 of \cite{DFLSS09}, and adapt it to our case.  We also use certain results from \cite{Dupuis2015} to establish the lower bound of property 2.

\subsection{Post-processing phase}\label{qole_protocol}

The $\mathcal{\pi}^n_{\textbf{RWOLE}}$ protocol (Figure~\ref{fig:wrole})  generates several instances of RWOLE, which leak information to Bob about Alice's inputs. In this section, we present the  post-processing phase that allows to extract one secure QOLE out of several RWOLE instances. Combining them is sufficient to generate a secure QOLE protocol, because Bob has only a negligible probability of attacking \textit{all} the weak instances without being caught; indeed, if he chooses to attack one of them the probability of Alice not aborting is $\frac{1}{t+1}+\frac{t}{d(1+t)}$, while if he chooses to attack all of them this probability becomes  $\frac{1}{d^{tn}}$, which is negligible in $n$, thus ensuring the asymptotic security of our protocol. The post-processing  comprises two subprotocols: the first is a derandomization protocol (Figure~\ref{fig:nOLE})  that integrates the randomized outputs of RWOLE into a deterministic scheme where Alice and Bob choose their inputs;   the second is an extraction protocol (Figure~\ref{fig:privacy_amplification}) that generates a secure QOLE protocol from these deterministic weak instances by means of a two-universal family of hash functions.  Note that the classical post-processing phase does not give any advantage to a potentially dishonest Alice, therefore we only need to prove security against dishonest Bob.

\subsubsection{Derandomization}
Our derandomization protocol, denoted as  $\mathcal{\pi}^n_{\textbf{WOLE}}$ and summarized in Figure~\ref{fig:nOLE}, transforms the random RWOLE instances into deterministic ones, which we informally call  weak OLE (WOLE). The output of $\pi^n_{\textbf{WOLE}}$ is still a weak version of OLE because Bob is allowed to have knowledge on Alice's inputs. The difference between RWOLE and WOLE is that the parties now  choose their inputs. 
Our derandomization protocol is an adaptation of the protocol in  \cite{DHNO19}. We denote by $*$ the product of two matrices of the same dimensions, such that the result is also a matrix of the same dimensions whose elements are the product of the respective elements of the operand matrices.

\begin{figure}[!h]
	\centering
	\framebox[\linewidth][l]{%
		\parbox{0.95\linewidth}{%
			\begin{center}
				\textbf{Protocol $\mathcal{\pi}^n_{\text{WOLE}}$}
			\end{center}
			
			\textbf{Alice's input:} $(\bm{a}, \bm{b})\in \mathbb{Z}^{2n}_d$; \textbf{Bob's input:} $\bm{x}\in \mathbb{Z}^{n}_d$
			
			\begin{enumerate}
				\item Alice and Bob run the $\mathcal{\pi}^n_{\textbf{RWOLE}}$ protocol and receive  $(\bm{a}_0, \bm{b}_0)$ and $(\bm{x}_0, \bm{y}_0)$, respectively.
				\item Bob computes and sends to Alice $\bm{c} = \bm{x} - \bm{x}_0$.
				\item Alice computes and  sends to Bob $\bm{d} = \bm{a} - \bm{a}_0$ and $\bm{s} = \bm{b}_0 + \bm{a} * \bm{c} + \bm{b}$. 
				\item Bob computes $\bm{y} = \bm{y}_0 + \bm{x} * \bm{d} - \bm{d} * \bm{c} + \bm{s}$.
			\end{enumerate}

			\textbf{Alice's output:} $\bot$; \textbf{Bob's output:} $\bm{y} = \bm{a} * \bm{x} + \bm{b}$
			}%
	}
	\caption{WOLE protocol.}
	\label{fig:nOLE}
\end{figure}

\

\noindent\textbf{Security.}  The requirements to prove security against dishonest Bob are summarized in Lemma~\ref{lemma:wole_bob_dishonest}, which is very similar in structure to Lemma~\ref{lemma:wrole_dishonest_bob}. We show that the real output state $\rho_{\mathbf{F} B'}$ of the protocol $\pi^n_{\textbf{WOLE}}$ is $\epsilon-$close to an ideal state $\sigma_{\mathbf{F} B'}$, which has min-entropy lower-bounded by a fixed value proportional to the security parameter $n$. Intuitively, this means that Bob's state is indistinguishable from a state where his knowledge on Alice's inputs is limited.

\begin{lemma}
	\label{lemma:wole_bob_dishonest}
	
	Let $\rho_{\mathbf{F} B'}$ be the state given by the real execution of the protocol $\mathcal{\pi}^n_{\textbf{WOLE}}$, where $\mathbf{F}$ is the system saving Alice's inputs, $B'$ is Bob's (possibly quantum) system. Fix $\zeta \in ]0, 1-\frac{1}{d}]$ and let 
	
	\begin{equation}
	\epsilon(\zeta, n) =\exp( -\frac{2 \zeta^2t^2n^2}{(nt+1)(t+1)}).
	\label{eq:epsilon}
	\end{equation}Then, for any attack of a dishonest Bob, there exists a classical-quantum state $\sigma_{\mathbf{F} B'}$ such that
	
	\begin{enumerate}
		\item $\sigma_{\mathbf{F} B'} \approx_{\epsilon} \rho_{\mathbf{F} B'}$,
		\item $ H_{\min}( \mathbf{F} | B' )_{\sigma_{\mathbf{F} B'}} \geq \frac{n\log d}{2}(1 - h_d(\zeta)) $,
	\end{enumerate}
	where $h_d(\zeta)$ is given in Definition \ref{def:q-ary}.
	
\end{lemma}

\begin{proof}
	Alice holds the system $A = \mathbf{F} \mathbf{F}_0 \mathbf{C} \mathbf{D} \mathbf{S}$, where $\mathbf{F} = (\mathbf{F}_{\bm{a}}, \mathbf{F}_{\bm{b}})$ refers to her inputs $(\bm{a}, \bm{b})\in \mathbb{Z}^{2n}_d$, $\mathbf{F}_0 = (\mathbf{F}_{\bm{a}_0}, \mathbf{F}_{\bm{b}_0})$ is the subsystem obtained from the RWOLE phase, and $\mathbf{C}, \mathbf{D}$ and $\mathbf{S}$ are classical subsystems used to save the values of  $\bm{c}$, $\bm{d}$, and $\bm{s}$ from the protocol, respectively.  Bob holds the system $B' = \mathbf{C} \mathbf{D} \mathbf{S} B'_0$ where $\mathbf{C}, \mathbf{D}$ and $\mathbf{S}$ are the subsystems on Bob's side where the values of $\bm{c}$, $\bm{d}$ and $\bm{s}$ are saved, respectively, and $B'_0 = \mathbf{Y}_0 E_0$ is his (possibly quantum) system generated from the RWOLE phase.
	
	To prove property $1.$, we will use Lemma~\ref{lemma:wrole_dishonest_bob}, namely that the state $\rho_{\mathbf{F}_0 B'_0}$  resulting from the RWOLE scheme is $\epsilon-$close to the ideal state $\sigma_{\mathbf{F}_0 B'_0}$. Then, we will show that the operations applied to $\rho_{\mathbf{F}_0 B'_0}$ during the derandomization process can only decrease the distance between the real and the ideal output states of the WOLE protocol, thus keeping them at least $\epsilon-$close.
	We start by specifying the operators corresponding to the classical operations executed in steps $2$ and $3$ of $\mathcal{\pi}^n_{\textbf{WOLE}}$. In step 2, a dishonest Bob can send to Alice some value $\bm{c}$ that depends on his system $B'_0$. So, he starts by applying a CPTP map $\mathcal{T}_{B'_0 \rightarrow \mathbf{C} B'_0}: \mathcal{P}\left( \mathcal{H}_{B'_0}\right) \rightarrow \mathcal{P}\left( \mathcal{H}_{B'_0}\otimes \mathcal{H}_{\mathbf{C}} \right)$ to his state and then projects it into the Hilbert space $\mathcal{H}_{\mathbf{C}}$. The operator for step $2$ is a CPTP map 
	\begin{widetext}
	\begin{align}
		\mathcal{O}^{(2)} :  \mathcal{P}\left(\mathcal{H}_{\mathbf{F}_0} \otimes \mathcal{H}_{B'_0}\right) \rightarrow \mathcal{P}\left(\mathcal{H}_{\mathbf{F}_0} \otimes \mathcal{H}_{\mathbf{C}} \otimes \mathcal{H}_{\mathbf{D}} \otimes \mathcal{H}_{\mathbf{S}} \otimes \mathcal{H}_{B'_0}\right)
	\end{align}
	\end{widetext}
	described by his action on some general quantum state $\rho$, as
	\begin{widetext}
	\begin{align}
	\mathcal{O}^{(2)}(\rho) = \mathds{1} \otimes \sum_{\bm{d}, \bm{s}, \bm{c}} \ketbra{\bm{c}}_{\mathbf{C}} \mathcal{T}_{B'_0 \rightarrow \mathbf{C} B'_0}(\rho) \ketbra{\bm{c}}_{\mathbf{C}} \otimes \ketbra{\bm{d}}_{\mathbf{D}}\otimes\ketbra{\bm{s}}_{\mathbf{S}} .
	\end{align}
	\end{widetext}
	In step 3, Bob takes no action. Since Alice is honest, the operator for this step simply describes her action on subsystems $\mathbf{D}$ and $\mathbf{S}$ according to her choice at subsystem $\mathbf{F}$. This operator is a CPTP map
		\begin{align}
		\mathcal{O}^{(3)} :  \mathcal{P}\left(\mathcal{H}_{\mathbf{F}_0} \otimes \mathcal{H}_{\mathbf{C}} \otimes \mathcal{H}_{\mathbf{D}} \otimes \mathcal{H}_{\mathbf{S}} \otimes \mathcal{H}_{B'_0}\right) \rightarrow \mathcal{P}\left(\mathcal{H}_{\mathbf{F}} \otimes \mathcal{H}_{\mathbf{F}_0} \otimes \mathcal{H}_{\mathbf{C}} \otimes \mathcal{H}_{\mathbf{D}} \otimes \mathcal{H}_{\mathbf{S}} \otimes \mathcal{H}_{B'_0}\right)
		\end{align}
	described by his action on some general quantum state $\rho$, as
	\begin{align}
	\mathcal{O}^{(3)}(\rho) = \frac{1}{d^{2n}} \sum_{\bm{a}, \bm{b}} \mathcal{P}^{\bm{a}, \bm{b}}\, \rho \, (\mathcal{P}^{\bm{a}, \bm{b} })^\dagger,
	\end{align}
	where 
	\begin{widetext}
		\begin{align}
		\mathcal{P}^{\bm{a}, \bm{b}} &= \ket{\bm{a}, \bm{b}}_{\mathbf{F}} \otimes\nonumber\\ & \sum_{\bm{a}_0, \bm{b}_0, \bm{c}} \ketbra{\bm{a}_0, \bm{b}_0}_{\mathbf{F}_0} \otimes \ketbra{\bm{c}}_{\mathbf{C}} \otimes \ketbra{\bm{a} - \bm{a}_0}_{\mathbf{D}} \otimes \ketbra{\bm{b}_0 + \bm{a} \cdot \bm{c} + \bm{b}}_{\mathbf{S}}.
		\end{align}
	\end{widetext}
	Note that $\mathcal{O}^{(2)}$ adds subsystems $\mathbf{C} \mathbf{D} \mathbf{S}$ and distributes $\mathbf{C}$ according to Bob's action. The operator $\mathcal{O}^{(3)}$ adds subsystem $\mathbf{F}$ and projects $\mathbf{D} \mathbf{S}$ according to the information at subsystem $\mathbf{F} \mathbf{F}_0$ and the expressions of $\bm{d}$ and $\bm{s}$. Regarding the trace distance between the real and ideal states, we have: 
	\begin{align}
	\delta(\rho_{\mathbf{F}_0 B'_0} , \sigma_{\mathbf{F}_0 B'_0})&\geq\delta\Big( \mathcal{O}^{(2)}(\rho_{\mathbf{F}_0 B'_0}), \mathcal{O}^{(2)}(\sigma_{\mathbf{F}_0 B'_0})\Big)\nonumber \\
	&\geq \delta\Big( \mathcal{O}^{(3)} \mathcal{O}^{(2)}(\rho_{\mathbf{F}_0 B'_0}), \mathcal{O}^{(3)} \mathcal{O}^{(2)}(\sigma_{\mathbf{F}_0 B'_0})\Big) \nonumber\\ & =\delta( \rho_{\mathbf{F} B'}, \sigma_{\mathbf{F} B'}).
	\end{align}
	For the above inequalities, we took into account that $\mathcal{O}^{(2)}$ and $\mathcal{O}^{(3)}$ are CPTP maps, and as such they do not increase the trace distance (see Lemma 7 in \cite{U17}).  For the last equality, recall that $B' = \bm{C}\bm{D}\bm{S}B'_0$. Now, from  Lemma~\ref{lemma:wrole_dishonest_bob}, we have that $\sigma_{\mathbf{F}_0 B'_0} \approx_{\epsilon} \rho_{\mathbf{F}_0 B'_0}$.
	Hence, we conclude that $\delta( \rho_{\mathbf{F} B'}, \sigma_{\mathbf{F} B'}) \leq \epsilon(\zeta, n)$ for $\epsilon(\zeta, n)$ given as~\eqref{eq:epsilon}, i.e. $\sigma_{\mathbf{F} B'} \approx_{\epsilon} \rho_{\mathbf{F} B'}$.
	
	We move on to prove property $2$. Consider the bijective function $g^{\bm{c},\bm{d},\bm{s}} : \mathbb{Z}^{2n}_d \rightarrow \mathbb{Z}^{2n}_d$ given by 
	\begin{align}
	g^{\bm{c},\bm{d},\bm{s}}(\bm{x}, \bm{y}) = (\bm{x} + \bm{d}, \bm{s} - \bm{y} - (\bm{x} + \bm{d}) * \bm{c}),
	\end{align}
	for fixed $\bm{c}, \bm{d}$ and $\bm{s}$. Essentially, $g^{\bm{c},\bm{d},\bm{s}}$ describes how the input vector $(\bm{a}, \bm{b})$ is related to the RWOLE output vector $(\bm{a}_0, \bm{b}_0)$:
	\begin{align}
	\label{eq:def_ab}
	(\bm{a}, \bm{b})  = g^{\bm{c},\bm{d},\bm{s}}(\bm{a}_0, \bm{b}_0) = (\bm{a}_0 + \bm{d}, \bm{s} - \bm{b}_0 - (\bm{a}_0 + \bm{d})  * \bm{c}).
	\end{align}
	Intuitively, this means that the subsystem $\mathbf{F}$ is defined by the subsystems $\mathbf{F}_0 \mathbf{C} \mathbf{D} \mathbf{S}$. We can rewrite  the action of the operator $\mathcal{O}^{(3)}$  on some general quantum state $\rho$ as follows:
	\begin{align}
	\mathcal{O}^{(3)}(\rho) = \frac{1}{d^{2n}} \sum_{\bm{d}, \bm{s}} \mathcal{P}^{\bm{d}, \bm{s}}\, \rho \, (\mathcal{P}^{\bm{d}, \bm{s} })^\dagger,
	\end{align}
	where 
	\begin{widetext}
		\begin{align}
		\mathcal{P}^{\bm{d}, \bm{s}} = \sum_{\bm{a}_0, \bm{b}_0, \bm{c}} \ketbra{g^{\bm{c},\bm{d},\bm{s}}(\bm{a}_0, \bm{b}_0)}{\bm{a}_0, \bm{b}_0}_{\mathbf{F}} \otimes \ketbra{\bm{a}_0, \bm{b}_0, \bm{c}, \bm{d}, \bm{s}}_{\mathbf{F}_0 \bm{C} \bm{D} \bm{S}}. 
		\end{align}
	\end{widetext}

	Hence, for the min-entropy bound,  we have:
	\begin{align}
	H_{\min}(\mathbf{F} \,|\, B')_{\sigma_{\mathbf{F}B'}} &= H_{\min}(\mathbf{F}_{\bm{a}}, \mathbf{F}_{\bm{b}} \,|\, B')_{\sigma_{\mathbf{F}B'}}\nonumber\\&=H_{\min}( f^{\bm{C},\bm{D},\bm{S}}(\mathbf{F}_{\bm{a}_0}, \mathbf{F}_{\bm{b}_0} )  \,|\, \mathbf{C} \mathbf{D} \mathbf{S} B'_0)_{\mathcal{O}^{(3)} \mathcal{O}^{(2)}\sigma_{\mathbf{F}_0B'_0}} \nonumber\\
	& \geq H_{\min}(\mathbf{F}_{\bm{a}_0}, \mathbf{F}_{\bm{b}_0} \,|\, \mathbf{C} \mathbf{D} \mathbf{S} B'_0)_{ \mathcal{O}^{(2)}\sigma_{\mathbf{F}_0B'_0}} \label{first_rel} \\
	& \geq H_{\min}(\mathbf{F}_{\bm{a}_0}, \mathbf{F}_{\bm{b}_0} \,|\, B'_0)_{\sigma_{\mathbf{F}_0B'_0}} \label{second_rel}\\
	&\geq \frac{n\log d}{2}(1 - h_d(\zeta)). \label{third_rel}
	\end{align}
	The inequality \eqref{first_rel} comes from Lemma~\ref{lemma:bijectivefunction},  as $g^{\bm{c},\bm{d},\bm{s}}$ is bijective. The inequality \eqref{second_rel} comes from Theorem 6.19 in \cite{T16}, as the operator $\mathcal{O}^{(2)}$ takes the form of $\mathcal{O}^{(2)} = \mathds{1} \otimes \mathcal{M}$, where $\mathcal{M}$ is a CPTP map. The last inequality comes from Lemma~\ref{lemma:wrole_dishonest_bob}, property 2.
	\end{proof}

\subsubsection{Extraction}

In this section, we present our extraction protocol, $\mathcal{\pi}_{\textbf{EXT}}$,  that generates one OLE instance using the derandomization protocol $\mathcal{\pi}^n_{\textbf{WOLE}}$ and the two-universal family of hash functions, $\mathfrak{G}$ (see \cite{HK97} and Definition~\ref{def:MMH}). This family uses the inner product between two vectors in the $\mathbb{Z}^n_d$ space, and since OLE only involves linear operations, we can apply the inner product operation to all vectors $\bm{a}$, $\bm{b}$ and $\bm{y}$ without affecting the overall structure. The protocol $\mathcal{\pi}_{\textbf{EXT}}$ is summarized in Figure~\ref{fig:privacy_amplification} and uses the $n$ instances of WOLE in such a way that Bob's knowledge on Alice's inputs decreases exponentially with respect to the security parameter $n$\footnote{This extraction step is similar to the privacy amplification step of QKD protocols.}.

\begin{figure}[!h]
	\centering
	\framebox[\linewidth][l]{%
		\parbox{0.95\linewidth}{%
			\begin{center}
				\textbf{Protocol $\mathcal{\pi}_{\textbf{EXT}}$}
			\end{center}
			
			\textbf{Alice's input:} $(a, b) \in\mathbb{Z}_d^2$; \textbf{Bob's input:} $x \in\mathbb{Z}_d$
			
			\begin{enumerate}
				\item Alice chooses randomly some function $g_{\bm{\kappa}} \in \mathfrak{G}$ and sends it to Bob.
				\item  Alice randomly generates $a_2, \ldots, a_n, b_2, \ldots, b_n\leftarrow_{\$}\mathbb{Z}_d$. She computes $a_1 = \big(a - \sum_{i=2}^{n} a_i \kappa_i\big)/\kappa_1$ and $b_1 = \big(b - \sum_{i=2}^{n} b_i \kappa_i\big)/\kappa_1$. We write $\bm{a} = (a_1, \ldots, a_n)$ and $\bm{b} = (b_1, \ldots, b_n)$.
				\item Alice and Bob run the derandomization protocol $\mathcal{\pi}^n_{\textbf{WOLE}}((\bm{a}, \bm{b}), \bm{x})$. Bob receives  $\bm{y}$ as output.
				
				\item Bob computes $y = g_{\bm{\kappa}}(\bm{y})$.
			\end{enumerate}
			
			\textbf{Alice's output:} $\bot$; \textbf{Bob's output:} $y$
			}
	}
	\caption{Extraction protocol.}
	\label{fig:privacy_amplification}
\end{figure}

The correctness of $\mathcal{\pi}_{\textbf{EXT}}$ follows from linearity:
\begin{widetext}
\begin{align}
	y  = g_{\bm{\kappa}}(\bm{y}) & = \bm{\kappa} \cdot (a_1 x + b_1 ,\ldots a_n x + b_n) \nonumber\\
	&=\bm{\kappa} \cdot \Bigg(\frac{a - \sum_{i=2}^{n} a_i \kappa_i}{
		\kappa_1}\, x + \frac{b - \sum_{i=2}^{n} b_i \kappa_i}{\kappa_1},\, a_2 x + b_2, \ldots , a_n x + b_n \Bigg)\nonumber\\
	&=a x + b.
\end{align}
\end{widetext}
\

\noindent\textbf{Security.} 
By definition, the derandomization protocol leaks some information to Bob about Alice's inputs $(\bm{a}, \bm{b})$. Since $\bm{y = a  * x + b}$, without loss of generality, any leakage of Alice's inputs can be seen as a leakage on just $\bm{a}$. In this case, the min-entropy of $\mathbf{F} = (\mathbf{F}_{\bm{a}}, \mathbf{F}_{\bm{b}})$ should be the same as the min-entropy of $\mathbf{F}_{\bm{a}}$. Now, recall the $\mathcal{F}_{\textbf{OLE}}$ definition (Figure \ref{fig:func_ole}),  and note that Bob does not possess any knowledge about Alice's input $(a,b)$ other than what can be deduced from his input and output $(x, y)$. Similarly, since $y = ax + b$, Bob has some knowledge on the relation between $a$ and $b$ and --  as $b$ is completely determined by $(a,x,y)$ --  we only have to guarantee that $a$ looks uniformly random to Bob. The role of the hash functions used in the above protocol $\mathcal{\pi}_{\textbf{EXT}}$ is precisely to extract a uniformly random $a$ from the leaky vector $\bm{a}$, while preserving the structure of the OLE. This result is summarized in Lemma~\ref{lemma:extraction} and its proof is based on  Lemma \ref{lem:leftover}. 
\begin{lemma}
	Let $\rho_{F B'}$ be the state given by the real execution of the protocol $\pi_{\textbf{EXT}}$, where $F$ is the system saving Alice's inputs $(a,b)$, $B'$ is Bob's (possibly quantum) system. Fix $\zeta \in ]0, 1-\frac{1}{d}]$ and let 
	\begin{align}
	\epsilon(\zeta, n) = \exp( -\frac{2 \zeta^2t^2n^2}{(nt+1)(t+1)}).
	\end{align}
	Then, for any attack of a dishonest Bob, there exists a classical-quantum state  $\sigma_{F B'}$, where   $F = (F_{a}, F_{b})$, such that 
	
	\begin{enumerate}
		\item $ \sigma_{F B'} \approx_{\epsilon} \rho_{F B'}$, and
		\item  $\delta( \tau_{\mathbb{Z}_d} \otimes \sigma_{B'},\, \sigma_{F_{a} B'} ) \leq K\, 2^{-n \, f_d(\zeta)}$, where $K = \frac{\sqrt{d}}{2}$, $f_d(\zeta) = \frac{\log d}{4} (1-h_d(\zeta))$, $n$ is the security parameter, and $h_d(\zeta)$  is given in Definition \ref{def:q-ary}.
	\end{enumerate}
	\label{lemma:extraction}
\end{lemma}
\begin{proof}
	To prove property $1$, we note that the extraction operation applied to the output of $\mathcal{\pi}^n_{\textbf{WOLE}}$ can be described by a projective operator on the space  $F = (F_{a}, F_{b})$. Therefore, as in the case of  Lemma~\ref{lemma:wole_bob_dishonest}, property $1$ follows from  the fact that CPTP maps do not increase the trace distance (see Lemma 7 in \cite{U17}).
	
	Regarding property $2$,  let us first consider Bob's subsystem $E$  to integrate Bob's inputs $\bm{x}$, i.e. $E = \mathbf{X} E'$. Then, his full system $B'$ is identified with $\mathbf{Y} E=\mathbf{Y}\mathbf{X} E'$. We have: 
	\begin{align}
		H_{\min}(\mathbf{F}\, | \, \mathbf{Y} E)_{\sigma_{\mathbf{F}\mathbf{Y}E}} &= H_{\min}(\mathbf{F}_{\bm{a}}, \mathbf{F}_{\bm{b}} \, | \, \mathbf{Y} \mathbf{X} E')_{\sigma_{\mathbf{F}\mathbf{Y}E}} \\
		&= H_{\min}(\mathbf{F}_{\bm{a}}, \mathbf{Y} - \mathbf{F}_{\bm{a}} \mathbf{X} \, | \, \mathbf{Y} \mathbf{X} E')_{\sigma_{\mathbf{F}_a\mathbf{Y}E}} \\
		&= H_{\min}(\mathbf{F}_{\bm{a}}\, | \, \mathbf{Y} \mathbf{X} E')_{\sigma_{\mathbf{F}_a\mathbf{Y}E}}.
	\end{align}
	Therefore, 
	\begin{align}
	H_{\min}(\mathbf{F}_{\bm{a}}\, | \, \mathbf{Y} E)_{\sigma_{\mathbf{F}_a\mathbf{Y}E}} \geq \frac{n\log d}{2} (1 - h_d(\zeta)).
	\end{align}
	
	Now, since $\mathfrak{G}$ is a two-universal family of hash functions, we can directly apply  Lemma~\ref{lem:leftover} for $l=1$. It follows that $F_a$ is $\xi-$close to uniform conditioned on $\mathbf{Y} E$, i.e.
\begin{align}
\delta( \tau_{\mathbb{Z}_d} \otimes \sigma_{\mathbf{Y} E},\, \sigma_{F_{a} \mathbf{Y} E} ) \leq \frac{1}{2}\sqrt{2^{\log d - \frac{n\log d}{2}(1 - h_d(\zeta))}} = K\, 2^{-n \, f_d(\zeta)} =: \xi,
\end{align}
	where $K = \frac{\sqrt{d}}{2}$, $f_d(\zeta) = \frac{\log d}{4} (1-h_d(\zeta))$ and $n$ is the security parameter.
	\end{proof}

Now, we are in position to combine the above $\pi^n_{\textbf{RWOLE}}$, $\pi^n_{\textbf{WOLE}}$ and $\pi_{\textbf{EXT}}$, and present the full protocol $\pi_{\textbf{QOLE}}$ in Figure~\ref{fig:fullprotocol}. 
\begin{figure}[H]
	\centering
	\framebox[1\linewidth]{%
		\parbox{0.95\linewidth}{%
			\begin{center}
				\textbf{Protocol $\mathcal{\pi}_{\text{QOLE}}$}
			\end{center}
			\textbf{Parameters:} $n$, security parameter; $tn$, number of test qudits.
			
			\textbf{Alice's input:} $(a, b)\in\mathbb{Z}_d^2$; \textbf{Bob's input:} $x\in\mathbb{Z}_d$
			
			\vspace{0.5\baselineskip}
			
			\textit{(Quantum phase:)}
			
			\begin{enumerate}
				\item Bob randomly generates $m = (1+t)n$ different pairs $(x^0_i, r_i)\in\mathbb{Z}_d^2$ and commits to them by sending (\texttt{commit}, $(i,x^0_i, r_i)_{ i\in [m]}$) to  $\mathcal{F}_{\textbf{COM}}$. He also prepares the quantum states $\ket{e^{x^0_i}_{r_i}}_{ i\in [m]}$ and sends them to Alice. 
				
				\item Alice randomly chooses a subset of indices $T\subset [m]$ of size $t n$ and sends it to Bob.
				
				\item Bob sends $(\texttt{open}, i)_{ i\in T}$ to $\mathcal{F}_{\textbf{COM}}$ and $\mathcal{F}_{\textbf{COM}}$ sends to Alice $(\texttt{open}, (i, x^0_i, r_i))_{ i\in T}$.
				
				\item Alice measures the received quantum states in the corresponding $x^0_i$ basis for $i\in T$, and checks whether the received commitments are compatible with her measurements. She proceeds in case there is no error, otherwise she aborts.
				
				\item Alice randomly generates $n$ pairs $(a^0_i, b^0_i)\in\mathbb{Z}_d^2$ and prepares  $V^{b^0_i}_{a^0_i}$ for $i\in [m]\setminus T$. We relabel $\bm{a}_0 = (a^0_1, \ldots, a^0_n)$, $\bm{b}_0 = (b^0_1, \ldots, b^0_n)$ and $\bm{x}_0 = (x^0_1, \ldots, x^0_n)$, and from now on identify $[m]\setminus T\equiv [n]$.
				
				\item Alice  $\forall i\in [n]$ applies $V^{b^0_i}_{a^0_i}$ to the received state $\ket{e^{x^0_i}_{r_i}}$, i.e. $V^{b^0_i}_{a^0_i} \ket{e^{x^0_i}_{r_i}} = c_{x^0_i,a^0_i,b^0_i,r_i} \ket{e^{x^0_i}_{a^0_i x^0_i - b^0_i + r_i}}$, and sends the resulting states to Bob.
				
				\item Bob  $\forall i\in [n]$ measures the received state in the corresponding basis  $x^0_i$, and gets the state $\ket{e^{x^0_i}_{a^0_i x^0_i - b^0_i+r_i}}$. Finally,  $\forall i\in [n]$ he subtracts $r_i$ from his result and gets $y^0_i = a^0_i x^0_i - b^0_i$. We write $\bm{y}_0 = (y^0_1, \ldots, y^0_n)$. 
			\end{enumerate}
			
			\textit{(Post-processing phase:)}
			
			\begin{enumerate}
				\setcounter{enumi}{7} 
				\item Bob defines $\bm{x} = (x, \ldots,x)$ as the constant vector according to his input $x$.
				\item Alice chooses randomly some function $g_{\bm{\kappa}} \in \mathfrak{G}$, and she randomly generates $a_2, \ldots, a_n, b_2, \ldots, b_n\leftarrow_{\$}\mathbb{Z}_d$. She computes $a_1 = \big(a - \sum_{i=2}^{n} a_i \kappa_i\big)/\kappa_1$ and $b_1 = \big(b - \sum_{i=2}^{n} b_i \kappa_i\big)/\kappa_1$. We write $\bm{a} = (a_1, \ldots, a_n)$ and $\bm{b} = (b_1, \ldots, b_n)$.
				\item Bob computes and sends to Alice $\bm{c} = \bm{x} - \bm{x}_0$.
				\item Alice computes and sends  to Bob $\bm{d} = \bm{a} - \bm{a}_0$ and $\bm{s} = \bm{b}_0 + \bm{a}  * \bm{c} + \bm{b}$.
				\item Bob computes $\bm{y} = \bm{y}_0 + \bm{x} * \bm{d} - \bm{d}* \bm{c} + \bm{s}$.
				\item Finally, Alice sends $\bm{\kappa}$ to Bob and he computes $y = g_{\bm{\kappa}}(\bm{y})$.
			\end{enumerate}
\textbf{Alice's output:} $\bot$; \textbf{Bob's output:} $y$ 
	}}	\caption{QOLE protocol.}
	\label{fig:fullprotocol}
\end{figure}

\section{UC security}\label{sec:secureQROLE_protocol}

In this section, we will show that   $\mathcal{\pi}_{\textbf{QOLE}}$ (Figure~\ref{fig:fullprotocol}) is quantum-UC secure. More formally, we will show that $\mathcal{\pi}_{\textbf{QOLE}}$  statistically quantum-UC realizes  (Definition~\ref{def:statisticalquc}) $\mathcal{F}_{\textbf{OLE}}$ in the $\mathcal{F}_{\textbf{COM}}-$hybrid model.

\begin{theorem}[quantum-UC security of $\mathcal{\pi}_{\textbf{QOLE}}$]
	The protocol $\mathcal{\pi}_{\textbf{QOLE}}$ from Figure~\ref{fig:fullprotocol} statistically quantum-UC realizes  (see Definition~\ref{def:statisticalquc}) $\mathcal{F}_{\textbf{OLE}}$ in the $\mathcal{F}_{\textbf{COM}}-$hybrid model.
	\label{thm:QUC}
\end{theorem}
Theorem~\ref{thm:QUC} is proved by combining Lemma~\ref{lemma:dishonestAlice} and Lemma~\ref{lemma:dishonestBob} that we present below.  In the former we prove the protocol's security for the case where Alice is dishonest and Bob is honest, while in the latter we prove security in the case where Alice is honest and Bob dishonest. 
	\begin{figure}[h!]
	\centering
	\framebox[\linewidth][l]{%
		\parbox{0.95\linewidth}{%
			\begin{center}
				\textbf{Simulator $\mathcal{S}_{\text{A}}$}
			\end{center}
			
			\textit{(Quantum phase:)}
			
			\begin{enumerate}
				\item $\mathcal{S}_A$ sends \texttt{commit} to $\mathcal{F}_{\textbf{FakeCOM}}$.
				\item  $\mathcal{S}_A$ generates $ m=(1+t)n$ entangled states $\ket{B_{0,0}}_{Q_A Q_S}$ and sends subsystem $Q_A$ to Alice. 
				\item Alice asks for a set of indices $T  \subset[m] $ of size $tn$.
				\item $\mathcal{S}_A$ measures the corresponding elements of subsystem $Q_S$ using $tn$ randomly chosen bases $x^0_i$ and provides $(\texttt{open}, (i, x^0_i, r_i))$ to $\mathcal{F}_{\textbf{FakeCOM}},\  \forall i\in T$. 
				\item Upon receiving the processed system $\hat{Q}_A$ from Alice, $\mathcal{S}_A$ measures the joint system $\hat{Q}_A Q_S$ and extracts the measurement outcomes $\mathbf{F} = (\bm{a}_0, \bm{b}_0) = \big( (a^0_1,\ldots,a^0_n), (b^0_1,\ldots,b^0_n) \big)$.
			\end{enumerate}
			
			\textit{(Post-processing phase:)}
			
			\begin{enumerate}
				\setcounter{enumi}{5} 
				\item  $\mathcal{S}_A$ randomly generates a vector $\bm{c}'$ and sends to Alice.
				\item Upon receiving $\bm{d}$ and $\bm{s}$ from Alice, $\mathcal{S}_A$ extracts $\bm{a}$ and $\bm{b}$ based on its knowledge of $(\bm{a}_0, \bm{b}_0)$ as follows:
				\begin{equation}
				\begin{split}
				\bm{a} &= \bm{b} + \bm{a}_0 \\
				\bm{b} &= \bm{s} - \bm{b}_0 - \bm{a}  *\bm{c}'.
				\end{split}
				\label{eqn:extract_1}
				\end{equation}

				\item Upon receiving $\bm{\kappa}$ from Alice, $\mathcal{S}_A$ extracts her inputs $(a,b)$ as follows:
				
				\begin{equation}
				\begin{split}
				a &= \bm{a} \cdot \bm{\kappa} \\
				b &= \bm{b} \cdot \bm{\kappa}.
				\end{split}
				\label{eqn:extract_2}
				\end{equation}

				\item Finally, $\mathcal{S}_A$ sends $(a,b)$ to the ideal functionality $\mathcal{F}_{\textbf{OLE}}$.
				
			\end{enumerate} 
		}%
	}
	\caption{Simulator $\mathcal{S}_A$ against  dishonest Alice.}
	\label{fig:simulator_dis_Alice}
\end{figure}
\begin{lemma}
	The protocol $\mathcal{\pi}_{\textbf{QOLE}}$ (Figure~\ref{fig:fullprotocol}) statistically quantum-UC realizes  (see Definition~\ref{def:statisticalquc}) $\mathcal{F}_{\textbf{OLE}}$ in the $\mathcal{F}_{\textbf{COM}}-$hybrid model in the case of dishonest Alice and honest Bob.
	\label{lemma:dishonestAlice}
\end{lemma}

\begin{proof}
	
	We start by presenting the simulator $\mathcal{S}_A$ for the case where Alice is dishonest in Figure~\ref{fig:simulator_dis_Alice}. Moreover, in Figure~\ref{fig:DANetworks}, we illustrate the corresponding real and ideal networks.

\begin{figure}[h]
	\centering
	\begin{minipage}[l]{0.5\linewidth}
		\includegraphics[scale=0.3]{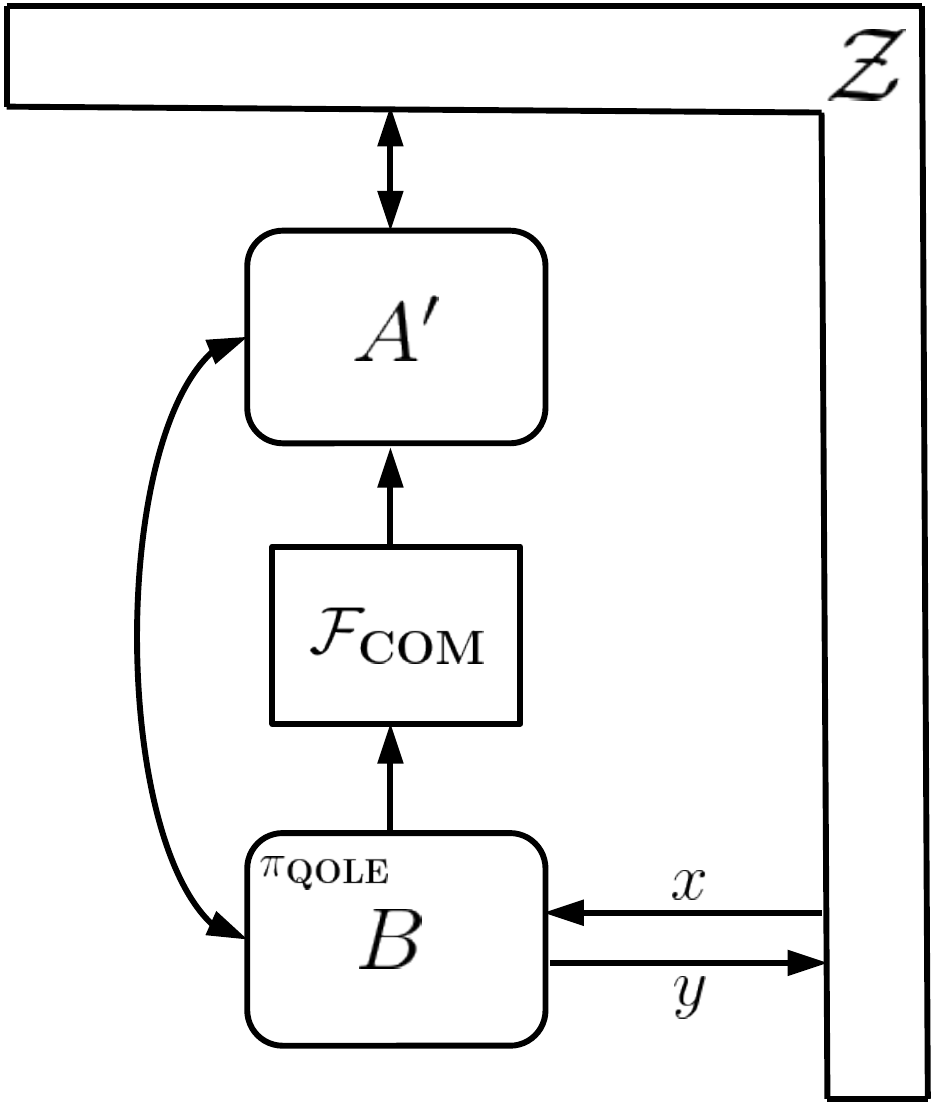}\end{minipage}\begin{minipage}[r]{0.5\linewidth}
		\includegraphics[scale=0.3]{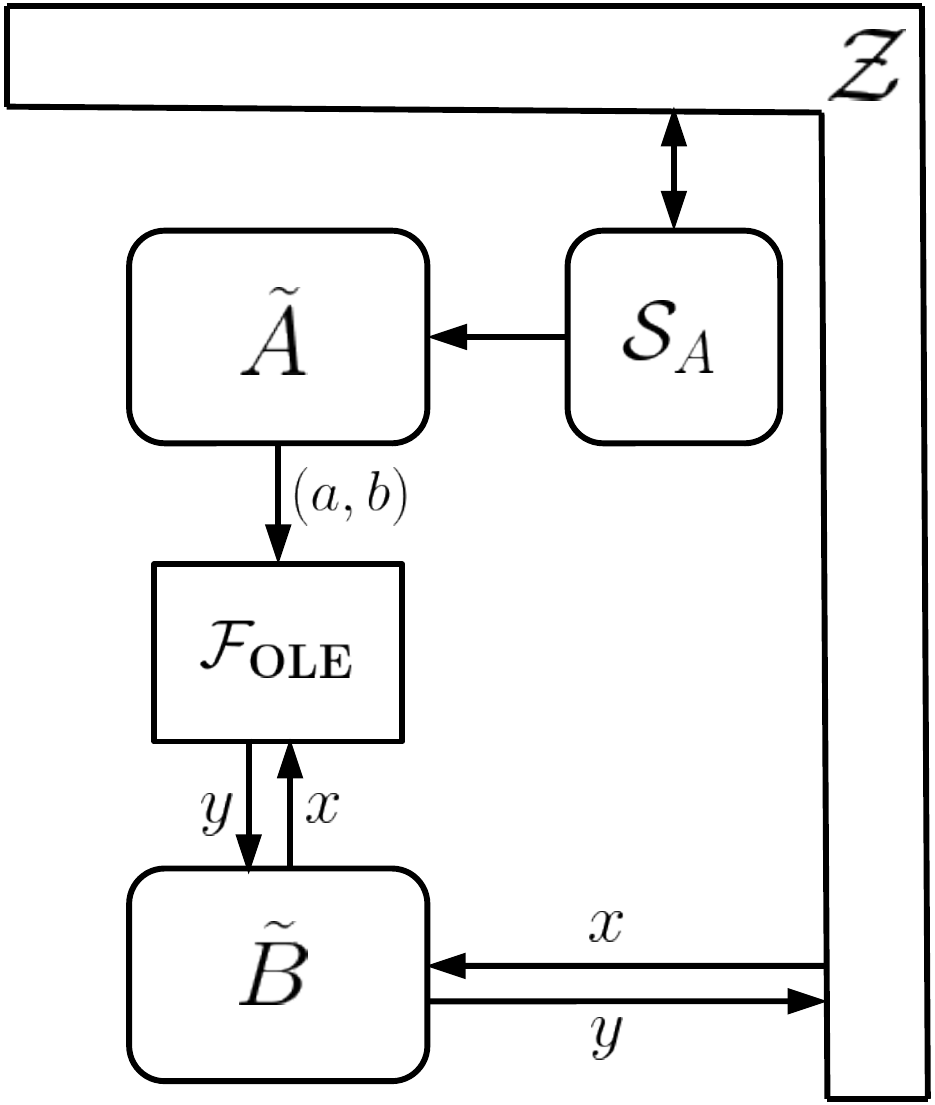}
	\end{minipage}
	\caption{The real and ideal networks in the case of dishonest Alice. In the real network,  Bob honestly follows the protocol $\pi_{\textbf{QOLE}}$, while dishonest Alice $A'$, can be thought as part of the environment $\mathcal{Z}$, since she is controlled by it. In the ideal network, $\tilde{A}$ and $\tilde{B}$ are the dummy parties for Alice and Bob, respectively.}
	\label{fig:DANetworks}
\end{figure}
	To prove statistical quantum-UC security according to Definition \ref{def:statisticalquc}, we first consider a sequence of hybrid protocols from $\mathsf{H}_0$ to $\mathsf{H}_4$. The first hybrid protocol, $\mathsf{H}_0$, is the real execution of the protocol $\mathcal{\pi}_{\textbf{QOLE}}$, and we gradually change it until obtaining the hybrid $\mathsf{H}_4$ which corresponds to the description of the simulator $\mathcal{S}_A$. By proving indistinguishaility of the hybrids throughout the sequence, we show statistical quantum-UC security for $\mathcal{\pi}_{\textbf{QOLE}}$ in the case of dishonest Alice.
	Let us know present the hybrids in detail:
	
	\
	
	\textbf{Hybrid $\mathsf{H}_0$:} This is the real execution of the protocol $\mathcal{\pi}_{\textbf{QOLE}}$.

\

	\textbf{Hybrid $\mathsf{H}_1$:} This hybrid is identical to  $\mathsf{H}_0$, except that we replace the $\mathcal{F}_{\textbf{COM}}$ with a fake commitment functionality, $\mathcal{F}_{\textbf{FakeCOM}}$, in which Bob, i.e. the honest party, can commit no value. This fake functionality works as follows: 
		\begin{itemize}
		\item Commitment phase: expects a \texttt{commit} message from Bob instead of (\texttt{commit, $x$}).
		\item Open phase: expects a message (\texttt{open}, $x$) (instead of open) and sends (\texttt{open}, $x$) to Alice.
	\end{itemize}
	Hybrids $\mathsf{H}_0$ and $ \mathsf{H}_1$ are perfectly indistinguishable, as the simulator still opens the commitments in the same way.
	
	\
	
		\textbf{Hybrid $\mathsf{H}_2$:} This hybrid is identical to $ \mathsf{H}_1$, except that now $\mathcal{S}_A$ prepares entangled states $\ket{B_{0,0}}_{Q_A Q_S}$ instead of $\ket{e^{x^0_i}_{r_i}}_{ i\in [m]}$, and sends the subsystem $Q_A$ to Alice. Additionally, upon receiving  the set of indices, $T$, from Alice, $\mathcal{S}_A$  measures the corresponding elements of subsystem $Q_S$ using $tn$ randomly chosen bases $x^0_i$ and provides $(\texttt{open}, (i, x^0_i, r_i))$ to  $\mathcal{F}_{\textbf{FakeCOM}},\ \forall i\in T$.

	\begin{claim}
		The hybrids $\mathsf{H}_1$ and $\mathsf{H}_2$ are indistinguishable.
		\label{claim:h1h2Alice}
	\end{claim}
	\begin{proof}
		From Alice's point of view, the state received  is exactly the same in both hybrids. In $\mathsf{H}_1$, since the elements $r$  are chosen randomly,
		\begin{align}
		\frac{1}{d}\sum_{r=0}^{d-1} \ketbra{e^{x^0}_{r}} = \frac{\mathds{1}_A}{d},
		\end{align}
		for each $x^0 = 0, \ldots, d-1$. In $\mathsf{H}_2$
		\begin{align}
		\Tr_{Q_S}{\ketbra{B_{0,0}}} = \frac{\mathds{1}_A}{d}.
		\end{align}
		Thus, the environment is not able to distinguish the two scenarios. Furthermore, upon Alice's request of the test set, $T$, the simulator  measures in random bases, $x^0_i$  for $i\in T$, the corresponding qudits of subsystem $Q_S$. Since both entangled qudits in $Q_A Q_S$ get projected to the some random state, $r_i$  for $i\in T$,  $\mathcal{F}_{\textbf{FakeCOM}}$  provides the correct pair $(x^0_i, r_i)_{ i\in T}$ to Alice. Hence, the hybrids $\mathsf{H}_1$ and $\mathsf{H}_2$ are indistinguishable.
	\end{proof}
	
	\
	
	\textbf{Hybrid $\mathsf{H}_3$:} This hybrid is identical to $\mathsf{H}_2$, except that now $\mathcal{S}_A$ extracts Alice's elements $\mathbf{F}_0 = (\bm{a}_0, \bm{b}_0)$  by applying a joint measurement on the systems $\hat{Q}_A Q_S$ in the generalized Bell basis.

	Hybrids $\mathsf{H}_2$ and $ \mathsf{H}_3$ are perfectly indistinguishable, as the simulator only changes the measurement basis for the received state and does not communicate with Alice.
	
	\
	
	\textbf{Hybrid $\mathsf{H}_4$:} This hybrid is identical to $\mathsf{H}_3$, except that now $\mathcal{S}_A$ generates $\bm{c}'$ uniformly at random. Additionally, upon receiving $\bm{d}$, $\bm{s}$ and $\bm{\kappa}$, the simulator extracts Alice's vectors $(\bm{a}, \bm{b})$ and inputs $(a, b)$ by computing expressions~(\ref{eqn:extract_1}) and (\ref{eqn:extract_2}). Finally, $\mathcal{S}_A$ sends $(a, b)$ to the ideal functionality $\mathcal{F}_{\textbf{OLE}}$. Hybrid $\mathsf{H}_4$ corresponds to the description of the simulator $\mathcal{S}_A$.

	Hybrids $\mathsf{H}_3$ and $ \mathsf{H}_4$ are perfectly indistinguishable for the following reasons: first, from the proof of Claim~\ref{claim:h1h2Alice}, we have that the vector $\bm{x}_0$ looks uniformly random to Alice, and consequently, so does $\bm{c}$. Second, the extraction operations do not require any interaction with Alice.
\end{proof}

We now proceed to the case  where Alice is honest and Bob is dishonest.
\begin{lemma}
	The protocol $\mathcal{\pi}_{\textbf{QOLE}}$ (Figure~\ref{fig:fullprotocol}) statistically quantum-UC realizes (Definition~\ref{def:statisticalquc}) $\mathcal{F}_{\textbf{OLE}}$ in the $\mathcal{F}_{\textbf{COM}}-$hybrid model in the case of honest Alice and dishonest Bob.
	\label{lemma:dishonestBob}
\end{lemma}
\begin{proof}
	We start by presenting the simulator $\mathcal{S}_B$ for the case where Bob is dishonest in Figure~\ref{fig:simulator_dis_Bob}. Moreover, in Figure~\ref{fig:DBNetworks}, we illustrate the corresponding real and ideal networks.
	\begin{figure}[h!]
		\centering
		\framebox[\linewidth][l]{%
			\parbox{0.95\linewidth}{%
				\begin{center}
					\textbf{Simulator $\mathcal{S}_{\text{B}}$}
				\end{center}
				
				\textit{(Quantum phase:)}
				
				\begin{enumerate}
					\item $\mathcal{S}_B$ receives the qudits from Bob and tests them as in the protocol $\mathcal{\pi}_{\textbf{QOLE}}$.
					\item $\mathcal{S}_B$ randomly chooses vectors $\bm{a}_0$ and $\bm{b}_0$ and applies  $V^{b^0_i}_{a^0_i}$, $i\in [n]$ to the received qudits.
					\item $\mathcal{S}_B$ extracts the input element $\bm{x}_0$  from  $\mathcal{F}_{\textbf{COM}}$.
				\end{enumerate}
				
				\textit{(Post-processing phase:)}
				
				\begin{enumerate}
					\setcounter{enumi}{3} 
					\item Upon receiving $\bm{c}$ from Bob, $\mathcal{S}_B$ extracts his input $x$ as  $\bm{x}=\bm{c} + \bm{x}_0$.
					\item $\mathcal{S}_B$ sends $x$ to  $\mathcal{F}_{\textbf{OLE}}$ and receives $y$.
					\item $\mathcal{S}_B$ randomly generates the elements $a'\leftarrow_{\$} \mathbb{Z}_d$, $\bm{\kappa}\leftarrow_{\$} \mathbb{Z}^n_d$ and $a_2, \ldots, a_n, b_2, \ldots, b_n\leftarrow_{\$}\mathbb{Z}_d$. 
					\item $\mathcal{S}_B$ computes $b' = a'x - y$, $a_1 = \big(a' - \sum_{i=2}^{n} a_i \kappa_i\big)/\kappa_1$ and $b_1 = \big(b' - \sum_{i=2}^{n} b_i \kappa_i\big)/\kappa_1$.
					\item  $\mathcal{S}_B$ sends $\bm{d} = \bm{a} - \bm{a}_0$, $\bm{s} = \bm{b}_0 + \bm{a} * \bm{c} + \bm{b}$ and $\bm{\kappa}$ to Bob.

				\end{enumerate} 
			}%
		}
		\caption{Simulator $\mathcal{S}_B$ against dishonest Bob.}
		\label{fig:simulator_dis_Bob}
	\end{figure}

	Then, just like in the previous case for dishonest Alice, we consider the respective sequence of hybrid protocols, from $\mathsf{H}_0$ (execution of the protocol) to $\mathsf{H}_2$ (simulator $\mathcal{S}_B$), and prove that they are indistinguishable in the case of dishonest Bob.
	
\textbf{Hybrid $\mathsf{H}_0$:} This is the execution of the real protocol $\mathcal{\pi}_{\textbf{QOLE}}$. In this hybrid, $\mathcal{S}_B$ behaves just like honest Alice up to step 6 of $\mathcal{\pi}_{\textbf{QOLE}}$:  tests the received qudits  (steps 1-4), randomly generates $n$ pairs $(a^0_i, b^0_i)_{ i\in [n]}$ (step 5), and applies the respective operators $V^{b^0_i}_{a^0_i}$ ${ \text{ for } i\in [n]}$ to the received states (step 6).
	\begin{figure}[h!]
		\centering
		\begin{minipage}[l]{0.5\linewidth}
			\includegraphics[scale=0.3]{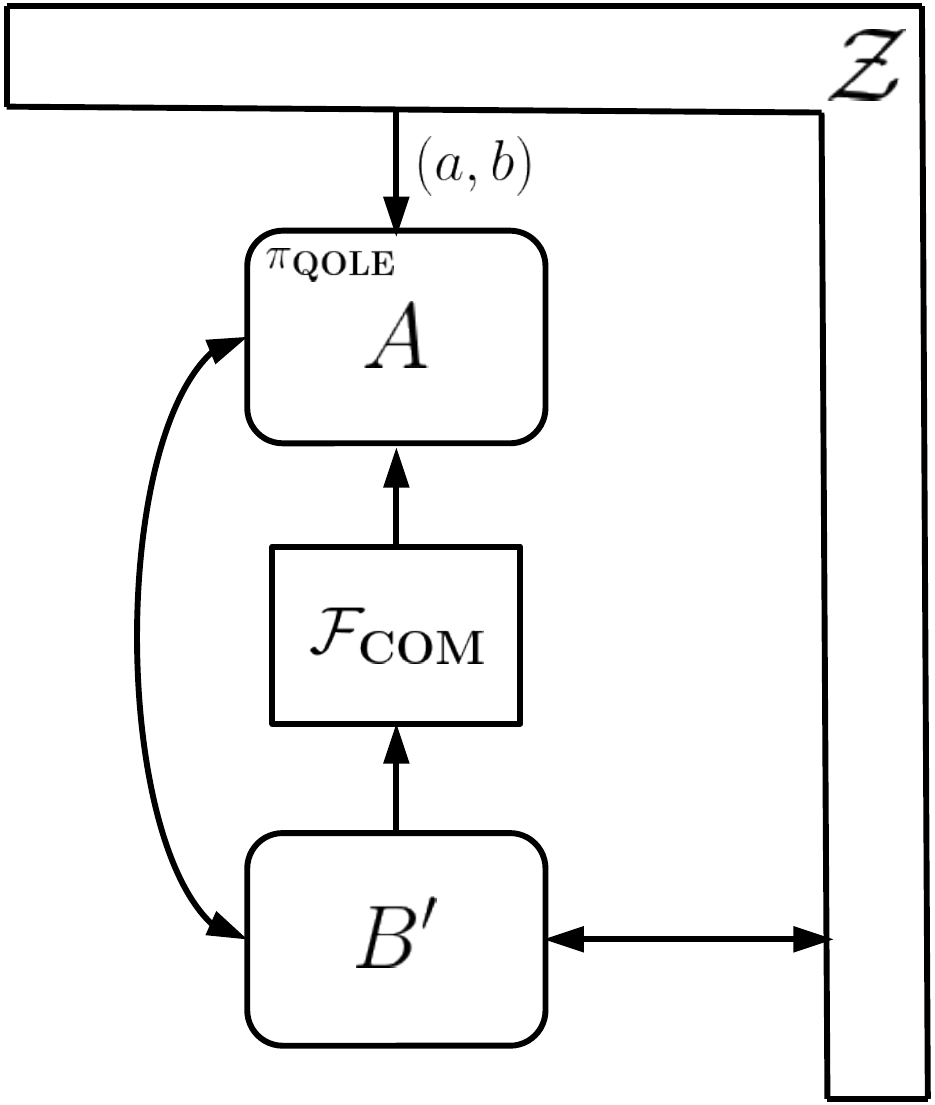}\end{minipage}\begin{minipage}[r]{0.5\linewidth}
			\includegraphics[scale=0.3]{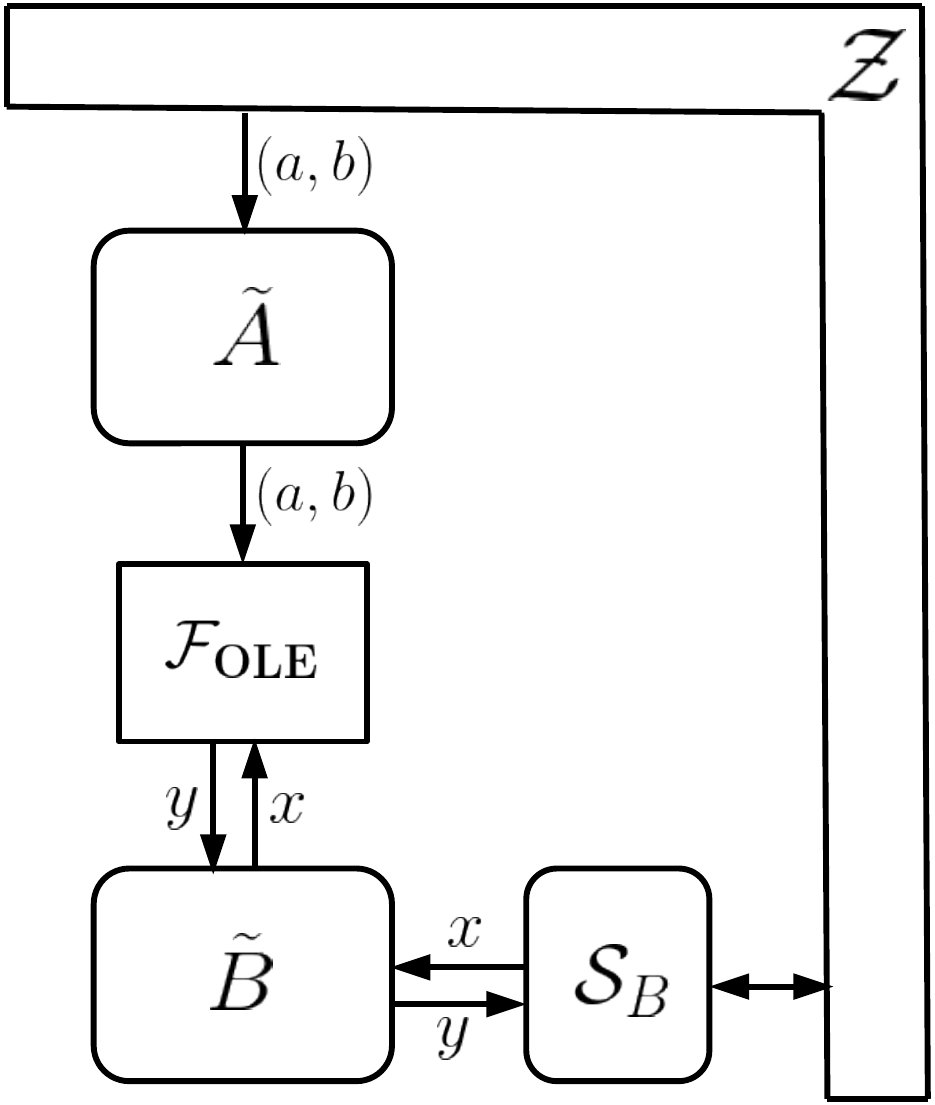}
		\end{minipage}
		\caption{{The real and ideal networks in the case of dishonest Bob. In the real network,  Alice honestly follows the protocol $\pi_{\textbf{QOLE}}$, while dishonest Bob $B'$, can be thought as part the environment $\mathcal{Z}$, since he is controlled by it. In the ideal network, $\tilde{A}$ and $\tilde{B}$ are the dummy parties for Alice and Bob, respectively.} }
		\label{fig:DBNetworks}
	\end{figure}

		\textbf{Hybrid $\mathsf{H}_1$:} This hybrid is identical to $\mathsf{H}_0$, except that now $\mathcal{S}_B$ extracts Bob's random vector $\bm{x}_0$ from the commitment functionality $\mathcal{F}_{\textbf{COM}}$. Additionally, upon receiving $\bm{c}$ from Bob, $\mathcal{S}_B$ extracts Bob's input $x$ by computing $\bm{c} + \bm{x}_0$. Then,  $\mathcal{S}_B$ sends the extracted element $x$ to  $\mathcal{F}_{\textbf{OLE}}$ and receives $y$.

	Hybrids $\mathsf{H}_0$ and $\mathsf{H}_1$ are perfectly indistinguishable, because $\mathcal{S}_B$ only interacts with Bob when receiving the element $\bm{c}$, and this does not change anything from Bob's point of view. The corresponding operations are either carried out locally by $\mathcal{S}_B$ or along with  $\mathcal{F}_{\textbf{COM}}$ which, by definition, is fully controlled by $\mathcal{S}_B$.
	
	\
	
	\textbf{Hybrid $\mathsf{H}_2$:} This hybrid is identical to $\mathsf{H}_1$,  except that now $\mathcal{S_B}$ generates $(a, b)$, $\bm{d}$ and $\bm{s}$: it starts by randomly generating $a'\leftarrow_{\$} \mathbb{Z}_d$, $\bm{\kappa}\leftarrow_{\$} \mathbb{Z}^n_d$ and $a_2, \ldots, a_n, b_2, \ldots, b_n\leftarrow_{\$}\mathbb{Z}_d$. Then, it computes  $b'$  according to the generated $a'$, the extracted element $x$ and the output $y$ of $\mathcal{F}_{\textbf{OLE}}$, as $b' = a'x - y$. It  masks $a'$ and $b'$ as         \begin{align}
	a' = \bm{a} \cdot \bm{\kappa} \ \ \ \ \ \text{ and }\ \ \ \ \ 
	b' = \bm{b} \cdot \bm{\kappa}, 
	\end{align}
	by setting $a_1$ and $b_1$ accordingly, i.e. $a_1 = \big(a' - \sum_{i=2}^{n} a_i \kappa_i\big)/\kappa_1$ and $b_1 = \big(b' - \sum_{i=2}^{n} b_i \kappa_i\big)/\kappa_1$. Finally, $\mathcal{S}_B$ sends $\bm{d} = \bm{a} - \bm{a}_0$, $\bm{s} = \bm{b}_0 + \bm{a}  * \bm{c} + \bm{b}$ and $\bm{\kappa}$ to Bob. This is the last hybrid of the sequence and corresponds to the description of $\mathcal{S}_B$.

	\begin{claim}
		The hybrids $\mathsf{H}_1$ and $\mathsf{H}_2$ are indistinguishable.
		\label{claim:h1h2Bob}
	\end{claim}
	\begin{proof}
			Since, in its first two steps, $\mathcal{S}_B$ executes a RWOLE scheme, according to Lemma~\ref{lemma:wrole_dishonest_bob} we have that $\mathcal{S}_B$ is $\epsilon-$close to a situation where Bob's knowledge on the vectors $(\bm{a}_0, \bm{b}_0)$ is lower-bounded by the value
			\begin{align}
		\frac{1}{n}\lambda(\zeta) = \frac{\log d}{2}(1-h_d(\zeta)),
		\end{align}
		for $\zeta\in\,]0, 1-\frac{1}{d}]$, $n$ the security parameter and $\epsilon(\zeta, n) = \exp( -\frac{2 \zeta^2t^2n^2}{(nt+1)(t+1)})$. Also, as Bob receives $\bm{d}$ and $\bm{s}$, according to Lemma~\ref{lemma:wole_bob_dishonest}, his knowledge on $(\bm{a}, \bm{b})$ is also lower-bounded by the same $\lambda(\zeta)/n$. Furthermore, since $\mathcal{S}_B$ defines $\bm{a}$ such that $a' = \bm{a} \cdot \bm{\kappa}$, from Lemma~\ref{lemma:extraction} we can conclude that $\mathcal{S}_B$ is $(\xi + \epsilon)-$close to a scenario where $a'$ is uniformly distributed. This comes from the properties in Lemma~\ref{lemma:extraction} and the triangle inequality:
		\begin{align}
			\delta( \tau_{\mathbb{Z}_d} \otimes \sigma_{B'},\, \rho_{F_a B'} )  &\leq\delta( \tau_{\mathbb{Z}_d} \otimes \sigma_{B'},\, \sigma_{F_{a} B'} ) + \delta( \sigma_{F_{a} B'},\, \rho_{F_a B'} )\nonumber \\
			& \leq K\, 2^{-n \, f_d(\zeta)} + e^{ -\frac{2 \zeta^2t^2n^2}{(nt+1)(t+1)}}= \xi + \epsilon,
		\end{align}
		where $K = \frac{\sqrt{d}}{2}$, $f_d(\zeta) = \frac{\log d}{4} (1-h_d(\zeta))$. This means that the triple $(\bm{d}, \bm{s}, \bm{\kappa})$ only gives to the environment a negligible advantage in distinguishing between the real and ideal world executions.
	\end{proof}
\end{proof}

\section{Protocol Generalizations}
\label{sec:protgeneral}

\subsection{QOLE in Galois fields  of prime-power dimensions} \label{subsec:galois_ext}

So far, we have been working in Hilbert spaces of prime dimensions; this reflects the fact that, for prime $d$,  $\mathbb{Z}_d$ is a field  and, under a well-defined set of MUBs $\{ \ket{e^x_r} \}_{r\in\mathbb{Z}_d},\ \forall x\in\mathbb{Z}_d$, we have the affine relation \eqref{eq:main_ingredient}:
\begin{align}
V^b_a \ket{e^x_r} = c_{a, b, x, r} \ket{e^x_{ax - b +r}}.
\end{align}

In this section, we generalize our protocol, $\pi_{\textbf{QOLE}}$, to Hilbert spaces of prime-power dimensions, $N=d^M$ ($d$ prime and $M>1$), taking advantage  of the fact that in a Galois field of dimension $d^M$, $GF(d^M)$, we can build a complete set of $N + 1$ MUBs \cite{DEBZ10}.

Succinctly, in $GF(d^M)$, we identify the integers $i\in\mathbb{Z}_N$ with their $d-$ary representation, i.e.
\begin{align}
\mathbb{Z}_N \ni i = \sum^{M-1}_{n=0} i_n d^n \, \longleftrightarrow \, (i_0, \dots , i_{M-1}) \in GF(d^M).
\end{align}
In these fields there are  two operations, addition and multiplication, which we denote by $\oplus$ and $\odot$, respectively. Addition  is straightforward, as it is given by the component-wise addition modulo $d$ of elements, i.e. $i\oplus j = (i_0 + j_0 \mod d, \dots , i_{M-1} + j_{M-1} \mod d)$.  Considering $i = \sum^{M-1}_{n=0} i_n d^n $ as a polynomial of degree $M-1$ given by $i(p) = \sum^{M-1}_{n=0} i_n p^n$,  multiplication between two elements $i, j$, is given by the multiplication between the corresponding polynomials $i(p)$ and $j(p)$ modulo some irreducible polynomial $m(p)$, i.e. $i\odot j = \big(i(p)\times j(p)\big) \mod m(p)$. 

Analogously to prime-dimension fields, we can write the operators $V_a^b$ in the computational basis, as  \begin{align}
V^b_a = \sum_{k=0}^{N-1} \ket{k \oplus a} \omega^{(k \oplus a)\odot b} \bra{k},
\end{align}
and the eigenstates for the corresponding $N+1$ pairwise MUBs, as 
\begin{align}
\ket{e^x_r} = \frac{1}{\sqrt{N}} \sum_{l=0}^{N-1}\ket{l} \omega^{\ominus(r \odot l)} \alpha^{x*}_{\ominus l},
\end{align}
where $\alpha^{x}_{\ominus l}$ is a phase factor whose form depends on whether $d$ is even or odd. For details, see Section 2.4.2 in \cite{DEBZ10}.

Given the above, we can derive (Appendix \ref{app:generalrelation}) the following affine relation similar to \eqref{eq:main_ingredient}:
\begin{align}
V^b_a\ket{e^i_r} = \omega^{r\odot a} \alpha^{i*}_a \ket{e^i_{i\odot a\ominus b\oplus r}}.
\label{eq:general_relation}
\end{align}
Notice that all the steps in the $\pi_{\textbf{QOLE}}$ depend on the properties of the field operations and on the fact that  \eqref{eq:main_ingredient} holds. Hence, we can use  $\pi_{\textbf{QOLE}}$ adapted for the operations $\oplus$ and $\odot$, in order to quantum-UC-realize $\mathcal{F}_{\textbf{OLE}}$ in  fields of prime-power dimension $d^M$.

\subsection{Quantum Vector OLE}
\label{subsec:qvole}
In the proposed protocol $\pi_{\textbf{QOLE}}$, we extract one instance of OLE out of $n$ instances of WOLE. As far as efficiency is concerned, it would be desirable to generate more instances of OLE out of those $n$ instances of WOLE. Here, we show how to use WOLE as a resource to realize the VOLE functionality, $\mathcal{F}_{\textbf{VOLE}}^k$, presented in Figure \ref{fig:func_vole}. In this case, Alice  fixes a $k$ (which is specified later), defines a set of $k$ linear functions $(\bm{a}, \bm{b})\in\mathbb{F}^k_q\times\mathbb{F}^k_q$ and Bob outputs the evaluation of all these functions on a specified element $x\in\mathbb{F}_q$ that he chooses, i.e. $\bm{f}:=\bm{a} x+ \bm{b}$. Since  $\pi_{\textbf{QOLE}}$ can be extended to finite fields $\mathbb{F}_q$, where $q$ is a prime or prime-power number (as shown above), the $\mathcal{F}_{\textbf{VOLE}}^k$ functionality can also be defined in $\mathbb{F}_q$. 

In the extraction phase of $\pi_{\textbf{QOLE}}$, Alice randomly chooses a function $g_{\bm{\kappa}}$ and applies it to the pair $(\bm{a},\bm{b})$. This procedure suggests that, in order to generate different input elements $(a', b')$, Alice can randomly choose another function $g_{\bm{\kappa'}}$ and set $a' = g_{\bm{\kappa'}}(\bm{a})$ and $b' = g_{\bm{\kappa'}}(\bm{b})$. This is equivalent to generating a random $2\times n$ matrix in  $\mathbb{F}_q$, i.e.
\begin{align}
\left[
\begin{array}{ccc}
\horzbar & \bm{\kappa} & \horzbar \\
\horzbar & \bm{\kappa'} & \horzbar \\
\end{array}
\right] \left[
\begin{array}{cc}
\vertbar & \vertbar  \\
\bm{a} & \bm{b} \\
\vertbar  & \vertbar \\
\end{array}
\right]  = \left[
\begin{array}{cc}
a & b \\
a' & b'
\end{array}
\right].
\end{align}
However, in case $\bm{\kappa}$ and $\bm{\kappa'}$ are linearly dependent (i.e. $\bm{\kappa} = c\bm{\kappa'}$ for some $c\in\mathbb{Z}_d$), Bob would have some extra information about Alice's elements $(a,b)$ and $(a', b')$, as $(a,b) = c( a', b')$. This leads to a situation beyond the  $\mathcal{F}_{\textbf{VOLE}}^k$ definition. 
To avoid this issue, let us consider the set of $k\times n$ matrices with rank $k$ over $\mathbb{F}_q$ for $1\leq k\leq n$, and denote it by $\mathcal{R}_{k\times n}(\mathbb{F}_q)$. For a binary finite field,  $\mathcal{R}_{k\times n}(\mathbb{F}_2)$ is a two-universal hash family \cite{D21, CW79}. Similarly, one can prove that the more general set $\mathcal{R}_{k\times n}(\mathbb{F}_q)$ is also a two-universal hash family from $\mathbb{F}_q^n$ to $\mathbb{F}_q^k$. During the extraction phase of the original $\pi_{\textbf{QOLE}}$, Alice chooses vectors $(\bm{a}, \bm{b})$ according to the random vector $\bm{\kappa}$ and the desired final elements $(a,b)$ (see step 9 in Figure \ref{fig:fullprotocol}). In that case, since there is only one random vector $\bm{\kappa}$, there are $n-1$ undefined variables for each vector $\bm{a}$ and $\bm{b}$, i.e. $a_2, \ldots, a_n$ and $b_2, \ldots, b_n$ that can be chosen freely. For the VOLE protocol, instead of choosing just one vector $\bm{\kappa}$, Alice randomly chooses a  matrix $\mathcal{K}\in\mathcal{R}_{k\times n}(\mathbb{F}_q)$ of rank $k$. She then defines vectors $(\bm{a}', \bm{b}')\in \mathbb{F}^n_q\times \mathbb{F}^n_q$ consistent with the final elements $(\bm{a}, \bm{b})\in \mathbb{F}^k_q\times \mathbb{F}^k_q$. That is, Alice has the following system:
\begin{align}
	&\left[
	\begin{array}{ccc}
		\bighorzbar & \bm{\kappa}_1 & \bighorzbar \\
		& \ldots & \\
		\bighorzbar & \bm{\kappa}_k & \bighorzbar \\
	\end{array}
	\right] \left[
	\begin{array}{cc}
		\bigvertbar & \bigvertbar  \\
		\bm{a'} & \bm{b'} \\
		\bigvertbar  & \bigvertbar \\
	\end{array}
	\right]  = \left[
	\begin{array}{cc}
		\vertbar & \vertbar\\
		\bm{a} & \bm{b}\\
		\vertbar & \vertbar\\
	\end{array}
	\right],
\end{align}
that can be solved by means of the Gaussian elimination method. Since $\mathcal{K}\in \mathcal{R}_{k\times n}(\mathbb{F}_q)$, there will be $n-k$ undefined variables in both vectors $\bm{a}'$ and $\bm{b}'$. Let $U$ denote the set of undefined indexes in $\bm{a}'$ and $\bm{b}'$. Alice randomly chooses $a'_i$ and $b'_i$ for $i\in U$ and solves the above equation system. Then, they proceed similarly to the original $\pi_{\textbf{QOLE}}$ and execute the derandomization protocol $\pi^n_{\textbf{WOLE}}((\bm{a}',\bm{b}'), \bm{x})$. Finally, Bob applies Alice's chosen matrix $\mathcal{K}$ to his output vector $\bm{y}'$ to get the final element $\bm{y}$. This vectorized extraction protocol $\pi_{\textbf{VEXT}}$ is presented in Figure~\ref{fig:vext}. 
\begin{figure}[h!]
	\centering
	\framebox[\linewidth][l]{%
		\parbox{0.95\linewidth}{%
			\begin{center}
				\textbf{Protocol $\mathcal{\pi}_{\textbf{VEXT}}$}
			\end{center}
			
			\textbf{Alice's input:} $(\bm{a}, \bm{b}) \in \mathbb{F}^k_q\times \mathbb{F}^k_q$;	\textbf{Bob's input:} $x\in \mathbb{F}_q$
			
			\begin{enumerate}
				\item Alice chooses randomly a matrix $\mathcal{K} \in \mathcal{R}_{k\times n}(\mathbb{F}_q)$ and sends it to Bob.
				\item Using the Gaussian elimination method, Alice finds one solution of the system:
								\begin{align}
					\mathcal{K} \left[
					\begin{array}{cc}
						\bigvertbar & \bigvertbar  \\
						\bm{a'} & \bm{b'} \\
						\bigvertbar  & \bigvertbar \\
					\end{array}
					\right]  = \left[
					\begin{array}{cc}
						\vertbar & \vertbar\\
						\bm{a} & \bm{b}\\
						\vertbar & \vertbar\\
					\end{array}
					\right]
					\end{align}
				
				\begin{enumerate}
					\item Alice finds the set $U$ of undefined indexes in $\bm{a}'$ and $\bm{b}'$.
					\item Alice randomly generates $a'_i, b'_i \leftarrow_{\$}\mathbb{F}_q$ for $i\in U$.
					\item Alice solves the system for indexes $i\notin U$.
				\end{enumerate}
				\item Alice and Bob run $\mathcal{\pi}^n_{\textbf{WOLE}}((\bm{a}', \bm{b}'), \bm{x})$, where $\bm{x} = (x, \ldots, x)$. Bob outputs  $\bm{y}'\in \mathbb{F}_q^n$.
				
				\item Bob computes $\bm{y} = \mathcal{K}\bm{y}'$.
			\end{enumerate}
			
			\textbf{Alice's output:} $\bot$; \textbf{Bob's output:} $\bm{y}\in \mathbb{F}^k_q$
			}
	}
	\caption{Extraction protocol  for VOLE.}
	\label{fig:vext}
\end{figure}

The correctness of the protocol is drawn immediately from linearity:
\begin{align}
	\bm{y} = \mathcal{K} \bm{y}'
	= \mathcal{K}(\bm{a}'x + \bm{b}')
	= \bm{a}x + \bm{b}.
\end{align}
The security of the protocol is constrained by  $\xi = \frac{1}{2}\sqrt{2^{k\log q - H_\text{min}(X|E)}}$ given by Lemma~\ref{lem:leftover}, where we consider $l$ to be $k$ and $d$ to be $q$. As before, $\mathbf{F}_{\bm{a}'}$ denotes the distribution of the $\pi^n_{\textbf{WOLE}}$ protocol's input ${\bm{a}'}$ from Bob's perspective. From Lemma~\ref{lem:leftover}, since $\mathcal{R}_{k\times n}(\mathbb{F}_q)$ is a two-universal family of hash functions, we know that $\mathcal{K}\in \mathcal{R}_{k\times n}(\mathbb{F}_q)$ approximates $\mathcal{K}\mathbf{F}_{\bm{a}'} = \mathbf{F}_{\bm{a}}$ to uniform conditioned on Bob's side information. However, the closeness parameter $\xi$, has to be negligible in the security parameter $n$, thus setting a bound on $k$ (the size of VOLE), i.e. for $\eta > 0$,
\begin{align}
	k \log q - \frac{n \log q}{2}\big(1-h_{q}(\zeta)\big) & < -n \eta \log q\\
	k& < n \Big(\frac{1}{2}\big(1-h_q(\zeta)\big) - \eta\Big).
\end{align}
Since $k>0$, we have that  $ 0 < \eta < \frac{1}{2}\big(1-h_q(\zeta)\big)$. This gives a bound on the elements that we can extract from $n$ WOLEs  and  shows how Alice can fix $k$ in the beginning.  Note that this bound is not necessarily optimal, and one could try to improve it. We leave this as future work, as it goes beyond the scope of this paper.

Let us denote by $\mathcal{\pi}_{\textbf{QVOLE}}$ the protocol $\mathcal{\pi}_{\textbf{QOLE}}$ with the subprotocol $\mathcal{\pi}_{\textbf{VEXT}}$ instead of $\mathcal{\pi}_{\textbf{EXT}}$. For the security of $\mathcal{\pi}_{\textbf{QVOLE}}$, we have:

\begin{theorem}[quantum-UC security of $\mathcal{\pi}_{\textbf{QVOLE}}$]
	
	The protocol $\mathcal{\pi}_{\textbf{QOVLE}}$ statistically quantum-UC realizes  (see Definition~\ref{def:statisticalquc}) $\mathcal{F}_{\textbf{VOLE}}^k$ in the $\mathcal{F}_{\textbf{COM}}-$hybrid model.
	\label{thm:QUC-VOLE}
\end{theorem}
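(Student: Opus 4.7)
The plan is to adapt the proof of Theorem~\ref{thm:QUC} to the vectorized setting, splitting the argument into the two cases of dishonest Alice and dishonest Bob, and exhibiting simulators $\mathcal{S}_A^{\text{VOLE}}$ and $\mathcal{S}_B^{\text{VOLE}}$ together with sequences of hybrids. The quantum phase (commitments, Bell-test, application of $V^{b_i^0}_{a_i^0}$) and the derandomization subprotocol $\mathcal{\pi}^n_{\textbf{WOLE}}$ of $\mathcal{\pi}_{\textbf{QVOLE}}$ are identical to those of $\mathcal{\pi}_{\textbf{QOLE}}$, so all hybrids and claims in Lemmas~\ref{lemma:dishonestAlice} and~\ref{lemma:dishonestBob} that do not mention the hash $g_{\bm{\kappa}}$ transport verbatim. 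Only the extraction step, which now uses a uniformly random full-rank matrix $\mathcal{K}\in\mathcal{R}_{k\times n}(\mathbb{F}_q)$ instead of a vector $\bm{\kappa}\in\mathbb{Z}_d^n$, requires a fresh treatment.

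For dishonest Alice I would reuse hybrids $\mathcal{H}_0$--$\mathcal{H}_3$ of Lemma~\ref{lemma:dishonestAlice} unchanged: $\mathcal{S}_A^{\text{VOLE}}$ replaces $\mathcal{F}_{\textbf{COM}}$ by $\mathcal{F}_{\textbf{FakeCOM}}$, distributes the second half of Bell pairs $\ket{B_{0,0}}_{Q_A Q_S}$ to Alice, opens the test commitments consistently by measuring $Q_S$ in randomly chosen bases, and Bell-measures the returned system $\hat{Q}_A Q_S$ to extract $(\bm{a}_0, \bm{b}_0)$. In $\mathcal{H}_4$ the simulator additionally samples $\bm{c}'$ uniformly, extracts $(\bm{a}', \bm{b}')$ from $(\bm{d}, \bm{s}, \bm{c}')$ through the same linear inversion as in~\eqref{eqn:extract_1}, and on receipt of $\mathcal{K}$ sets $\bm{a} = \mathcal{K}\bm{a}'$, $\bm{b} = \mathcal{K}\bm{b}'$ and forwards $(\bm{a}, \bm{b}) \in \mathbb{F}_q^k\times\mathbb{F}_q^k$ to $\mathcal{F}_{\textbf{VOLE}}^k$. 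Perfect indistinguishability of the hybrids follows by the same reduction: Alice's view is independent of $\bm{\theta}$ (as in Claim~\ref{claim:h1h2Alice}), so replacing $\bm{c}$ by a uniform $\bm{c}'$ and deferring the commitment openings is undetectable, and the additional linear post-processing is purely local on the simulator's side.

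For dishonest Bob, the simulator $\mathcal{S}_B^{\text{VOLE}}$ mirrors $\mathcal{S}_B$ up to the extraction step: it runs the quantum and test phases as honest Alice, samples $(\bm{a}_0,\bm{b}_0)$ and applies $V^{b^0_i}_{a^0_i}$, extracts $\bm{\theta}$ from $\mathcal{F}_{\textbf{COM}}$, reads $x$ from $\bm{c}+\bm{\theta}$ (checking that this vector is constant, aborting otherwise, which is consistent with the honest Alice's view), and queries $\mathcal{F}_{\textbf{VOLE}}^k$ on $x$ to obtain $\bm{y}\in\mathbb{F}_q^k$. It then samples $\mathcal{K}\leftarrow_\$\mathcal{R}_{k\times n}(\mathbb{F}_q)$, a uniform $\bm{a}\in\mathbb{F}_q^k$, and defines $\bm{b} = \bm{a}x - \bm{y}$ in the sign convention of the protocol; afterwards it runs $\mathcal{\pi}_{\textbf{VEXT}}$ in reverse, sampling the $2|U|$ free entries of $(\bm{a}',\bm{b}')$ uniformly and solving the Gaussian system $\mathcal{K}\,[\bm{a}'\ \bm{b}'] = [\bm{a}\ \bm{b}]$ on the pivot indices, before sending $\bm{d} = \bm{a}'-\bm{a}_0$, $\bm{s} = \bm{b}_0 + \bm{a}'*\bm{c} + \bm{b}'$, and $\mathcal{K}$ to Bob.

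The main obstacle, and the only step where the analysis diverges from Lemma~\ref{lemma:dishonestBob}, is the analog of Claim~\ref{claim:h1h2Bob}: I must show that the programmed $\bm{a} = \mathcal{K}\bm{a}'$ is statistically close to uniform on $\mathbb{F}_q^k$ conditioned on Bob's (possibly quantum) side information, so that substituting the ideal output $\bm{y}$ for the one computed from honest $(\bm{a},\bm{b})$ is undetectable. Since $\mathcal{R}_{k\times n}(\mathbb{F}_q)$ is two-universal, Lemma~\ref{lem:leftover} applied with output length $k\log q$ yields closeness
\begin{equation*}
\xi = \tfrac{1}{2}\sqrt{2^{k\log q - H_{\min}(\mathbf{G}_{\bm{a}'}\mid B')}}.
\end{equation*}
The same conditioning trick used in Lemma~\ref{lemma:extraction}, namely $\mathbf{G}_{\bm{a}'} = (\mathbf{Y} - \mathbf{G}_{\bm{b}'})/\mathbf{X}$ modulo the freedom in $\bm{b}'$, combined with the min-entropy bound from Theorem~\ref{thm:proof_wole}, gives $H_{\min}(\mathbf{G}_{\bm{a}'}\mid B') \geq \tfrac{n\log q}{2}(1 - h_q(\phi))$, so $\xi$ is negligible in $n$ exactly under the bound $k < n\bigl(\tfrac{1}{2}(1-h_q(\phi))-\eta\bigr)$ derived in Section~\ref{sec:VOLE}. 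The triangle inequality then yields total distinguishing advantage at most $\xi + \epsilon$ with $\epsilon$ as in~\eqref{eq:epsilon}, which is negligible in the security parameter, establishing the claimed statistical quantum-UC security.
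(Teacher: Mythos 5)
The paper omits this proof entirely, stating only that it is ``much the same as the proof of Theorem~\ref{thm:QUC}''; your proposal is a correct and faithful elaboration of exactly that adaptation, reusing the simulators and hybrids of Lemmas~\ref{lemma:dishonestAlice} and~\ref{lemma:dishonestBob} and replacing the scalar hashing step by the two-universal family $\mathcal{R}_{k\times n}(\mathbb{F}_q)$ with the Leftover Hash Lemma applied at output length $k\log q$, under the bound on $k$ from Section~\ref{sec:VOLE}. This matches the paper's intended argument, so nothing further is needed.
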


The proof is much the same as the proof of Theorem~\ref{thm:QUC}, therefore we omit it.

\section{Outlook and further work} 
\label{sec:outlook}

OLE is an important primitive for secure two-party computation, and while for other primitives such as bit commitment, OT  and coin flipping there is a plethora of both theoretical as well as  concrete protocol proposals \cite{M05, MTVUZ05, BBBG09, BBBGST11, KWW12, NJCKW12, KC13, PJLCLTKD14, LAAPMP14, LAPPP16, ARW19, BCKD20,  ARV21,  SMP22}, up until now, there was no OLE protocol based on quantum communication. In this work, we present two protocols for QOLE. The first protocol is secure against semi-honest adversaries in the static corruption setting. The second proposed protocol, $\pi_{\textbf{QOLE}}$, builds upon the semi-honest version and extends it to the dishonest case, following a commit-and-open approach. We prove this second protocol to be secure in the quantum-UC framework when assuming ideal commitments, making it possible to be composed in any arbitrary way. We also constructed two generalizations of our protocol: the first achieves QOLE in Galois fields of prime-power dimensions and the second is a protocol for quantum vector OLE.
Note that our protocol achieves everlasting security, i.e. it remains information-theoretically secure after its execution, even if the dishonest party becomes more powerful in the future.

While the security of $\pi_{\textbf{QOLE}}$ is thoroughly analyzed, this is done in the noiseless case. Proving security assuming the existence of noise should follow a similar reasoning (see also \cite{DFLSS09}). Indeed, in the presence of noise, in Step 4 in the quantum phase of $\pi_{\textbf{QOLE}}$ (Figure~\ref{fig:fullprotocol}), Alice should abort the protocol if the errors measured ($err$) exceed some predetermined value $\nu$, that is assumed to be due to noise. This way, the error parameter is $err = \nu + \zeta '$, where $\zeta '$ accounts for the activity of a dishonest Bob. Naturally, this will decrease the bound on the min-entropy of Alice's functions $\textbf{F}$ given Bob's side information, i.e.
 \begin{align}
 H_{\min}(\mathbf{F} | \mathbf{Y} E)_{\sigma_{\mathbf{F}\mathbf{Y} E}} \geq \frac{n\log d}{2}\left(1 - h_d(\nu + \zeta ' + \zeta)\right).
 \end{align}
Although it is possible to generalize the security results to the setting of noisy quantum communication, it is not guaranteed that the proposed protocol retains correctness. Therefore, as future work, it would be useful to propose  specific implementations where noise is taken into account and its effect on the correctness and security is studied. Furthermore, the protocol's resilience to practical attacks \cite{BCDP21} can be investigated.

Following a different assumption model, the $\pi_{\textbf{QOLE}}$ protocol can be easily adapted to remain secure in the bounded-quantum-storage model. In this adapted version, the test phase of $\pi^n_{\textbf{RWOLE}}$ is simply substituted by a waiting time $\Delta t$. This ensures that Bob is only able to store a noisy or limited amount of qudits. It would be interesting to explore how different noisy channels affect the security properties. Also, to guarantee the composability of the protocol, an analysis in the bounded-quantum-storage-UC model as put forth by Unruh \cite{U11} can be performed.

Our protocol is a two-way protocol, i.e. Bob prepares and sends a quantum state, Alice applies some operation to it, and sends it back to Bob who measures the final state. For QOT there exist several proposals for two-way  protocols \cite{ASRP21, KST21, CKS10, CGS16}, and, in particular, the one presented by Amiri et al. \cite{ASRP21} also demonstrates their experimental feasibility. This is further motivation to work on developing realistic practical implementations of our protocol. 
Furthermore, one could increase the security standards  by making our protocol device-independent. The proposal of Kundu et al. \cite{KST21}, who extended the work of Chailloux et al. \cite{CKS10}, could serve as an inspiration. And while the aforementioned works focus on two-way QOT protocols,  recently, non-interactive  or one-way protocols have also been proposed for device-independent \cite{DIQOT22} and XOR QOT \cite{XORQOT22}.

Finally, based on our results, one could construct quantum protocols for oblivious polynomial evaluation, which -- as mentioned in Section \ref{sec:Intro} -- is another important primitive facilitating various applications.
\section*{Acknowlegments}

This work was funded by Fundação para a Ciência e a Tecnologia (FCT) through National Funds under Award SFRH/BD/144806/2019, Award UIDB/50008/2020, and Award UIDP/50008/2020; in part by the Regional Operational Program of Lisbon; by FCT, I.P. and, when eligible, by COMPETE 2020 FEDER funds, through the Competitiveness and Internationalization Operational Programme (COMPETE 2020), under the project QuantumPrime PTDC/EEI-TEL/8017/2020, and under the Scientific Employment Stimulus - Individual Call (CEEC Individual) - 2020.03274.CEECIND/CP1621/CT0003.

\bibliographystyle{quantum}
\bibliography{bibl}
	\appendix
\section{Security of the RWOLE protocol -- Proofs}

\subsection{Proof of Lemma \ref{lemma:wrole_dishonest_bob} (Dishonest Bob)}
\label{app:proofBobdishonest}
As we mentioned in the main text, this proof is  a combination and adaptation to our case of results from \cite{DFLSS09} and \cite{Dupuis2015}. 

To simplify the notation, in this proof we drop the subscript $0$ that refers to the RWOLE phase, e.g. we write Alice's function vector  $\mathbf{F}_0$,  simply  as $\mathbf{F}$.

Let the values that Bob commits be fixed as  $(x_i, r_i)\, \forall i\in[m]$, where $m = (1+t)n$ and $t n$  the number of qudits $\ket{e^{x_i}_{r_i}}$ used in the \textit{Test Phase} to check whether he is honest or not. Throughout the proof we denote by $\boldsymbol{x} = (x_1, \ldots, x_m)$ and $\boldsymbol{r} = (r_1, \ldots, r_m)$ the vectors in $\mathbb{Z}_d^m$ whose components contain Bob's commitments $\forall i\in [m]$, and by $X^m=(X_1,\ldots,X_m)$ the vector of the  random variables associated to  $x_i, i\in [m]$. For each  pair $(x_i, r_i)$, the corresponding qudit $\ket{e^{x_i}_{r_i}}$ belongs in the Hilbert space $\mathcal{H}_{X_i}$, and the quantum system including all the qudits is in $\mathcal{H}_{X^m}=\bigotimes_{i\in [m]}\mathcal{H}_{X_i}$. For simplicity, we refer to the quantum systems in terms of the corresponding  random variables  $X_i$, instead of the Hilbert spaces $\mathcal{H}_{X_i}$.

Recall that the set $T\subset[m]$ contains the $tn$ indices of the test qudits, and by $\bar{T}$ we denote its complement $[m]\backslash T$.  For  $i\in T$, $\boldsymbol{x}_{|T}$ is the vector whose components are the  bases $x_i$  in which Alice will measure the test qudits, and $ \boldsymbol{r'}_{|T}$ is the vector whose components are Alice's measurement results. The corresponding quantum system is in the Hilbert space $\bigotimes_{i\in T}\mathcal{H}_{X_i}$, which for simplicity we denote as $X^m_{|T}$ in terms of the associated random variables.   
Finally, $r_H(\cdot, \cdot)=d_H(\cdot, \cdot)/n$ is the relative Hamming distance between two  vectors of size $n$, with $d_H(\cdot, \cdot)$ being their Hamming distance.

\begin{proof}
	Let us start by proving property 1. of Lemma \ref{lemma:wrole_dishonest_bob}.
	After the first step of $\mathcal{\pi}^n_{\text{RWOLE}}$ (Figure \ref{fig:wrole}),  the generated state is $\rho_{X^m E}$, where $E$ is an auxiliary quantum system that Bob holds. Without loss of generality we assume that $\rho_{X^m E} = \ketbra{\phi_{X^m E}}$,   i.e, it is a pure state\footnote{Otherwise, we purify it and carry the purification system along with $E$.}. If Bob is honest, we have that $\ket{\phi_{X^m E}}=\ket{e^{\boldsymbol{x}}_{\boldsymbol{r}}}\otimes \ket{\psi_{E}}$, i.e.,   Bob's auxiliary quantum system $E$ is not entangled to the states that he sends to Alice.  
	
	The \textit{Test Phase} of the protocol is used to guarantee that the real state is close to an ideal state that satisfies the properties 1. and 2. of Lemma \ref{lemma:wrole_dishonest_bob}. Let $\boldsymbol{r'}$ be the vector whose components are Alice's  outcomes when measuring the state of  $X^m$  in the committed bases $\bm{x}$, and let $\mathcal{T}$ be the random variable associated to the set of indexes $T$  of size $t n$. We can consider the state:
		\begin{align}
	\rho_{\mathcal{T}X^m E} = \rho_\mathcal{T}\otimes \ketbra{\phi_{X^m E}} = \sum_{T} P_\mathcal{T}(T) \ketbra{T}\otimes \ketbra{\phi_{X^m E}},
	\label{eq:real_definition}
	\end{align}
	to be the state resulting from the real execution of the protocol, and prove that it is close to some state, $\sigma_{\mathcal{T}X^m E}$, that fulfils the following property: for any choice of $T$ and for any outcome $\boldsymbol{r'}_{|T}$ when measuring the state of  ${X^m}_{|T}$ in the bases $\boldsymbol{x}_{|T}$, the relative error $r_H(\boldsymbol{r'}_{|T}, \boldsymbol{r}_{|T})$  is an upper bound on the relative error $r_H(\boldsymbol{r'}_{|\bar{T}}, \boldsymbol{r}_{|\bar{T}})$, which one would obtain by measuring the remaining subsystems ${X^m}_{|\bar{T}}$  in the bases $\boldsymbol{x}_{|\bar{T}}$. This state, $\sigma_{\mathcal{T}X^m E}$, can be  written as: 
	\begin{align}
	\sigma_{\mathcal{T}X^m E} = \sum_{T} P_\mathcal{T}(T) \ketbra{T}\otimes \ketbra{\tilde{\phi}_{X^m E}^{T}},
	\label{eq:ideal_definition}
	\end{align}
	where $\forall T$, 
		\begin{align}
	\ket{\tilde{\phi}_{X^m E}^{T}} = \sum_{\boldsymbol{r'}\in \mathcal{B}_{T}} \alpha_{\boldsymbol{r'}}^{T}\ket{e^{\boldsymbol{x}}_{\boldsymbol{r'}}}\otimes \ket{\psi^{\boldsymbol{r'}}_{E}},
	\label{eq:ideal_definition_2}
	\end{align}
	for  $\mathcal{B}_{T} = \{ \boldsymbol{r'}\in \mathbb{Z}_d^{m}: r_H(\boldsymbol{r'}_{|\bar{T}}, \boldsymbol{r}_{|\bar{T}})\leq r_H(\boldsymbol{r'}_{|T}, \boldsymbol{r}_{|T}) + \zeta \}$  for  $\zeta>0$ and arbitrary coefficients $\alpha^T_{\boldsymbol{r'}}$. The state $\ket{\psi^{\boldsymbol{r'}}_{E}}$ is an arbitrary state on $E$ subsystem that possibly depends on $\boldsymbol{r'}$. Then we have:
		\begin{lemma}
		
		Let the quantum states $\rho_{\mathcal{T}X^m E}$ and $\sigma_{\mathcal{T}X^m E}$   be given by \eqref{eq:real_definition} and \eqref{eq:ideal_definition}, respectively. Then, $\forall \zeta >0$ and fixed strings $\boldsymbol{x}, \boldsymbol{r}\in \mathbb{Z}_d^{m}$, we have that
				\begin{align}
		\rho_{\mathcal{T}X^m E} \approx_\epsilon \sigma_{\mathcal{T}X^m E},
		\end{align}
		where $\epsilon(\zeta, n) = \epsilon(\zeta, n) = \exp( -\frac{2 \zeta^2t^2n^2}{(nt+1)(t+1)})$. That is, the real state $\rho_{\mathcal{T}X^m E}$ is exponentially close, with respect to $n$, to the ideal state $\sigma_{\mathcal{T}X^m E}$.
		\label{lemma:dishonest_bob_2}
	\end{lemma}

	\begin{proof}
		This proof is an adaptation of the proof of Lemma 4.3 from \cite{DFLSS09} to our case.
		
		For any $T$, let $\ket{\tilde{\phi}_{X^m E}^T}$ be  the renormalized projection of $\ket{\phi_{X^m E}}$ into the subspace $$\text{Span}\left\{ \ket{e^{\boldsymbol{x}}_{\boldsymbol{r'}}} : \boldsymbol{r'} \in \mathcal{B}_{T} \right\}\otimes \mathcal{H}_{E},$$ and  let $\ket{\tilde{\phi}_{X^m E}^{T^\perp}}$ be the renormalized projection of $\ket{\phi_{X^m E}}$ into its orthogonal complement. 
		
		We can, then, write
		\begin{align}
		\ket{\phi_{X^m E}} = \epsilon_T \ket{\tilde{\phi}_{X^m E}^T} + \epsilon_T^\perp \ket{\tilde{\phi}_{X^m E}^{T^\perp}},
		\end{align}
		with  $\epsilon_T = \braket{\tilde{\phi}_{X^m E}^T}{\phi_{X^m E}}$ and $\epsilon_T^\perp = \braket{\tilde{\phi}_{X^m E}^{T^\perp}}{\phi_{X^m E}}$. By construction, this state satisfies \eqref{eq:ideal_definition_2}.
		Furthermore, we can calculate the distance:
				\begin{align}
		\delta\big( \ketbra{\phi_{X^m E}}, \ketbra{\tilde{\phi}_{X^m E}^T} \big) = \sqrt{1 - |\braket{\tilde{\phi}_{X^m E}^T}{\phi_{X^m E}}|^2} 
		=\sqrt{1 - |\epsilon_T|^2} 
		= |\epsilon_T^\perp|,
		\end{align}
		where,  given $T$,  $|\epsilon_T^\perp|$ is the probability amplitude for getting  outcome $\boldsymbol{r'} \notin \mathcal{B}_{T}$ when measuring the state of $X^m$ in bases $\boldsymbol{x}$.
		We continue to derive an upper bound on the distance between the real and the ideal state: 
		\begin{align}
		\delta\big( \rho_{\mathcal{T}X^m E}, \sigma_{\mathcal{T}X^m E} \big) = \Bigg( \sum_T P_\mathcal{T}(T) \delta\big( \ketbra{\phi_{X^m E}}, \ketbra{\tilde{\phi}_{X^m E}^T} \big) \Bigg)^2 \leq\sum_T P_\mathcal{T}(T) |\epsilon_T^\perp|^2, 
		\end{align}
		where we used Jensen's inequality and properties of the trace norm. The last term is the probability that, when choosing $T$ according to $P_{\mathcal{T}}$ and measuring the state of $X^m$ in bases $\boldsymbol{x}$ we get an outcome  $\boldsymbol{r'} \notin \mathcal{B}_{T}$. We write  
		\begin{align}
		\sum_T P_\mathcal{T}(T)|\epsilon_T^\perp|^2 = \text{Pr}_{\mathcal{T}}[\boldsymbol{r'} \notin \mathcal{B}_{T}] = \text{Pr}_{\mathcal{T}}[r_H(\boldsymbol{r'}_{|\bar{T}}, \boldsymbol{r}_{|\bar{T}}) - r_H(\boldsymbol{r'}_{|T}, \boldsymbol{r}_{|T}) > \zeta].
		\end{align}
		Then, we can use Corollary 4 from \cite{LPTRG13}, which states that the above probability is negligible in $n$ and gives us an upper bound for $\delta\big( \rho_{\mathcal{T}X^m E}, \sigma_{\mathcal{T}X^m E} \big)$. In particular, given the set $[m]$ with $(1+t)n$ elements, we apply the aforementioned corollary for a random subset $\mathcal{T}$ of size $tn$ and its complement $\bar{\mathcal{T}}$ of size $n$. Denoting by $\mu_{\mathcal{T}}$ and $\mu_{\bar{\mathcal{T}}}$, respectively,  the averages of these subsets, we obtain
		\begin{align}
		\text{Pr}[\mu_{\bar{\mathcal{T}}} - \mu_{\mathcal{T}} \geq \zeta] \leq \exp( -\frac{2 \zeta^2t^2n^2}{(nt+1)(t+1)}).
		\end{align}
		Hence, we have:
		\begin{align}
		\delta\big( \rho_{\mathcal{T}X^m E}, \sigma_{\mathcal{T}X^m E} \big) \leq  \sum_T P_\mathcal{T}(T)|\epsilon_T^\perp|^2 \leq \exp( -\frac{2 \zeta^2t^2n^2}{(nt+1)(t+1)})=: \epsilon,
		\end{align}
		concluding the proof of Lemma \ref{lemma:dishonest_bob_2}, i.e. that the real state \eqref{eq:real_definition} generated by the protocol until the \textit{Computation Phase}  is  $\epsilon-$close to the ideal state \eqref{eq:ideal_definition}.
	\end{proof}
	
	It is now straightforward to complete the proof of property 1. of Lemma \ref{lemma:wrole_dishonest_bob}. 
	
	To obtain the states $\sigma_{\mathbf{F} B'}$ and $\rho_{\mathbf{F} B'}$ from the states $\sigma_{\mathcal{T}X^m E}$ and $\rho_{\mathcal{T}X^m E}$, respectively, we need to apply the operator $V^{\bm{b}}_{\bm{a}}\Tr_{X^m_{|T}\,E_{|T}}[\, \cdot\,] V^{\bm{b} \dagger}_{\bm{a}}$. This operator  is a CPTP map, being the composition of the two CPTP maps, $V^{\bm{b}}_{\bm{a}}$ and $\Tr$. Since the trace distance between two density matrices does not increase under CPTP maps (see Lemma 7 in \cite{U17}), the final states indeed satisfy  property 1. of Lemma \ref{lemma:wrole_dishonest_bob}, namely 
	\begin{align}
	\sigma_{\mathbf{F} B'} \approx_{\epsilon} \rho_{\mathbf{F} B'}.
	\end{align}

	For the rest of this proof dishonest Bob's system $B'$ is identified with $\mathbf{Y}\, E$, where $\mathbf{Y}$ corresponds to the classical information leaking to him through the output of the WROLE and $E$ is, in general, a quantum auxiliary system that he might also hold. Consequently, from now on we write $\sigma_{\mathbf{F} B'}$ as $\sigma_{\mathbf{F} \mathbf{Y}E}$. Now, we have to prove property 2. of Lemma~\ref{lemma:wrole_dishonest_bob},  i.e., obtain the corresponding lower bound on min-entropy with respect to $\sigma_{\mathbf{F} \mathbf{Y}E}$. 
	We start with  the following Lemma~\ref{XnBound}:
	
	\begin{lemma}[Corollary 4.4 in \cite{DFLSS09}]
		\label{XnBound}
		Let $err := r_H(\boldsymbol{r'}_{|T}, \boldsymbol{r}_{|T}) \leq 1-\frac{1}{d}$ be the error measured by Alice  while measuring the state of $X^m_{|T}$ according to her choice of $T$, and let $\sigma_{\mathbf{X} E} := \ketbra{\psi}_{\mathbf{X} E}$ be the state to which the ideal state $\sigma_{\mathcal{T}X^m E}$ collapses after this measurement. Following \eqref{eq:ideal_definition} and \eqref{eq:ideal_definition_2}, we  write  $\ket{\psi}_{\mathbf{X} E} = \sum_{\boldsymbol{z}\in \mathcal{B}} \alpha_{\boldsymbol{z}}\ket{e^{\boldsymbol{x}}_{\boldsymbol{z}}} \ket{\psi^{\boldsymbol{z}}_{E}}$  for some $\ket{\psi^{\boldsymbol{z}}_{E}}$ and  $\mathcal{B} = \{ \boldsymbol{z}\in \mathbb{Z}_d^{n}: r_H(\boldsymbol{z}, \boldsymbol{r}_{|\bar{T}})\leq err + \zeta \}$ with $\zeta>0$. Then, we have: 
		\begin{align}
		H_{\min}(\mathbf{X}|E)_{\sigma_{\mathbf{X} E}} \geq -h_d(err + \zeta) n\log d,
		\label{eq:min_entropy_1_stage}
		\end{align}
		where $h_d(x)$ is given in Definition \ref{def:q-ary}.
		
	\end{lemma}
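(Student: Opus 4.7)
The plan is to combine a volume bound on the $d$-ary Hamming ball $\mathcal{B}$ with the generic fact that any pure bipartite state of Schmidt rank $r$ has $H_{\min}(X|E) \geq -\log r$. I would first note that $\ket{\psi}_{\mathbf{X} E}$ is supported inside the subspace $\mathcal{V}_{\mathcal{B}} \otimes \mathcal{H}_E$, where $\mathcal{V}_{\mathcal{B}} := \mathrm{span}\{\ket{e^{\boldsymbol{\theta}}_{\boldsymbol{z}}} : \boldsymbol{z} \in \mathcal{B}\}$, so the reduced state on $\mathbf{X}$ has rank at most $\dim \mathcal{V}_{\mathcal{B}} = |\mathcal{B}|$; by the Schmidt decomposition, the Schmidt rank of $\ket{\psi}_{\mathbf{X} E}$ itself is therefore at most $|\mathcal{B}|$. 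Everything then reduces to (i) upper-bounding $|\mathcal{B}|$ combinatorially, and (ii) converting the Schmidt-rank bound into a min-entropy bound.

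The combinatorial step is the standard $d$-ary Hamming-ball volume bound: for any centre in $\mathbb{Z}_d^n$ and any $\mu \in [0, 1-1/d]$,
\begin{equation*}
|\mathcal{B}| \;=\; \sum_{k=0}^{\lfloor \mu n \rfloor}\binom{n}{k}(d-1)^k \;\leq\; d^{\,n\, h_d(\mu)}.
\end{equation*}
This is the natural $d$-ary generalisation of the classical binary estimate and can be obtained by a Chernoff-type tail bound, or directly by maximising $k\log_d(d-1) - k\log_d(k/n) - (n-k)\log_d(1-k/n)$ over $k \leq \mu n$ and recognising the optimum as $n\,h_d(\mu)$. Since by hypothesis $err + \zeta \leq 1-1/d$, applying this with $\mu = err+\zeta$ gives $|\mathcal{B}|\leq d^{n\,h_d(err+\zeta)}$.

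For the entropic step I would verify directly that $H_{\min}(X|E)_\psi \geq -\log r$ for any Schmidt-rank-$r$ pure state: writing the Schmidt decomposition $\ket{\psi}=\sum_{i=1}^r \sqrt{\lambda_i}\ket{i}_X\ket{i}_E$ and choosing the candidate $\sigma_E = \Pi_E / r$, where $\Pi_E$ is the projector onto $\mathrm{span}\{\ket{i}_E\}_{i=1}^r$, one has $r\,I_X\otimes \sigma_E = I_X\otimes \Pi_E$, which dominates $\ketbra{\psi}$ because $\ket{\psi}$ is a unit vector lying in the range of $I_X\otimes \Pi_E$. Plugging this $\sigma_E$ into the definition of $H_{\min}$ yields $H_{\min}(X|E)_\psi \geq -\log r$, and combining with the volume bound we conclude
\begin{equation*}
H_{\min}(\mathbf{X}|E)_\sigma \;\geq\; -\log |\mathcal{B}| \;\geq\; -h_d(err+\zeta)\,n\log d.
\end{equation*}

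No substantive obstacle is expected, since both ingredients are standard. The only point meriting attention is the logarithm convention: $h_d$ is defined with $\log_d$ (so that $h_d(1-1/d)=1$), whereas $H_{\min}$ is stated with the ambient $\log$; the extra factor $\log d$ appearing in the final expression is precisely the base-change factor in the identity $\log|\mathcal{B}| = \log_d|\mathcal{B}|\cdot \log d$. A useful sanity check is that the bound degenerates to $H_{\min}\geq 0$ as $err+\zeta \to 0$ (no entanglement possible beyond a $1$-dimensional support) and to the trivial $H_{\min}\geq -n\log d$ as $err+\zeta \to 1-1/d$ (maximally entangled across the whole space of $\mathbf{X}$).
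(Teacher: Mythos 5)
Your proof is correct, and for the entropic step it takes a genuinely different (more self-contained) route than the paper. Both arguments factor the claim as $H_{\min}(\mathbf{X}|E)_\sigma \geq -\log|\mathcal{B}|$ followed by the $d$-ary Hamming-ball volume bound $|\mathcal{B}|\leq d^{n\,h_d(err+\zeta)}$; for the latter the paper simply cites Lemma~\ref{lemma:hammingBall}, whereas you sketch the standard derivation, which is fine. The difference is in the first inequality: the paper invokes Lemma~\ref{lemma:RenatoRenner}, comparing the pure state $\ketbra{\psi}_{\mathbf{X}E}$ with its dephased classical--quantum counterpart $\tilde\sigma_{\mathbf{X}E}=\sum_{\bm{z}\in\mathcal{B}}|\alpha_{\bm{z}}|^2\ketbra{e^{\bm{\theta}}_{\bm{z}}}\otimes\ketbra{\psi^{\bm{z}}_E}$ and using $H_{\min}(\mathbf{X}|E)_\sigma\geq H_{\min}(\mathbf{X}|E)_{\tilde\sigma}-\log|\mathcal{B}|\geq -\log|\mathcal{B}|$, the last step by nonnegativity of cq min-entropy. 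You instead bound the Schmidt rank of $\ket{\psi}$ by $|\mathcal{B}|$ (indeed $\dim\mathrm{span}\{\ket{e^{\bm{\theta}}_{\bm{z}}}\}_{\bm{z}\in\mathcal{B}}=|\mathcal{B}|$ exactly, since these vectors are orthonormal) and exhibit the explicit witness $\sigma_E=\Pi_E/r$, using that a rank-$r$ projector dominates the rank-one projector onto any unit vector in its range; this is a correct and elementary operator inequality that avoids the external lemma entirely. What the paper's route buys is a reusable, slightly more general tool (the comparison to the dephased state works for any subset of a basis and isolates the cq min-entropy as an intermediate quantity); what yours buys is self-containedness and an explicit near-optimal choice of $\sigma_E$. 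Your closing remark on the $\log$ versus $\log_d$ base change is exactly the bookkeeping needed, and your limiting sanity checks at $err+\zeta\to 0$ and $err+\zeta\to 1-\tfrac{1}{d}$ are consistent with $h_d(0)=0$ and $h_d(1-\tfrac{1}{d})=1$.
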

	\begin{proof}
				We start by defining the state $\tilde{\sigma}_{\mathbf{X} E}: = \sum_{\bm{z}\in \mathcal{B}} |\alpha_{\bm{z}}|^2 \ketbra{e^{\boldsymbol{x}}_{\boldsymbol{z}}} \otimes \ketbra{\psi^{\boldsymbol{z}}_{E}}$. 
		Then, by applying Lemma 3.1.13 from \cite{R06}, we obtain
		\begin{align}
		H_{\text{min}}(\mathbf{X} |E)_{\sigma_{\mathbf{X} E}} \geq H_{\text{min}}(\mathbf{X} | E)_{\tilde{\sigma}_{\mathbf{X} E}} - \log |\mathcal{B}|.
		\end{align}

		Since $\tilde{\sigma}_{\mathbf{X} E}$  is a classical-quantum state, its min-entropy cannot be negative, that is $H_{\text{min}}(\mathbf{X} | E)_{\tilde{\sigma}_{\mathbf{X} E}} \geq 0$, thus $H_{\text{min}}(\mathbf{X} |E)_{\sigma_{\mathbf{X} E}} \geq -\log |\mathcal{B}|$. 
		The lower bound shown in \eqref{eq:min_entropy_1_stage} follows directly from Definition \ref{def:q-ary} considering the Hamming ball around $\boldsymbol{r}_{|\bar{T}}$ with radius $n(err + \zeta)$.
	\end{proof}
	
	To complete the proof of property 2. of Lemma \ref{lemma:wrole_dishonest_bob} and find a lower bound on $H_\text{min}(\mathbf{F}|\mathbf{Y}\, E)\sigma_{\mathbf{F} \mathbf{Y}E}$  we need to relate it with $H_{\text{min}}(\mathbf{X} |E)_{\sigma_{\mathbf{X} E}}$, for which we just derived a lower bound. In what follows, we 
	adapt the notation from \cite{Dupuis2015}:
	
	\begin{itemize}
		\item $\Phi_{\mathbf{X} \bar{\mathbf{X}}} = \ketbra{\Phi}_{\mathbf{X} \bar{\mathbf{X}}}$, where $\ket{\Phi}_{\mathbf{X} \bar{\mathbf{X}}} = \sum_{\bm{s}} \ket{e^{\bm{x}}_{\bm{s}}}_{\mathbf{X} } \otimes \ket{e^{\bm{x}}_{\bm{s}}}_{\bar{\mathbf{X}}}$ for all basis choices   $\bm{x}  \in\mathbb{Z}^n_d$, and
		\item $\Phi_{(\bm{a},\bm{b})} = \ketbra{\Phi_{(\bm{a},\bm{b})}} = \sum_{\bm{s s'}} V^{\bm{b}}_{\bm{a}} \ketbra{e^{\bm{x}}_{\bm{s}}}{e^{\bm{x}}_{\bm{s'}}} V^{\bm{b}\dagger}_{\bm{a}} \otimes \ketbra{e^{\bm{x}}_{\bm{s}}}{e^{\bm{x}}_{\bm{s'}}}$, with $\ket{\Phi_{(\bm{a},\bm{b})}} = (V^{\bm{b}}_{\bm{a}} \otimes \mathds{1}) \ket{\Phi}_{\mathbf{X} \bar{\mathbf{X}}}$,  for $(\bm{a},\bm{b})\in\mathbb{Z}^2_d$
	\end{itemize}
	and we use the following properties of the operators $V_a^b$, which can be derived from \eqref{eq:main_relation}:
	
	\begin{enumerate}[(a)]
		\item $V^b_a \ketbra{e^x_i} V^{b \dagger}_a = \ketbra{e^x_{ax - b + i}}$, \label{prop:a}
		\item $V^b_a = \sum_j c_{a,b,j,x} \ketbra{e^x_{ax - b + j}}{e^x_j}$,\label{prop:b}
		\item $V^{b\dagger}_a = \sum_j c_{a,b,j,x}^* \ketbra{e^x_j}{e^x_{ax - b + j}}$.\label{prop:c}
		\end{enumerate}
	The following lemma,  which is an adaptation of Theorem 12 in \cite{Dupuis2015} to our case, provides a lower bound for $H_\text{min}(\mathbf{F}|\mathbf{Y}\, E)_{\sigma_{\mathbf{F}\mathbf{Y} E}}$ in terms of $H_{\text{min}}(\mathbf{X} |E)_{\sigma_{\mathbf{X}E}}$.
	\begin{lemma} 
		\label{lem:GnBound}
		Let $\mathbf{X}$ denote our $n$-qudit system and $\sigma_{\mathbf{X} E}$ be the ideal quantum state to which the system collapsed after Alice's test measurements, as introduced before. Then, we have:
		\begin{align}
		\text{H}_\text{min}(\mathbf{F}|\mathbf{Y}\, E)_{\sigma_{\mathbf{F}\mathbf{Y} E}} \geq \frac{1}{2} \big(n\log d + \text{H}_\text{min}(\mathbf{X}| E)_{\sigma_{\mathbf{X} E}}\big),
		\end{align}
		where 
		\begin{align}
		\sigma_{\mathbf{F}\mathbf{Y} E} = \frac{1}{d^{2n}} \sum_{(\bm{a}, \bm{b}) \in \mathbb{Z}^{2n}_d} \ketbra{e^{\bm{x}}_{\bm{a}},e^{\bm{x}}_{\bm{b}}}\otimes V^{\bm{b}}_{\bm{a}} \, \sigma_{\mathbf{X} E}\,  V^{\bm{b}\dagger}_{\bm{a}},
		\label{eq:VappliedX}
		\end{align}
		is the state obtained when $V^{\bm{b}}_{\bm{a}}$ is applied to the system $\mathbf{X}$ according to $\mathbf{F}$. 
	\end{lemma}
	
	\begin{proof}
		This proof is an adaptation of the proof of Theorem 12 in \cite{Dupuis2015} to our case.
		Let us fix $x\in\mathbb{Z}_d$ and write $\ket{i} = \ket{e^x_i}$  for short. 
		$V^b_a$ is a CPTP map, and it is known that the min-entropy does not decrease whenever a CPTP map is applied. However, this is not enough  to prove the security of the protocol and determine a lower bound. We need a more refined expression relating  $H_{\min}(\mathcal{M}(\mathbf{X})|E)$ and  $H_{\min}(\mathbf{X}|E)$ for some CPTP map $\mathcal{M}$. This is given by
		the following Lemma~\ref{thm:entaglementSamplingResult},  which is an adaptation of Theorem 1 in \cite{Dupuis2015} to our case:
		
		\begin{lemma}[Theorem 1, \cite{Dupuis2015}] 
			\label{thm:entaglementSamplingResult}
			Let $\mathbf{X}$ denote our system of  $n$ qudits, and  $\mathcal{M}_{\mathbf{X}\rightarrow \mathbf{F}\mathbf{Y}}$ be a CP map such that $((\mathcal{M}^\dagger \circ \mathcal{M})_{\mathbf{X}}\otimes \text{id}_{\bar{\mathbf{X}}})(\Phi_{\mathbf{X}\bar{\mathbf{X}}}) = \sum_{(\bm{a},\bm{b})\in\mathbb{Z}^{2n}_d} \lambda_{(\bm{a},\bm{b})} \Phi_{(\bm{a},\bm{b})}$. As above, let $\sigma_{\mathbf{X} E}$ be the ideal quantum state to which the system collapsed after Alice's test measurements.  Then, for any partition of $\mathbb{Z}^{2n}_d = \mathfrak{S}_+ \cup \mathfrak{S}_-$ into subsets $\mathfrak{S}_+$ and $\mathfrak{S}_-$, we have 
			\begin{align}
			2^{-\text{H}_2(\mathbf{F}\mathbf{Y} | E)_{\sigma_{\mathbf{F}\mathbf{Y}E} | \sigma_{\mathbf{X}E}}} \leq \sum_{(\bm{a},\bm{b})\in\mathfrak{S}_+} \lambda_{(\bm{a},\bm{b})} 2^{-\text{H}_2(\mathbf{X} | E)_{\sigma_{\mathbf{X}E}}} + (\max_{(\bm{a},\bm{b})\in\mathfrak{S}_-} \lambda_{(\bm{a},\bm{b})}) d^n,
			\end{align}
			where, in general, for a (not necessarily normalized) quantum state $\rho_{AB}\in \mathcal{P}(\mathcal{H}_A\otimes\mathcal{H}_B)$, $\text{H}_2(A|B)$   is the so-called \textit{collision entropy}~\cite{R06}, given as 
			\begin{align} 
			\text{H}_2(A|B)_{\rho_{AB}}=-\log \left(\Tr{\left(\rho_{B}^{-1/4}\rho_{AB}\rho_B^{-1/4}\right)^2}\right).
			\end{align}
			If we further condition on a general quantum state $\sigma_B\in\mathcal{P}(\mathcal{H}_B)$, we have 
			\begin{align}
			\text{H}_2(A|B)_{\rho_{AB}|\sigma_B}=-\log \left(\Tr{\left(\sigma_{B}^{-1/4}\rho_{AB}\sigma_B^{-1/4}\right)^2}\right).
			\end{align}
		\end{lemma}
		We can apply the above Lemma~\ref{thm:entaglementSamplingResult}  with $\mathcal{M}_{\mathbf{X}\rightarrow \mathbf{F}\mathbf{Y}} = \mathcal{N}^{\otimes n}$, where 
		\begin{align}
		\mathcal{N}(\sigma_X) = \frac{1}{d^2}\sum_{(a,b)\in\mathbb{Z}^2_d}\mathcal{N}_{a,b} \sigma_X \mathcal{N}_{a,b}^\dagger = \frac{1}{d^2} \sum_{(a,b)\in\mathbb{Z}^2_d} \bigg( \ket{a,b} \otimes V^b_a \bigg) \sigma_X \bigg( \bra{a,b} \otimes V^{b\dagger}_a \bigg).
		\end{align}
		 The operator $\mathcal{N}$ applies the operator $V^b_a$ to the single system $X$ and saves its choice $(a,b)$ to a new record in the  $F$ space.  
		To continue, we want to write $((\mathcal{M}^\dagger \circ \mathcal{M} )_{\mathbf{X}} \otimes \text{id}_{\bar{\mathbf{X}}}) (\Phi_{\mathbf{X} \bar{\mathbf{X}}})$  as a linear combination of $\Phi_{(\bm{a},\bm{b})}$, i.e.
	\begin{align}
	((\mathcal{M}^\dagger \circ \mathcal{M} )_{\mathbf{X}} \otimes \text{id}_{\bar{\mathbf{X}}}) (\Phi_{\mathbf{X} \bar{\mathbf{X}}}) = \sum_{(\bm{a},\bm{b})\in \mathbb{Z}^{2n}} \lambda_{(\bm{a},\bm{b})} \Phi_{(\bm{a},\bm{b})}
	\end{align}
		First, note that 
		\begin{align}
		\mathcal{N}(\ketbra{i}{j}) & = \frac{1}{d^2} \sum_{(a,b)\in\mathbb{Z}^2_d} \bigg( \ket{a,b} \otimes V^b_a \bigg) \ketbra{i}{j} \bigg( \bra{a,b} \otimes V^{b\dagger}_a \bigg) \nonumber\\
		&\overset{\ref{prop:a}}{=} \frac{1}{d^2} \sum_{(a,b)\in\mathbb{Z}^2_d} \ketbra{a,b} \otimes \ketbra{ax-b+i}{ax-b+j}, \label{eqn:Nresult}
		\end{align}
		where we used the property \ref{prop:a}.  We proceed to compute $\mathcal{N}^\dagger \circ \mathcal{N} \ketbra{i}{j}$:
		\begin{align}
		\mathcal{N}^\dagger \circ \mathcal{N} \ketbra{i}{j} &= \frac{1}{d^2} \sum_{(a',b')\in\mathbb{Z}^2_d} \mathcal{N}^\dagger_{a' b'} (\mathcal{N}\ketbra{i}{j})) \mathcal{N}_{a' b'} \nonumber\\
		&\overset{\eqref{eqn:Nresult}}{=} \frac{1}{d^4} \sum_{(a',b'),(a,b)\in\mathbb{Z}^2_d} \Big( \bra{a',b'} \otimes V^{b'\dagger}_{a'} \Big) \ketbra{a,b}\otimes \ketbra{ax-b+i}{ax-b+j} \Big( \ket{a',b'} \otimes V^{b'}_{a'}  \Big) \nonumber\\
	&\overset{\ref{prop:b}, \ref{prop:c}}{=} \frac{1}{d^4}\sum_{(a,b)\in\mathbb{Z}^2_d} V^{b\dagger}_{a} \ketbra{ax-b+i}{ax-b+j} V^b_a = \frac{1}{d^4} \sum_{(a,b)\in\mathbb{Z}^2_d} \ketbra{i}{j} = \frac{1}{d^2} \ketbra{i}{j}, \label{eqn:NdaggerNresult} 
		\end{align}
		and		
		\begin{align}
		((\mathcal{N}^\dagger \circ \mathcal{N})_{X} \otimes \text{id}_{\bar{X}}) (\Phi_{X \bar{X}})& = \sum_{i,j} \mathcal{N}^\dagger \circ \mathcal{N} (\ketbra{i}{j}) \otimes \ketbra{i}{j} \nonumber\\ &
		\overset{\text{eq.}(\ref{eqn:NdaggerNresult}) }{=} \sum_{i,j} \Big(  \frac{1}{d^2} \ketbra{i}{j} \Big) \otimes \ketbra{i}{j}\nonumber\\ & =\frac{1}{d^2} \sum_{i,j} \ketbra{i}{j} \otimes \ketbra{i}{j}= \frac{1}{d^2} \Phi_{X \bar{X}}.
		\label{eqn:NdaggerNidresult}
		\end{align}
		Therefore, we have 
		\begin{align}
		\big((\mathcal{N}^\dagger \circ \mathcal{N})_{X} \otimes \text{id}_{\bar{X}}\big) (\Phi_{X \bar{X}}) = \frac{1}{d^2} \Phi_{(0,0)},
		\end{align}
		from which we easily see  
		\begin{align}
		\big((\mathcal{M}^\dagger \circ \mathcal{M} )_{\mathbf{X}} \otimes \text{id}_{\bar{\mathbf{X}}}\big) (\Phi_{\mathbf{X} \bar{\mathbf{X}}}) = \frac{1}{d^{2n}} \Phi_{(\bm{0},\bm{0})}.
		\end{align}
		Consequently, 
		\begin{align}
		\lambda_{(\bm{a}, \bm{b})}= \left\{
		\begin{array}{ll}
		\frac{1}{d^{2n}} & \text{if } (\bm{a}, \bm{b}) = (\bm{0},\bm{0})\\
		0 & \text{otherwise.} \\
		\end{array} 
		\right. 
		\end{align}
		Now, we want to choose the partition that gives us the best lower bound on the collision entropy, i.e. decreases the r.h.s of the following relation:
		\begin{align}
			2^{-\text{H}_2(\mathbf{F}\mathbf{Y} | E)_{\sigma_{\mathbf{F}\mathbf{Y} E}}} \leq \sum_{(\bm{a},\bm{b}) \in \mathfrak{S}_+} \lambda_{(\bm{a},\bm{b})} 2^{-\text{H}_2(\mathbf{X}| E)_{\sigma_{\mathbf{X} E}}} + (\max_{(\bm{a},\bm{b}) \in \mathfrak{S}_-} \lambda_{(\bm{a},\bm{b})}) d^n.
		\end{align}
		Note that  we dropped the conditioning on the state $\sigma_{\mathbf{X}E}$ at $\text{H}_2(\mathbf{F}\mathbf{Y} | E)_{\sigma_{\mathbf{F}\mathbf{Y} E}}$. This is because the map $\mathcal{M}$ is trace-preserving (for a detailed explanation see \cite{Dupuis2015} below Theorem 1). For our case there are just two types of partitions: the case where $0 \in \mathfrak{S}_+$ and the case where $0 \in \mathfrak{S}_-$. If $0 \in \mathfrak{S}_+$:
			\begin{align}
			\text{r.h.s} = \sum_{(\bm{a}, \bm{b}) \in \mathfrak{S}_+} \lambda_{s} 2^{-\text{H}_2(\mathbf{X}| E)_{\sigma_{\mathbf{X} E}}} = \frac{1}{d^{2n}} 2^{-\text{H}_2(\mathbf{X}| E)_{\sigma_{\mathbf{X} E}}} \leq \frac{1}{d^n}.
		\end{align}
		The last inequality holds because $-n\log d \leq \text{H}_2(\mathbf{X}| E)_{\sigma_{\mathbf{X} E}} \leq n\log d$. In fact,
		\begin{align}
			\frac{1}{d^{2n}} 2^{-\text{H}_2(\mathbf{X}| E)_{\sigma_{\mathbf{X} E}}} \leq \frac{1}{d^n}
			\iff 2^{-\text{H}_2(\mathbf{X}| E)_{\sigma_{\mathbf{X} E}}} \leq d^n 
			\iff \text{H}_2(\mathbf{X}| E)_{\sigma_{\mathbf{X} E}}\geq -n\log d.
		\end{align}
	
		If $0 \in \mathfrak{S}_-$, $\text{r.h.s} = \frac{1}{d^n}$ which, as we have just seen by the previous inequality, does not provide a better lower bound on the collision entropy whenever $\text{H}_2(\mathbf{X}| E)_{\sigma_{\mathbf{X} E}} \neq -n\log d$.
		So, choosing any partition such that $0 \in \mathfrak{S}_+$, we get
		\begin{align}
			2^{-\text{H}_2(\mathbf{F}\mathbf{Y} | E)_{\sigma_{\mathbf{F}\mathbf{Y} E}}} & \leq \sum_{(\bm{a}, \bm{b}) \in \mathfrak{S}_+} \lambda_{(\bm{a}, \bm{b})} 2^{-\text{H}_2(\mathbf{X}| E)_{\sigma_{\mathbf{X} E}}} + (\max_{(\bm{a}, \bm{b}) \in \mathfrak{S}_-} \lambda_{(\bm{a}, \bm{b})}) d^n\nonumber\\ &=\frac{1}{d^{2n}} 2^{-\text{H}_2(\mathbf{X} | E)_{\sigma_{\mathbf{X} E}}},
		\end{align}
		from which we conclude
		\begin{align}
		\text{H}_2(\mathbf{F}\mathbf{Y} | E)_{\sigma_{\mathbf{F}\mathbf{Y} E}} \geq 2n\log d + \text{H}_2(\mathbf{X}|E)_{\sigma_{\mathbf{X} E}}.  \label{eqn:firstineq} 
		\end{align}
		
		In order to relate $\text{H}_2(\mathbf{F}\mathbf{Y} | E)_{\sigma_{\mathbf{F}\mathbf{Y} E}}$ with $\text{H}_2(\mathbf{F} | \mathbf{Y} E)_{\sigma_{\mathbf{F}\mathbf{Y} E}}$, we use the chain rule proved in Proposition 8 of \cite{MDSFT13}:
		\begin{align}
		\text{H}_2(\mathbf{F} | \mathbf{Y} E)_{\sigma_{\mathbf{F}\mathbf{Y} E}} \geq \text{H}_2(\mathbf{F}\mathbf{Y} | E)_{\sigma_{\mathbf{F}\mathbf{Y} E}} - \log \text{rank}(\sigma_{\mathbf{Y}}) \geq \text{H}_2(\mathbf{F}\mathbf{Y} | E)_{\sigma_{\mathbf{F}\mathbf{Y} E}} - n \log d,
		\label{eqn:chainRule} 
		\end{align}
		since $\log \text{rank}(\sigma_{\mathbf{Y}}) \leq n \log d$. Combining (\ref{eqn:firstineq}) and (\ref{eqn:chainRule}), we get the desired result:
		\begin{align}
		\text{H}_2(\mathbf{F}|\mathbf{Y}\, E)_{\sigma_{\mathbf{F}\mathbf{Y} E}} \geq n\log d + \text{H}_2(\mathbf{X}| E)_{\sigma_{\mathbf{X} E}}.
		\end{align}
		To express the above relation in terms of the min-entropy instead of the collision entropy, we start by noticing that 
		the state given in \eqref{eq:VappliedX} can be written as 
		\begin{align}
		\sigma_{\mathbf{F}\mathbf{Y} E} = \frac{1}{d^{2n}} \sum_{(\bm{a}, \bm{b}) \in \mathbb{Z}^{2n}_d} \ketbra{\bm{a},\bm{b}}\otimes V^{\bm{b}}_{\bm{a}} \, \sigma_{\mathbf{X} E}\,  V^{\bm{b}\dagger}_{\bm{a}} = \frac{1}{d^{2n}} \sum_{(\bm{a}, \bm{b}) \in \mathbb{Z}^{2n}_d} \ketbra{\bm{a},\bm{b}}\otimes \sigma_{\mathbf{X} E}^{\bm{a},\bm{b}},
		\end{align}
		which is a classical-quantum state. Therefore, we can use Lemma 18 from \cite{Dupuis2015} to obtain
		\begin{align}
		\text{H}_{\text{min}}(\mathbf{F}|\mathbf{Y}\, E)_{\sigma_{\mathbf{F}\mathbf{Y} E}}  \geq \frac{1}{2}\left(n\log d + \text{H}_2(\mathbf{X}| E)_{\sigma_{\mathbf{X} E}}\right).
		\end{align}
		Furthermore,  $\sigma_{\mathbf{X}E}$  is a general quantum state, and from Lemma 17 in \cite{Dupuis2015} we have 
		\begin{align}
		\text{H}_{\text{min}}(\mathbf{F}|\mathbf{Y}\, E)_{\sigma_{\mathbf{F}\mathbf{Y} E}}  \geq \frac{1}{2}\left(n\log d + \text{H}_{\text{min}}(\mathbf{X}| E)_{\sigma_{\mathbf{X} E}} \right).
		\end{align}
	\end{proof}
	To complete the proof of property 2. of Lemma \ref{lemma:wrole_dishonest_bob}, we  combine Lemma~\ref{XnBound} and Lemma~\ref{lem:GnBound},  and obtain:
	\begin{align}
	H_{\min}(\mathbf{F} | \mathbf{Y} E)_{\sigma_{\mathbf{F}\mathbf{Y} E}} \geq \frac{1}{2}(n\log d - h_d(\zeta)n\log d) = \frac{n\log d}{2}(1 - h_d(\zeta)),
	\end{align}
	for $err = 0$.
\end{proof}
\section{Lemma B.1}

\begin{lemma}
	Let $\rho_{XB} \in \mathcal{P}(\mathcal{H}_{X}\otimes \mathcal{H}_B)$ be a cq-state and let $f:\mathcal{X} \rightarrow \mathcal{X} $ be a fixed bijective function. Then,
\begin{align}
H_{\min}(X|E)_{\rho} \leq H_{\min}(f(X)|B)_{\rho}.
\end{align}
	\label{lemma:bijectivefunction}
\end{lemma}
\begin{proof}
	Consider the unitary operator,
$
U = \sum_x \ketbra{f(x)}{x}
$. Checking that $U$ is indeed unitary:
	\begin{align}
	U U^{\dagger} = \left(\sum_x \ketbra{f(x)}{x}\right)\left(\sum_{x'} \ketbra{x'}{f(x')}\right) 
	= \sum_x \ketbra{f(x)} = I,
	\end{align}
	where in the last step we used the fact that the function $f$ is a bijection. The same holds for $U^{\dagger}U = I$. Now, observe that
	\begin{align}
		H_{\min}(f(X)|B) &= -\log \max_{\{M_x\}_x} \sum_x p_x \tr\left[M_x \rho_{f(x)}^B\right]\nonumber\\
		&= -\log \max_{\{M_x\}_x} \sum_x p_x \tr\left[M_x U \rho_{x}^B U^{\dagger}\right]\nonumber\\
		&= -\log \max_{\{M_x\}_x} \sum_x p_x \tr\left[U^{\dagger} M_x U \rho_{x}^B \right]. 
	\end{align}
	
	Note that $\left\{ N_x \right\}_x= \left\{U^{\dagger} M_x U\right\}_x$ is also a POVM: they are all positive semidefinite operators and it sums up to unity. Therefore, we have that $\left\{U^{\dagger} M_x U\right\}_x$ can only decrease the space of possible POVMs, i.e
	\begin{align}
	\max_{\{M_x\}_x} \sum_x p_x \tr\left[U^{\dagger} M_x U \rho_{x}^B \right] \leq \max_{\{M_x\}_x} \sum_x p_x \tr\left[ M_x \rho_{x}^B \right].
	\end{align}
This means 
	\begin{align}
	H_{\min}(f(X)|B) \geq -\log \max_{\{M_x\}_x} \sum_x p_x \tr\left[ M_x \rho_{x}^B \right] = H_{\min}(X|B).
	\end{align}
\end{proof}

\section{Derivation of  \eqref{eq:general_relation}}\label{app:generalrelation}

The relation \eqref{eq:general_relation} can be easily deduced by considering the following property  from \cite{DEBZ10} (Equation $(2.56)$ in section 2.4.2):
\begin{align}
\alpha^i_k \alpha^i_l = \alpha^i_{k\oplus l} \omega^{i\odot k\odot l}.
\end{align}
We have
\begin{align}
	V^b_a\ket{e^i_r} &= \frac{1}{\sqrt{N}} \sum_{k,l = 0}^{N-1} \ket{k\oplus a}\omega^{(k\oplus a)\odot b}\omega^{\ominus r\odot l}\braket{k}{l} \alpha^{i*}_{\ominus l} \\
	&=\frac{1}{\sqrt{N}} \sum_{l=0}^{N-1}\ket{l}\omega^{l\odot b}\omega^{\ominus r\odot(l\ominus a)} \alpha^{i*}_{\ominus(l\ominus a)} \\
	&= \frac{1}{\sqrt{N}} \sum_{l=0}^{N-1}\ket{l}\omega^{l\odot b \ominus r\odot( l\ominus a)} \big(\omega^{\ominus(i\odot a\odot(\ominus l))} \alpha^i_a \alpha^i_{\ominus l} \big)^*\\
	&= \omega^{r\odot a}\alpha^{i*}_a \frac{1}{\sqrt{N}} \sum_{l=0}^{N-1}\ket{l}\omega^{l\odot b \ominus r\odot l} \omega^{\ominus(i\odot a\odot l)}  \alpha^{i*}_{\ominus l}\\
	&= \omega^{r\odot a}\alpha^{i*}_a \frac{1}{\sqrt{N}} \sum_{l=0}^{N-1}\ket{l}\omega^{\ominus(i\odot a\ominus b \oplus r)\odot l}  \alpha^{i*}_{\ominus l}\\
	&=\omega^{r\odot a} \alpha^{i*}_a \ket{e^i_{i\odot a\ominus b\oplus r}}.
\end{align}

\end{document}